\newtheorem{theorem}{Theorem} 
\newtheorem{lemma}[theorem]{Lemma}
\newtheorem{corollary}[theorem]{Corollary}
\newtheorem{obsr}[theorem]{Observation}
\newtheorem{defn}[theorem]{Definition}
\newtheorem{claim}[theorem]{Claim} 
\newcommand{\sm}{\setminus}
\newcommand{\eps}{\varepsilon}
\newcommand{\hh}{\mathcal{H}}
\newcommand{\ff}{\mathcal{F}}
\newcommand{\OPT}{\mathsf{OPT}}
\newcommand{\calF}{\mathcal{F}}
\newcommand{\VC}{\textsc{Vertex Cover}\xspace}
\newcommand{\FVS}{\textsc{Feedback Vertex Set}\xspace}
\newcommand{\HVD}{\textsc{$\hh$-Vertex Deletion}\xspace}
\newcommand{\HED}{\textsc{$\hh$-Edge Deletion}\xspace}
\newcommand{\PT}{\textsc{$k$-Path Transversal}\xspace}
\newcommand{\FVD}{\textsc{Planar-$\ff$ Vertex Deletion}\xspace}
\newcommand{\FED}{\textsc{Planar-$\ff$ Edge Deletion}\xspace}
\newcommand{\TVD}{\textsc{$k$-Treewidth Vertex Deletion}\xspace}
\newcommand{\PVD}{\textsc{$k$-Pathwidth Vertex Deletion}\xspace}
\newcommand{\TDVD}{\textsc{$k$-Treedepth Vertex Deletion}\xspace}
\newcommand{\TED}{\textsc{$t$-Treewidth Edge Deletion}\xspace}
\newcommand{\kVS}{\textsc{$k$-Vertex Separator}\xspace}
\newcommand{\kSVS}{\textsc{$k$-Subset Vertex Separator}\xspace}
\newcommand{\kSVSshort}{\textsc{$k$-SVS}\xspace}
\newcommand{\SVS}[1]{\textsc{$#1$-Subset Vertex Separator}\xspace}
\newcommand{\kES}{\textsc{$k$-Edge Separator}\xspace}
\newcommand{\kSES}{\textsc{$k$-Subset Edge Separator}\xspace}
\newcommand{\UML}{\textsc{Uniform Metric Labeling}\xspace}
\newcommand{\STCUT}{\textsc{Min }$s$-$t$\textsc{ Cut}\xspace}
\newcommand{\NPkSd}{\textsc{Noisy Planar $k$-SAT($\delta$)}\xspace}
\newcommand{\kSAT}{\textsc{$k$-SAT}\xspace}
\newcommand{\poly}{\mathrm{poly}}
\newcommand{\polylog}{\mathrm{polylog}}
\renewcommand{\epsilon}{\varepsilon}
\newcommand{\agnote}[1]{\todo[color=blue!25!white]{AG: #1}\xspace}
\newcommand{\elnote}[1]{\todo[color=green!25!white]{EL: #1}\xspace}
\newcommand{\jlnote}[1]{\todo[color=red!25!white]{JL: #1}\xspace}
\newcommand{\initOneLiners}{%
    \setlength{\itemsep}{0pt}
    \setlength{\parsep }{0pt}
    \setlength{\topsep }{0pt}
}
\newenvironment{OneLiners}[1][\ensuremath{\bullet}]
    {\begin{list}
        {#1}
        {\initOneLiners}}
    {\end{list}}
\begin{document}

\title{{\bf Losing Treewidth by Separating Subsets}}

\author{ 
  Anupam Gupta\thanks{Computer Science Department, Carnegie Mellon University,
Pittsburgh, USA. Supported in part by NSF awards CCF-1536002, CCF-1540541,
and CCF-1617790, and the Indo-US Joint Center for Algorithms Under Uncertainty.}  \\ CMU 
  \and Euiwoong Lee\thanks{\tt euiwoong@cims.nyu.edu} \\ NYU 
  \and Jason Li$^*$ 
  \\ CMU 
  \and Pasin Manurangsi\thanks{\tt pasin@berkeley.edu} \\ UC Berkeley 
  \and Micha{\l} W{\l}odarczyk\thanks{\tt m.wlodarczyk@mimuw.edu.pl} \\ University of Warsaw
}

\date{}

\maketitle

\thispagestyle{empty}
\begin{abstract}
We study the problem of deleting the smallest set $S$ of vertices (resp.\ edges) from a given graph $G$ such that the induced subgraph (resp.\ subgraph) $G \sm S$ belongs to some class $\hh$. We consider the case where graphs in $\hh$ have treewidth bounded by $t$, 
and give a general framework to obtain
approximation algorithms for both vertex and edge-deletion settings from
approximation algorithms for certain natural graph partitioning problems
called \kSVS and \kSES, respectively. 

For the vertex deletion setting, our framework combined with the current
best result for \kSVS, improves
approximation ratios for basic problems such as \TVD and
\FVD. 
Our algorithms are simpler than previous works 
and give the first deterministic and uniform approximation algorithms under the natural 
parameterization.

For the edge deletion setting, we give improved approximation algorithms
for \kSES combining ideas from LP relaxations and important separators. 
We present their applications in bounded-degree graphs, and 
also give an APX-hardness result for the edge deletion problems. 

\end{abstract}

\newpage

\setcounter{page}{1}

\section{Introduction}
\label{sec:intro}
Let $\hh$ be a class of infinitely many graphs.  In the \HVD (resp. \HED) problem, we
are given a graph $G$ and we must find the smallest set of vertices (resp. edges) $X$
such that $G \sm X$ belongs to class $\hh$.  The simplest examples of
such problems are {\sc Vertex Cover} and {\sc Feedback Vertex Set}
problems, where $\hh$ is the set of all empty graphs (resp.\ forests)
and hence $G \sm X$ must exclude all edges (resp.\ cycles) of
$G$. 
Indeed, the problem has often been studied in the
context when $\hh$ is a class of graphs that {\em exclude} a fixed graph
$F$ in some sense (e.g., minor, subgraph, or induced subgraph).  In this case we let $\ff$ be a finite list of
excluded graphs and define $\hh$ to be the class of graphs that exclude
every graph from $\ff$. 
There has been a rich body of work studying {\em parameterized complexity} and {\em kernelization}
of these problems parameterized by the size of the optimal solution~\cite{HVJKV11, FLMS12, CM15, DDvH16, KLPRRSS16, EGK16, BBKM16, GLSS16, BCKP16b, JP17, ALMSZ17b, GJLS17}. 

In this paper, we focus on {\em parameterized approximation algorithms} parameterized by $\ff$ --- our desired running time is of the form $f(\ff) \cdot \poly(n)$.
Note that the notion of approximation is inherent in this
parameterization: even the simplest case of \textsc{Vertex Cover}, where the only graph in $\ff$ is a single edge and hence $f(\ff)$ is a constant, is NP-hard. 
Such an approximation algorithm could be used to obtain better kernels~\cite{ALMSZ17, ALMSZ17b}.
In addition to the $2$-approximation algorithms for {\sc Feedback Vertex
  Set}~\cite{BG96, BBF99, CGHW98}, the systematic study of the parameterized approximability depending on $\ff$ has also been done in the context of both parameterized algorithms and approximation algorithms~\cite{FJP10, FLMS12, GL15, BCKP16, JP17, KS17, BRU17, ALMSZ17, Lee18, KK18}. 

Many of the above algorithmic successes are based on one of the following two techniques.~\footnote{One notable exception that did not use any of the two techniques is the work of Fomin et al.~\cite{FLMS12}, which uses the notion of {\em lossless protrusion reduction} that reduces the instance size while preserving an approximation ratio.}

\begin{itemize}
\item Specifically designed linear programming (LP) relaxations for the problem, often inspired by other classical optimization problems: 
These approaches were used in~\cite{FJP10, GL15, Lee18} to solve {\sc Diamond Hitting Set}, {\sc $k$-Star Transversal} and {\sc $k$-Path Transversal}, where the underlying classical problems were {\sc Feedback Vertex Set}, {\sc Dominating Set}, and {\sc Balanced Separator} respectively. While these tools often give principled ways to find optimal approximation ratios, the previous connections were tailored to specific settings.

\item Combinatorial algorithms that exploit graph-theoretic structures of $\ff$-free graphs, using the general algorithm to find a balanced separator~\cite{FHL08} as a subroutine: 
This route was taken by~\cite{JP17, KS17, BRU17, ALMSZ17} for \FVD, {\sc Minimum Planarization}, {\sc Chordal Vertex Deletion}, and {\sc Distance Hereditary Vertex Deletion}.
While seamlessly bridging between graph-theoretic properties and
well-studied graph partitioning algorithms, this technique has the shortcoming that the best approximation factor for {\sc Balanced Separator} is $\Omega(\sqrt{\log n})$, forcing approximation ratios to depend on $n$.~\footnote{\cite{BRU17} did not use~\cite{FHL08} as a black-box and analyzed a linear programming relaxation directly, but only has an  approximation ratio of $\Omega(\log n)$. \cite{BRU17} and \cite{ALMSZ17} have an additional advantage of handling the weighted version, whereas our results cannot.}

\end{itemize}

The natural question is: can we apply both kinds of techniques to give stronger results? 
We attempt to give a positive answer to this question, by defining a new kind of graph partitioning problems that (1) can be approximated well using some variants of the LP-based graph partitioning techniques and (2) exploit fundamental graph-theoretic concepts (e.g., treewidth) more closely than the traditional {\sc Balanced Separator} problem. 
We demonstrate the power of these approaches for minor or treewidth
deletion problems. We get simpler algorithms with better approximation ratios,
and we hope that more of such intimate connections can be made between structural graph theory and graph partitioning algorithms through intermediate problems.

\subsection{Our Results}
A class $\hh$ of graphs is called {\em hereditary} if $G \in \hh$, all its induced subgraphs are in $\hh$. 
Our main conceptual contribution is easy to state:
\begin{quote}
  \emph{If $\hh$ is hereditary, and if graphs in $\hh$ have bounded treewidth,
  good approximations for \HVD are implied by good
  approximations for a natural graph partitioning problem called \kSVS.}
\end{quote}

(A similar result holds for \HED and \kSES.) What are these partitioning
problems?  Given graph $G = (V(G), E(G))$, a subset
$R \subseteq V(G)$ of terminals, and an integer $k$, the \kSVS
(resp.\ \kSES) problem
asks to delete the minimum number of vertices (resp.\ edges) to
partition $G$ so that each component has at most $k$ vertices from
$R$. When $R = V(G)$, these problems are called 
\kVS (resp.\ \kES), and they generalize balanced separator
problems studied in the context of cuts and metrics~\cite{ENRS00}. The
case of general $R$ has close connections to problems such as
\textsc{Multiway Cut} and \UML.

Let us now elaborate on how we design approximation algorithms for these
separator problems, and how we develop the framework to use them to get new
results for \HVD and \HED.

\subsubsection{Vertex Deletion Problems}
Our main result in this setting is a meta-theorem connecting \HVD
problems with the \kSVS (\kSVSshort) problem. 
In this paper, unless specified by a subscript, all constants hidden in $O(\cdot)$ are absolute constants. 
Also throughout the paper, $\OPT$ denotes the cardinality of the optimal solution of an optimization problem. 
An algorithm for \kSVSshort\ is called an {\em $(\alpha, \beta)$-bicriteria approximation algorithm}
if it returns a set $S \subseteq V(G)$ such that $|S| \leq \alpha \cdot \OPT$
and each component of $G \setminus S$ has at most $\beta k$ vertices
from $R$. This is a weaker requirement than a ``true'' approximation.
Also, when we refer to an exact algorithm for any optimization problem,
we mean an algorithm finding a solution of size $\OPT$.

\begin{restatable}{theorem}{mainvertex}
\label{thm:main_vertex}
Let $\hh$ be a hereditary class of graphs with treewidth bounded by $t - 1$.
Suppose that
\begin{OneLiners}
\item[(a)] \HVD admits an exact algorithm that runs in time $f(n, \OPT)$, and 
\item[(b)] \kSVS admits a $(\alpha(k), O(1))$-bicriteria approximation algorithm that runs in time $g(n, k)$  with $\alpha(k) = O(\log k)$. 
\end{OneLiners}
Then there is a $2 \alpha(t)$-approximation for \HVD with running time 
$\big( f(n, O(t \alpha(t))) + g(n, O(t \alpha (t))) \big) \cdot \log n$.
\end{restatable}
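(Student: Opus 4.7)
My plan is to design a recursive divide-and-conquer algorithm for \HVD that invokes the bicriteria \kSVSshort algorithm at each recursive step and uses the exact algorithm $f$ on small base-case subproblems. I would first prove a balanced-separator lemma tailored to hereditary bounded-treewidth classes: for any graph $G'$ and any $R \subseteq V(G')$, there is a set $Y \subseteq V(G')$ with $|Y| \leq \OPT_{\HVD}(G') + t$ such that every connected component of $G' \setminus Y$ contains at most $|R|/2$ vertices of $R$. The proof takes an optimal \HVD solution $S^*$ for $G'$; since $G' \setminus S^* \in \hh$ has treewidth at most $t - 1$, it admits a tree decomposition $\mathcal{T}$ whose bags have size at most $t$, and a centroid bag $B$ of $\mathcal{T}$ weighted by intersections with $R$ is a balanced separator of $G' \setminus S^*$ of size $\leq t$; set $Y := S^* \cup B$. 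In particular this yields $\OPT_{\kSVS}(G', R, \lceil|R|/2\rceil) \leq \OPT_{\HVD}(G') + t$ for every $R$.

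The algorithm I would set up is the recursive procedure \textsc{Solve}$(G', R')$, initially invoked with $(G, V(G))$. If $|R'| \leq c \cdot t\alpha(t)$ for a suitable constant $c$, solve the local subproblem exactly via $f$---whose cost is $f(n, O(t\alpha(t)))$ since restrictions of any optimal \HVD solution remain feasible and the local $\OPT$ is bounded in terms of $|R'|$---and return. Otherwise, run the bicriteria \kSVSshort algorithm on $(G', R', k)$ with $k = \Theta(t\alpha(t))$ (chosen so that $\alpha(k) = O(\alpha(t))$), which by the lemma returns a separator $S$ of size at most $\alpha(k)\cdot(\OPT_{\HVD}(G') + t)$ and leaves each component with $O(t\alpha(t))$ vertices of $R'$. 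Add $S$ to the output and recurse on every component with its restricted $R'$. The recursion has depth $O(\log n)$ because each call shrinks $|R'|$ by a constant factor (or directly down to the base case if $|R'|$ is already modest), matching the $\log n$ overhead in the running time.

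The analysis argues that, across any fixed level of the recursion, the subproblem \HVD optima are vertex-disjoint and sum to at most $\OPT$; hence the exact-solver contribution at the leaves totals at most $\OPT$, and---together with the \kSVSshort contributions which I would aggregate to at most $\alpha(t)\cdot\OPT$---this gives the final $2\alpha(t)\OPT$ bound (using $\alpha(t) \geq 1$). The main obstacle I foresee is controlling the additive $+t$ overhead introduced at every \kSVSshort call: a level-by-level bound would multiply this $+t$ by $\log n$, inflating the ratio to $\alpha(t)\log n$. Avoiding that requires an amortization that charges each internal node's $+t$ against the exact-solver payments at the leaves it spawns, or equivalently bounds the total number of internal recursion nodes by a constant times the number of leaves, so that the $+t$ overheads aggregate to only $O(\alpha(t)\OPT)$. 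This amortization---coupled with the precise choice $k = \Theta(t\alpha(t))$ and the invariant it enforces on $|R'|$ across recursion levels---is the delicate step of the argument.
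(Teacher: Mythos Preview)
Your proposal has a genuine gap: the separator lemma you prove does not justify the \kSVSshort call you make. Taking an optimal $S^*$ together with a single centroid bag only shows that the \kSVSshort optimum on $(G',R)$ with parameter $\lceil |R|/2\rceil$ is at most $\OPT+t$. But you then invoke the bicriteria algorithm with $k=\Theta(t\alpha(t))$, which can be far smaller than $|R'|/2$, and assert that the returned separator has size at most $\alpha(k)\,(\OPT+t)$. That bound is false in general: on a path with $n$ vertices (treewidth $1$, so $t=2$, $\OPT=0$, $R=V$), any feasible solution to \kSVSshort must delete $\Omega(n/k)$ vertices, not $O(1)$. So already the top-level separator-size estimate fails, and neither your approximation accounting nor your $O(\log n)$-depth description is consistent with the $k$ you chose (with that $k$ and an $(\alpha,O(1))$-bicriteria guarantee, a single call would send every component straight to the base case).

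The paper repairs this with a stronger structural lemma and a different algorithmic shape. Iterating the bag-removal argument over a width-$(t{-}1)$ decomposition of $G\setminus S^*$ gives, for every $\delta$, a set $X$ with $|X|\le (t/\delta)|R|$ whose removal leaves at most $\delta$ terminals per component; hence the \kSVSshort optimum with parameter $\delta$ is at most $\OPT+(t/\delta)|R|$, a bound that necessarily scales with $|R|$. The algorithm is then an \emph{iterative refinement}, not a recursion on components: $R$ is always a currently maintained feasible \HVD solution. One round runs the bicriteria \kSVSshort with $\delta=t/\eps$ to obtain $Y$ of size at most $\alpha(\delta)(\OPT+\eps|R|)$; because $R$ was feasible, each component of $G\setminus Y$ has a solution of size $O(t/\eps)$ and can be solved exactly, and the union of $Y$ with these local optima is the new $R$, of size at most $(\alpha{+}1)\OPT+\alpha\eps|R|$. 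Choosing $\eps=\Theta(1/\alpha(t))$ shrinks $|R|$ by a constant factor per round until $|R|\le 2\alpha(t)\,\OPT$, after $O(\log n)$ rounds. Your scheme lacks both the $|R|$-scaled separator bound and the ``$R$ is a feasible solution'' invariant; without the latter, even the correct lemma would give a first separator of size roughly $\alpha(t)\,\OPT+n$, so the $+t$ overhead you flagged is not the real obstacle.
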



E.g., an $(O(1), O(1))$-bicriteria approximation for \kSVSshort 
would give an $O(1)$-approximation for \HVD running in $(g(n, O(t)) +
f(n, O(t))) \log n$ time. The best approximation for \kSVSshort
currently gives an $(O(\log k), 2)$-bicriteria approximation and
runs in time $n^{O(1)}$~\cite{Lee18}, so Theorem~\ref{thm:main_vertex} 
implies the following corollaries.  All algorithms in this paper are deterministic.~\footnote{The conference version of~\cite{Lee18} presents a randomized algorithm, but the journal version derandomized it.}

We first study {\em minor deletion}, where we want to exclude every graph in $\ff$ as a minor. 
The celebrated result by Fomin et al.~\cite{FLMS12} studied the case when $\ff$ has at least one planar graph (called \FVD), and gave a randomized $c_{\ff}$-approximation algorithm that runs in time $f(\ff) \cdot O(nm)$, where $c_{\ff}$ is a constant depending on $\ff$. While requiring the excluded family $\ff$ contain a planar graph seems restrictive, this case still captures fundamental optimization problems such as \textsc{Vertex Cover} and \textsc{Feedback Vertex Set}. 
The following corollary will be proved in Section~\ref{sec:vertex}. 

\begin{restatable}{corollary}{fvd}
\label{cor:fvd}
If $\hh$ is minor-closed and graphs in $\hh$ have treewidth at most $t$,
\HVD admits an $O(\log t)$-approximation algorithm that runs in time $O_\hh(n\log n) + n^{O(1)}$. 
In particular, \FVD admits an $O(\log f)$-approximation with running time $O_\ff(n\log n) +  n^{O(1)}$, where $f$ denotes the number of vertices of any planar graph in $\ff$.
\end{restatable}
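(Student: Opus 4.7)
The plan is to derive Corollary~\ref{cor:fvd} as a direct application of Theorem~\ref{thm:main_vertex}, so the task reduces to verifying hypotheses (a) and (b) for a minor-closed class $\hh$ of treewidth at most $t$. For hypothesis (b), I would invoke the $(O(\log k), 2)$-bicriteria approximation for \kSVSshort\ due to Lee~\cite{Lee18}, which runs in polynomial time; this yields $\alpha(k) = O(\log k)$ and $g(n, k) = n^{O(1)}$ as required.

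For hypothesis (a), I would argue that \HVD is actually an instance of \FVD and then invoke a known exact FPT algorithm that is linear in $n$. By the Robertson--Seymour theorem, the minor-closed class $\hh$ is characterized by a finite family $\ff$ of forbidden minors. Since every graph in $\hh$ has treewidth at most $t$, the excluded grid theorem forces $\ff$ to contain a planar graph: for any $r > t$, the $r \times r$ grid $G_r$ has treewidth $r > t$ and thus lies outside $\hh$, so some $H \in \ff$ appears as a minor of $G_r$; because $G_r$ is planar and minors of planar graphs are planar, $H$ is planar. Therefore \HVD coincides with a \FVD instance, and standard protrusion/iterative-compression machinery (of the sort cited in the introduction) gives an exact algorithm with $f(n, \OPT) = 2^{\poly_{\hh}(\OPT)} \cdot n$.

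Plugging into Theorem~\ref{thm:main_vertex} with parameter $O(t \log t)$ yields approximation ratio $2\alpha(t) = O(\log t)$ and total running time
\[
\bigl(f(n, O(t \log t)) + g(n, O(t \log t))\bigr) \log n = O_{\hh}(n \log n) + n^{O(1)},
\]
proving the first claim. For the \FVD specialization with a planar $H \in \ff$ on $f$ vertices, the polynomial-bound version of the excluded grid theorem gives that $H$-minor-free graphs (and hence graphs in $\hh$) have treewidth $t \leq \poly(f)$, so $\log t = O(\log f)$ and the ratio becomes $O(\log f)$. The main obstacle in fleshing out this sketch is confirming the linear-in-$n$ dependence of the exact FPT algorithm in the regime $\OPT = O(t \log t)$ so that the outer $\log n$ factor from Theorem~\ref{thm:main_vertex} preserves the $O_{\hh}(n \log n)$ bound; the exponential-in-parameter cost collapses into $O_{\hh}(\cdot)$ since $t$ depends only on $\hh$.
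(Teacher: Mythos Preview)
Your proposal is correct and follows essentially the same approach as the paper: verify the hypotheses of Theorem~\ref{thm:main_vertex} via Lee's $(O(\log k),2)$-bicriteria \kSVSshort\ algorithm, the Robertson--Seymour forbidden-minor characterization, the (polynomial) grid-minor theorem to force a planar obstruction and bound $t$ by $\poly(f)$, and a linear-time exact FPT algorithm for the resulting \FVD instance. The only cosmetic differences are that the paper routes through the intermediate Corollary~\ref{cor:main_vertex} and pins down the linear-time exact algorithm by citing~\cite{bodlaender97}, which resolves the obstacle you flag about the $n$-dependence.
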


The latter result improves the approximation ratio for \FVD from $c_{\ff}$ to $O(\log k)$. It is also deterministic, positively answering an open question in Kim et al.~\cite{KLPRRSS16}.

Theorem~\ref{thm:main_vertex} also implies {\em uniform} algorithms with better running time and approximation ratio for 
many natural parameterized problems.  
A parameterized algorithm with parameter $k$ is called uniform if there is a single algorithm that takes an instance $I$ and a value of $k$ as input, and runs in time $f(k)\cdot |I|^{O(1)}$. 
Non-uniform algorithms for a parameterized problem indicate that there are different algorithms for each $k$, whose existence relies on non-constructive arguments.

If we have a sequence of families $\hh_k$ 
where every graph in $H_k$ has treewidth at most  $k$, 
and we know a uniform exact fixed-parameter tractable (FPT) algorithm for $\hh_k$-\textsc{Vertex Deletion}
parameterized by both $k$ and $\OPT$, we get a uniform
approximation algorithm for $\hh_k$-\textsc{Vertex Deletion}.
Some examples include \TVD,  \PVD, and \TDVD, where
$\hh_k$ is the set of all graphs with treewidth, pathwidth, and treedepth at most $k$ respectively. 
Another example is \PT where $\hh_k$ contains all graphs with
no simple path of length $k$ as a subgraph (or equivalently, as a
minor). For these cases we get the following results that will be proved in Section~\ref{sec:applications_vertex}.
\begin{restatable}{corollary}{tvd}
\label{cor:tvd}
The following problems admit $O(\log k)$-approximations:
\begin{OneLiners}
\item[(a)] \TVD in time $2^{O(k^3\log^2 k)}\cdot n\log n + n^{O(1)}$,
\item[(b)]  \PVD in time $2^{O(k^2\log k)}\cdot n\log n + n^{O(1)}$,
\item[(c)]  \TDVD in time $2^{O(k^2\log k)}\cdot n\log n + n^{O(1)}$,
\item[(d)]  \PT in time $2^{O(k\log^2 k)}\cdot n\log n + n^{O(1)}$.
\end{OneLiners}
\end{restatable}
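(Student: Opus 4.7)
The plan is to apply Theorem~\ref{thm:main_vertex} once to each of the four problems. In every case the hereditary class $\hh_k$ is immediate from the problem definition, and all four parameters (treewidth, pathwidth, treedepth, length of a longest simple path) are monotone under taking induced subgraphs, so hereditariness is free. The first step is to identify the treewidth bound $t$ to plug into the theorem: for \TVD I take $t = k+1$; for \PVD the same since pathwidth bounds treewidth; for \TDVD the fact that treedepth-$k$ graphs have treewidth at most $k-1$ gives $t = k$; and for \PT the classical bound that excluding a $P_k$-subgraph forces treewidth at most $k-1$ gives $t = k$.

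Next I invoke the $(O(\log k), 2)$-bicriteria approximation for \kSVS of Lee~\cite{Lee18}, which runs in $g(n,k) = n^{O(1)}$ time and provides $\alpha(k) = O(\log k)$ as required by hypothesis (b) of Theorem~\ref{thm:main_vertex}. This immediately delivers a $2\alpha(t) = O(\log k)$ approximation factor uniformly across (a)--(d). The theorem will call the exact FPT subroutine with parameter $\OPT = O(t\,\alpha(t)) = O(k \log k)$, so the remaining task is to plug in the known uniform exact FPT algorithm for each of the four problems and substitute this value of $\OPT$ into its running time.

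Specifically, for (a) I use the uniform FPT algorithm for \TVD whose running time has the shape $2^{O(k \cdot \OPT^2 \cdot \polylog)} \cdot n^{O(1)}$; setting $\OPT = O(k \log k)$ gives a per-call cost of $2^{O(k^3 \log^2 k)} \cdot n^{O(1)}$. For (b)--(d) I invoke the analogous uniform FPT algorithms for pathwidth deletion, treedepth deletion, and \PT from the literature (each has dependence roughly of the form $2^{O(\OPT \cdot \polylog k)} \cdot n^{O(1)}$ for (b) and (c), and $2^{O(\OPT \log k)} \cdot n^{O(1)}$ for (d)); substituting $\OPT = O(k \log k)$ yields the exponents $2^{O(k^2 \log k)}$, $2^{O(k^2 \log k)}$, and $2^{O(k \log^2 k)}$ respectively. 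Multiplying by the $\log n$ factor contributed by Theorem~\ref{thm:main_vertex} and absorbing one polynomial factor into the additive $n^{O(1)}$ term produces the bounds stated in the corollary.

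The main obstacle I anticipate is bookkeeping rather than a genuine mathematical difficulty: one must pick, for each problem, a uniform exact FPT algorithm whose dependence on $\OPT$, after the substitution $\OPT = O(k \log k)$, matches the claimed running time. Uniformity (as opposed to the non-uniform algorithms obtained via well-quasi-ordering arguments) is essential here, but all four problems admit explicit uniform FPT algorithms in the literature, so the proof reduces to selecting these algorithms and performing the substitution above.
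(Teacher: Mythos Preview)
Your high-level approach is exactly the paper's: apply Theorem~\ref{thm:main_vertex} together with the $(O(\log k),2)$-bicriteria algorithm for \kSVS from~\cite{Lee18} (this combination is packaged as Corollary~\ref{cor:main_vertex}), and then plug in a uniform exact FPT algorithm for each problem with $\OPT = O(t\alpha(t)) = O(k\log k)$.

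There is one point you are glossing over that the paper actually has to work for. You assert that uniform exact FPT algorithms of the right shape for (a)--(c) are available ``in the literature'', but they are not quite off-the-shelf: the existing algorithms of Bodlaender--Kloks and Reidl et al.\ compute a width-$k$ decomposition given a width-$t$ decomposition, whereas here one needs to \emph{delete $p$ vertices} so that the remainder has width $\le k$. The paper establishes this as Lemma~\ref{lem:treewidth-exact}, with running times $2^{O((k+p)^2k)}n$, $2^{O((k+p)(k+\log(k+p)))}n$, and $2^{O((k+p)k)}n$ for treewidth, pathwidth, and treedepth deletion respectively, by first observing $tw(G)\le k+p$ and then adapting those dynamic programs to allow vertex deletion (details in the appendix). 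Note also that your stated shape ``$2^{O(\OPT\cdot\polylog k)}$'' for (b) and (c) cannot be right: substituting $\OPT=O(k\log k)$ would yield $2^{O(k\,\polylog k)}$, not the claimed $2^{O(k^2\log k)}$. The correct exponents are linear in $k$ (not $\polylog k$) times $(k+\OPT)$, which after substitution gives the stated bounds. For (d) the paper uses the branching algorithm $f_k(n,p)=O(k^p n)$ from~\cite{Lee18}, matching your $2^{O(\OPT\log k)}$.
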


The first result of Corollary~\ref{cor:tvd} improves the best current approximation for \TVD
from~\cite{FLMS12}, which did not explicitly state the dependency of the approximation ratio and
running time on the treewidth $k$. 
Our algorithms also give the first uniform $O_k(1)$-approximation algorithms in this parameterization, 
while the protrusion-replacement technique of~\cite{FLMS12} makes their
algorithms non-uniform.
For \TDVD a $2^k$-approximation algorithm has been known~\cite{treedepth-apx}. The last result
of Corollary~\ref{cor:tvd} tightens the runtime of the $O(\log k)$-approximation algorithm for \PT
that runs in time $O(2^{k^3 \log k}n^{O(1)})$~\cite{Lee18}.

 


Another application of Theorem~\ref{thm:main_vertex} arises in bidimensionality theory.  
For any family $\hh$ with treewidth bounded by $k$, an 
efficient polynomial-time approximation scheme (EPTAS) for \HVD on excluded-minor families was first given by Fomin et al.~\cite{FLRS11}.
For \HVD on $M$-minor free graphs $G$, the algorithm runs in time $n^{f(\hh, M)}$~\cite{FLRS11}, or becomes non-uniform and runs in time $2^{f(\hh, M)} \cdot n^{O(1)}$~\cite{FLMS12}. 
In Section~\ref{sec:bidimensionality} we prove the following meta-theorem that gives faster algorithms for many relevant problems, which are uniform as long as the promised exact algorithm is.

\begin{restatable}{theorem}{eptas}
\label{thm:eptas}
Let $\hh$ be a hereditary class of graphs with treewidth bounded by
$t - 1$, and $M$ be a fixed graph.  Suppose that \HVD admits an exact
algorithm that runs in time $f(n, \OPT)$.  Then \HVD admits an $(1+\eps)$-approximation algorithm on
$M$-minor-free graphs with running time
$f(n, O_M(\nicefrac{(t\log^3 t)}{\epsilon^2}))\cdot n\log n + n^{O(1)}$.
\end{restatable}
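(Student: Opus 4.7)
The plan is to combine the $O(\log t)$-approximation of Corollary~\ref{cor:fvd} with a Baker-style layered decomposition specialized to $M$-minor-free graphs, exploiting the fact that $\hh$ is hereditary.

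I would first invoke Corollary~\ref{cor:fvd} to compute a set $S_0 \subseteq V(G)$ with $G \setminus S_0 \in \hh$ (so $G \setminus S_0$ has treewidth at most $t-1$) and $|S_0| \leq c\log(t) \cdot \OPT$ in time $O_\hh(n\log n)+n^{O(1)}$. If $|S_0|$ is already at most the threshold $T := c' \cdot t\log^3 t / \eps^2$ for a suitable absolute constant $c'$, then $\OPT = O_M(t\log^3 t/\eps^2)$ and we simply call the supplied exact algorithm on $G$ with this bound, finishing in time $f(n, O_M(t\log^3 t/\eps^2))$.

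In the remaining large-$\OPT$ regime, contract $S_0$ to a single source $s$ (which preserves $M$-minor-freeness, as minor-closed classes are contraction-closed) and run BFS from $s$ in $G$ to obtain layers $L_0 = S_0, L_1, L_2, \ldots$. Choose a shift parameter $q = \Theta(\log^2 t/\eps)$; for each $i \in \{0, 1, \ldots, q-1\}$, define $Z_i := \bigcup_{j \geq 0} L_{jq + i + 1}$ and the ``slabs'' as the connected components of $G \setminus Z_i$. By the local-treewidth theorem for $M$-minor-free graphs (Eppstein; Demaine--Hajiaghayi--Mohar), any $q$ consecutive BFS layers induce a subgraph of treewidth $O_M(q)$. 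The key structural observation is that every slab not containing $S_0$ is a subset of $V(G) \setminus S_0$ and hence, by heredity, already belongs to $\hh$; so only the slab $B_0$ containing $S_0$ needs a nontrivial call to the exact algorithm, and deleting $S_0$ itself gives $\OPT(B_0) \leq |S_0|$. The candidate solution is $Z_{i^\star} \cup D^\star$, where $D^\star$ is the exact-algorithm output on $B_0$.

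The main obstacle is bounding both $|Z_{i^\star}| \leq \eps \OPT$ and $\OPT(B_0) \leq O_M(t\log^3 t / \eps^2)$ simultaneously: naive averaging over shifts only yields $|Z_{i^\star}| \leq |V(G) \setminus S_0|/q$, and the initial slab $B_0$ can itself still be large. I would resolve this by applying the scheme recursively inside $B_0$: the induced tree decomposition of $G \setminus S_0$ lets us split $B_0$ further into pieces, each with a strictly smaller effective ``ball radius'' from $S_0$, so that after $O(\log n)$ levels the residual subproblem falls below the threshold $T$ and is dispatched by a single exact call. The separator costs accumulated across levels should telescope to at most $\eps \cdot \OPT$ under an appropriate per-level choice of shift, and the total number of exact calls --- each with parameter at most $O_M(t\log^3 t/\eps^2)$ --- is $O(n\log n)$, matching the claimed running time.
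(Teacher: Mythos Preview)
Your Baker/BFS-layer approach has a genuine gap that the proposed recursion does not close. After contracting $S_0$ and choosing a shift $i^\star$, the separator $Z_{i^\star}$ you delete is a union of full BFS layers of $G$; averaging over shifts only gives $|Z_{i^\star}| \le |V(G)\setminus S_0|/q$, and nothing in the argument ties $|V(G)\setminus S_0|$ to $\OPT$. For a general hereditary $\hh$ of bounded treewidth, $\OPT$ can be $n^{o(1)}$ while $n$ is arbitrary, so $|Z_{i^\star}|$ can exceed $\eps\,\OPT$ by any polynomial factor. Recursing ``inside $B_0$'' does not help: every level of your scheme still pays for deleting entire BFS layers of the current graph, and the accumulated separator cost remains proportional to (a fraction of) $n$, not to $\OPT$; there is no mechanism by which it could ``telescope to $\eps\cdot\OPT$''. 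Separately, the exact call you make on $B_0$ only has the parameter bound $\OPT(B_0)\le |S_0|=O(\log t)\cdot\OPT$, which is unbounded in the large-$\OPT$ regime and hence does not fit the claimed running time $f(n, O_M(t\log^3 t/\eps^2))$.

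The paper replaces the raw BFS layering with the bidimensionality-style compression lemma of Fomin et al.\ (Lemma~\ref{lem:minor_treewidth}): starting from the $O(\log t)$-approximate solution $X$ (via Corollary~\ref{cor:main_vertex}, which applies to arbitrary hereditary $\hh$, not just minor-closed ones), one finds $X'\subseteq V(G)$ with $|X'|\le (\eps/2)\,\OPT$ and $tw(G\setminus X')=O_M(t\log^2 t/\eps)$ --- crucially, $|X'|$ is proportional to $|X|$, not to $n$. One then computes a tree decomposition of $G\setminus X'$ of that width and applies the constructive Lemma~\ref{lem:cut} with $R=X$ and $\delta=O_M(t\log^3 t/\eps^2)$ to obtain a further set $Y$ with $|Y|\le(\eps/2)\,\OPT$ such that every component of $G\setminus(X'\cup Y)$ contains at most $\delta$ vertices of $X$. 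Since $X$ is itself a feasible solution and $\hh$ is hereditary, each such component has optimum at most $\delta$, and a single exact call per component finishes. The essential point you are missing is that both $X'$ and $Y$ are charged against $|X|=O(\log t)\cdot\OPT$, never against $n$.
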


\subsubsection{Edge Deletion Problems}

Our main result for the edge deletion problems is the following. 
Unlike the vertex version, the approximation ratio becomes an absolute constant, 
but the algorithm uses the maximum degree of $G$, $\deg(G)$, as an additional parameter.

\begin{restatable}{theorem}{mainedge}
\label{thm:main_edge}
Let $\hh$ be a class of graphs closed under taking a subgraph. 
Suppose graphs in $\hh$ have treewidth bounded by $t - 1$, and \HED admits an exact algorithm with running time
$f(n, \OPT)$.
Then there is an $(3 + \eps)$-approximation for \HED with running time \mbox{$\big(\min\{ 2^{O(t^2 \deg(G)^3 /\eps^3)}n^{O(1)}, n^{O(t \deg(G) / \eps )} \} + f(n, t \deg(G) / \eps) \big) \log(n / \eps)$}.
\end{restatable}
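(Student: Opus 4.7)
The plan is to adapt the template of Theorem~\ref{thm:main_vertex} to the edge-deletion setting, replacing \kSVS with \kSES. The maximum-degree parameter $\deg(G)$ enters because a graph of treewidth at most $t - 1$ supplies balanced \emph{vertex} separators, and converting a vertex separator of size $s$ into an edge cut costs at most $s \cdot \deg(G)$ edges.

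Wrapping the procedure in an exponential search over a guess $p \in [\OPT, 2\OPT]$ contributes the outer $\log(n/\eps)$ factor. Each round invokes a \kSES algorithm on $(G, R = V(G), k)$ with $k = \Theta(t\deg(G)/\eps)$, and then runs the given exact \HED algorithm on each resulting piece. Two \kSES implementations are available, and we keep whichever is faster: an important-separator-based branching/LP-rounding procedure in time $2^{O(t^2 \deg(G)^3/\eps^3)} n^{O(1)}$ and a direct enumeration of candidate edge cuts in time $n^{O(t\deg(G)/\eps)}$, explaining the $\min\{\cdot\}$ term. Under the standard polynomial-in-$n$ dependence of $f$, the per-piece calls to the exact algorithm collapse into a single charge of the form $f(n, t\deg(G)/\eps)$, yielding the stated time.

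For the approximation analysis, fix an optimal edge deletion set $X^{*} \subseteq E(G)$. Since $\hh$ is subgraph-closed, $G \setminus X^{*}$ has treewidth at most $t-1$, so iterating balanced tree-decomposition separators gives a vertex set $S^{*} \subseteq V(G \setminus X^{*})$ of size $O(tn/k)$ whose removal leaves components of at most $k$ vertices. Promoting $S^{*}$ to its incident edges, $X^{*} \cup E(S^{*})$ is a feasible \kSES cut of size at most $\OPT + |S^{*}|\deg(G)$. Coordinating $k$ with the guessed $p$ --- splitting into the small regime $p \leq t\deg(G)/\eps$, where the exact algorithm is invoked on $G$ directly, and the large regime, where decomposition is used --- keeps this overhead bounded by $O(\eps) \cdot \OPT$, so $\OPT_{\kSES} \leq (1+O(\eps)) \OPT$. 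Combined with the $O(1)$-approximation factor from \kSES, the returned cut has size $(2 + O(\eps)) \OPT$. Because $\hh$ is subgraph-closed, $X^{*} \cap E(C_i)$ is a feasible \HED solution on each piece $C_i$, so $\sum_i \OPT(C_i) \leq \OPT$. Summing both contributions yields $(3 + O(\eps)) \OPT$, which becomes $(3+\eps)$ after rescaling $\eps$.

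The main technical obstacle is the design of the two \kSES algorithms realizing the tight bicriteria guarantees and running times in the bound; matching the FPT running time requires combining important-separator enumeration with LP rounding in the edge setting, the counterpart of what was done for \kSVS by Lee. A secondary subtlety is ensuring that the vertex-to-edge conversion (which introduces $\deg(G)$) together with the $O(tn/k)$ separator bound and the choice of $k$ do not blow up $\OPT_{\kSES}$ beyond $(1 + O(\eps)) \OPT$; this is precisely what forces the case split on $p$ and the particular choice of $k$.
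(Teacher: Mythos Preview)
Your approach has a genuine gap in the approximation analysis. You run \kSES with $R = V(G)$, and your feasibility witness is $X^{*} \cup E(S^{*})$ where $|S^{*}| = O(tn/k)$ comes from Lemma~\ref{lem:cut} applied with the full vertex set. With your choice $k = \Theta(t\deg(G)/\eps)$ this gives
\[
|S^{*}|\cdot\deg(G) \;=\; O\!\left(\frac{tn}{k}\right)\cdot\deg(G) \;=\; O(\eps n),
\]
so the bound you actually get is $\OPT_{\kSES} \le \OPT + O(\eps n)$, not $(1+O(\eps))\OPT$. Nothing forces $\OPT = \Omega(n)$; the instance could have $\OPT = O(1)$. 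Your case split on $p$ does not rescue this: the small regime $p \le t\deg(G)/\eps$ is handled by the exact algorithm, but the large regime only guarantees $p > t\deg(G)/\eps$, which is still far short of $p = \Omega(n)$. If instead you let $k$ grow with $n/p$ to absorb the overhead, both \kSES running times and the $f(n,k)$ call blow up and you lose the stated time bound.

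The paper avoids this by \emph{not} taking $R = V(G)$. It keeps the iterative-refinement structure of Theorem~\ref{thm:main_vertex}: maintain a feasible edge solution $R_E$, subdivide each edge of $R_E$ to obtain terminals $R'_V$ in a graph $G'$, and run \kSES on $(G', R'_V)$. Lemma~\ref{lem:cut} is then applied with the subset $R'_V$, so the separator cost scales with $|R_E|$ rather than $n$; converting the vertex separator to edges contributes $\deg(G)\,|R_E|/\beta$. Choosing $\beta = \deg(G)/\eps$ (hence $k = t\beta = t\deg(G)/\eps$) makes the new solution size at most $(3+\eps)\OPT + O(\eps)|R_E|$, so each round shrinks $|R_E|$ by a constant factor until it is within $(3+O(\eps))\OPT$. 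The $\log(n/\eps)$ factor is the number of refinement rounds, not an exponential search over $p$. In short, the missing idea is exactly the one you imported for the vertex theorem but then dropped: use the current solution as the terminal set so that the separator overhead is proportional to the current solution size, not to $n$.
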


The above theorem is based on our improved results on \kSES. 
The previous best approximation algorithm for \kSES was an $O(\log k)$-approximation that runs in time $n^{O(1)}$~\cite{Lee18}. 
While the existence of an $O(1)$-approximation algorithm that runs in time $n^{O(1)}$ would refute the Small Set Expansion Hypothesis~\cite{RST12}, we show that one can get significantly better approximations factor using $k$ as a parameter. 
%
\begin{restatable}{theorem}{kes}
\label{thm:kes}
The following parameterized 
algorithms
for \kSES exist:
\begin{OneLiners}
\item[(a)] a $(2 + \eps)$-approximation that runs in time $2^{O(k \log (k/\eps))} n^{O(1)}$ for the case $R = V(G)$,
\item[(b)] a $2$-approximation that runs in time $n^{k+O(1)}$, and
\item[(c)] a $(2+\eps)$-approximation that runs in time $2^{O(k^2 \deg(G) /\eps)} n^{O(1)}$.
\end{OneLiners}
\end{restatable}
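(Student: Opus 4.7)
The plan is to combine an LP relaxation of \kSES with the parameterized-complexity tool of \emph{important separators}, which guarantees at most $4^\lambda$ important $(A,B)$-separators of size $\leq \lambda$, enumerable in $4^\lambda n^{O(1)}$ time. The shared structural observation for all three parts is: if $S^*$ is an optimal solution and $C_1,\ldots,C_m$ are the components of $G\setminus S^*$, then setting $X_i:=R\cap C_i$ we have $|X_i|\leq k$, and a standard submodular-exchange (``pushing'') argument lets us replace each $\delta(C_i)\subseteq S^*$ by an important $(X_i, R\setminus X_i)$-separator without increasing total cost. Hence we only need to search over important separators indexed by candidate sets $X\subseteq R$ of size $\leq k$.

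For part (b), I would set up the LP on edge variables $x_e\in[0,1]$ with, for each $T\subseteq R$ of size $k+1$, a constraint demanding that $T$ be fractionally separated by $x$. The separation oracle iterates over the $n^{O(k)}$ choices of $T$ and computes a min-cut per set, giving $n^{k+O(1)}$ LP-solve time via the ellipsoid method. Rounding proceeds by a region-growing step in the $x$-induced metric: grow balls around carefully chosen $R$-centers until each ball carries at most $k$ $R$-vertices, then cut their boundaries. A Multiway-Cut-style analysis — any surviving ``overloaded'' component would contradict one of the LP constraints — certifies the $2$-integrality gap.

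For parts (a) and (c), I would build a global solution by iteratively extracting important separators, one candidate $X_i\subseteq R$ at a time. With $R=V$ in part (a), every $|X_i|\leq k$, and at a $(1+\eps)$ slackening each important separator has size $\leq k/\eps$; a branching over $k^{O(k)}=2^{O(k\log k)}$ possible ``shapes'' of candidate pieces, combined with the $4^{O(k/\eps)}$ important-separator enumeration per shape, gives the claimed $2^{O(k\log(k/\eps))}n^{O(1)}$ bound. In part (c), the bounded-degree hypothesis replaces the ``$\leq k$ vertices per component'' control by a ``$\leq k\deg(G)$ boundary per component'' control after minimality-preserving preprocessing; the $4^{O(k\deg(G)/\eps)}$ important separators per iteration, together with up to $k$ iterations to build each $X_i$, yield the $2^{O(k^2\deg(G)/\eps)}n^{O(1)}$ runtime.

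The step I expect to be the main obstacle is the $(1+\eps)$-truncation argument in parts (a) and (c): one must formally show that any optimal component whose associated important separator exceeds the truncation threshold can have its excess cost charged to an $\eps$-fraction of $\OPT$, so that the approximation ratio degrades only from $2$ to $2+\eps$. Interlocking this charging cleanly with the LP-rounding analysis of part (b) — which certifies the base factor of $2$ — while avoiding double-counting the $\eps$ slack, is where I would expect the bulk of the technical work to live.
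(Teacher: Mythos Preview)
Your proposal identifies some of the right ingredients (important separators, the boundary-charging identity $\sum_i|\partial C_i|=2\OPT$), but the mechanisms you sketch for turning these into factor-$2$ algorithms do not work as stated, and they diverge substantially from the paper's actual arguments.

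For part~(b), the paper does \emph{not} use an LP at all: it runs local search over all $R'\subseteq R$ with $|R'|\leq k$, and for each $R'$ computes the single best local move $C$ with $C\cap R=R'$ by a gadget reduction to \STCUT. The factor~$2$ then follows from the elementary observation that, at a local optimum, each optimal piece $C^*_i$ gives $|\partial C^*_i\setminus E(\text{current})|\geq |E(G[C^*_i])\cap E(\text{current})|$; summing these yields $2\OPT\geq$ current cost. Your LP region-growing sketch, by contrast, does not supply a rounding that achieves factor~$2$: region growing on an edge-length metric generically incurs logarithmic losses, and the one-line ``overloaded component contradicts an LP constraint'' justification is not an integrality-gap argument. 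You would need either a threshold-cut or isolation-heuristic style rounding with a genuine factor-$2$ analysis, which you have not provided.

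For part~(a), the paper again avoids important separators entirely: it first reduces $\deg(G)$ to $O(k/\eps)$ at a $(1+\eps)$ cost, then runs the \emph{same} local search over the $\leq n\cdot\deg(G)^k$ connected $k$-subsets. Your ``branching over $k^{O(k)}$ shapes plus $4^{O(k/\eps)}$ important separators'' is not a specified algorithm---what is a ``shape'', and how does iteratively committing to important separators interact across pieces without double-counting? For part~(c), you are closer: the paper does bound component boundaries by $M=2k\deg(G)/\eps$ via an $\eps$-canonical-solution argument and then enumerates $4^{O(kM)}n$ important cuts. But the crucial final step is a reduction to \UML (labels $=$ enumerated cuts, cost $\infty$ for mislabeling), and the factor~$2$ comes from the Kleinberg--Tardos LP rounding for that problem. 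Your ``iterative extraction'' has no analogous mechanism: once you commit to an important separator for one $X_i$, you alter the instance for the remaining pieces, and there is no charging scheme in your sketch that controls the cumulative loss at~$2$.
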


We now present corollaries of Theorem~\ref{thm:main_edge}.
Applying exact algorithms (parameterized by $\OPT$) for well-known cases immediately imply the following corollary proved in Section~\ref{subsec:edge_applications}. 

\begin{restatable}{corollary}{fed}
\label{cor:fed}
If $\hh$ is minor-closed and with treewidth bounded by $t$,
\HED admits a $(3 + \eps)$-approximation algorithm that runs in time $f(t, \deg(G), \eps) \cdot n^{O(1)}$ for some function $f$. 
In particular, \FED admits a $(3 + \eps)$-approximation algorithm with running time $f(\ff, \deg(G), \eps) \cdot n^{O(1)}$.
\end{restatable}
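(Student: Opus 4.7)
The plan is to verify the two hypotheses of Theorem~\ref{thm:main_edge} for the given class $\hh$. Closure under subgraphs is immediate since $\hh$ is minor-closed, and the treewidth bound is given, so the only missing ingredient is an exact algorithm for \HED with running time of the FPT form $f(n, \OPT)$ prescribed by the theorem.

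To build this exact algorithm, I would first appeal to the Robertson--Seymour theorem: any minor-closed class is defined by a finite set $\ff$ of forbidden minors, reducing \HED to \FED. Given a target size $k$, I next rely on the structural estimate $\mathrm{tw}(G) \leq \mathrm{tw}(G \sm F) + |F| \leq t + k$ whenever a valid $F$ of size $\leq k$ exists; this follows because reinserting a single edge into a tree decomposition of width $w$ can be absorbed by adding one endpoint to every bag along the path between the subtrees already containing the other endpoint, raising the width by at most one. Bodlaender's algorithm therefore either produces a tree decomposition of $G$ of width $O(t+k)$, or certifies that no solution of size $\leq k$ exists. Given such a decomposition, I would express ``there exist edges $e_1, \ldots, e_k$ such that $G \sm \{e_1, \ldots, e_k\}$ contains no $H \in \ff$ as a minor'' as an MSO$_2$ formula (whose size depends only on $k$ and the fixed family $\ff$, using that containment of any fixed minor is MSO-expressible) and apply Courcelle's theorem to locate a witness $F$ in time $g_\ff(t + k) \cdot n$. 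Binary searching over $k$ upgrades this to an exact optimization algorithm of the required FPT form.

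Plugging this exact algorithm into Theorem~\ref{thm:main_edge}, which calls it with second parameter $t\deg(G)/\eps$, delivers the $(3+\eps)$-approximation within the required running time $f(t, \deg(G), \eps) \cdot n^{O(1)}$, where $f$ absorbs all fixed dependence on $\ff$ (itself determined by $\hh$). The second assertion, for \FED, is identical but skips the Robertson--Seymour step since $\ff$ is supplied. The principal obstacle is ensuring that the exact algorithm has polynomial-in-$n$ dependence: naive enumeration over edge subsets would blow up exponentially in $|E(G)|$, and iterating the cubic minor-testing subroutine per candidate $F$ would not cleanly interact with the approximation framework. The combination of Bodlaender's tree decomposition with Courcelle's theorem is what keeps everything linear in $n$ while confining all super-polynomial blow-up to the parameters $t$, $\deg(G)$, and $\eps$.
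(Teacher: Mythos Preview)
Your approach is correct and essentially the same as the paper's: both verify the hypotheses of Theorem~\ref{thm:main_edge} and use Robertson--Seymour to pass between a general minor-closed $\hh$ of bounded treewidth and \FED. The paper simply cites a linear-time exact algorithm for \FED parameterized by $\OPT$ as a black box (via Courcelle's theorem), whereas you unpack that citation through the estimate $\mathrm{tw}(G)\le\mathrm{tw}(G\sm F)+|F|$, Bodlaender's algorithm, and an explicit MSO$_2$ formulation; this is a faithful elaboration of the same argument. One small omission: for the second assertion about \FED, the treewidth bound $t$ is not an explicit hypothesis but must be derived from the planar member of $\ff$ via the (polynomial) grid-minor theorem, which the paper invokes but your phrase ``identical but skips the Robertson--Seymour step'' glosses over.
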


In Section~\ref{subsec:edge_applications}, We also present implications of Theorem~\ref{thm:main_edge} to the \NPkSd problem studied by Bansal et al.~\cite{BRU17}. For a fixed $k = O(1)$, an instance of \NPkSd is an instance of $\phi$ of \kSAT with $n$ variables and $m$ clauses where the {\em factor graph} of $\phi$ becomes planar after deleting $\delta m$ edges (See Section~\ref{subsec:edge_applications} for formal definitions).
Bansal et al.~\cite{BRU17} proved that for any $\eps > 0$, there is an algorithm that achieves $(1 + O(\eps + \delta \log m \log \log m))$-approximation in time $m^{O(\log \log m)^2 / \eps}$. We prove that if the degree of the factor graph is bounded, we can obtain an improved algorithm.
Note that \kSAT with the maximum degree $O(1)$ has been actively studied and proved to be APX-hard (e.g., \textsc{$3$-SAT(5)}) for general factor graphs. 

\begin{restatable}{corollary}{bru}
\label{cor:bru}
For any $\eps > 0$, there is an $(1 + O(\eps + \delta))$-approximtion algorithm for \NPkSd that runs in time $f(\eps, \deg(\phi)) \cdot m^{O(1)}$ for some function $f$, where $\deg(\phi)$ indicates the maximum degree of the factor graph of $\phi$.  
\end{restatable}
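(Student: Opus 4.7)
The plan is to use Theorem~\ref{thm:main_edge} (via its specialization Corollary~\ref{cor:fed}) to delete a small set of edges from the factor graph of $\phi$ so that the remainder has bounded treewidth, and then solve MAX-SAT exactly on the modified formula by dynamic programming on the resulting tree decomposition.

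First, I would fix the treewidth parameter $t = \Theta(1/\eps)$ and let $\hh_t$ be the minor-closed, subgraph-closed class of graphs of treewidth at most $t-1$. Since $k = O(1)$, the factor graph of $\phi$ has maximum degree $O(\deg(\phi))$, so Corollary~\ref{cor:fed} applies to $\hh_t$-\textsc{Edge Deletion}. To use it effectively, I would argue that the minimum number of edges whose deletion from the factor graph yields treewidth at most $t-1$ is at most $(\delta + O(\eps))m$. This bound has two contributions: by the noisy-planarity assumption, deleting some set of $\delta m$ edges produces a planar factor graph; a Baker-style layered decomposition on this planar graph (picking a random BFS offset modulo $t$) then deletes an additional $O(m/t) = O(\eps m)$ edges whose removal yields treewidth $O(t)$. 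Applying Corollary~\ref{cor:fed} thus produces, in time $f(\eps, \deg(\phi)) \cdot m^{O(1)}$, a set $E_1$ of at most $(3+\eps)(\delta + O(\eps))m = O((\delta+\eps)m)$ edges such that $\phi' := \phi \sm E_1$ has treewidth at most $t-1$. On $\phi'$, standard dynamic programming on a width-$(t-1)$ tree decomposition computes an optimal MAX-SAT assignment $\sigma'$ in time $2^{O(t)} m^{O(1)}$.

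For the approximation analysis, I would note that deleting an edge of the factor graph corresponds to removing one literal from one clause, so for any assignment $\sigma$ the number of satisfied clauses under $\phi$ and under $\phi'$ differs by at most $|E_1|$. Letting $\sigma^*$ be optimal for $\phi$,
\[ \text{SAT}_\phi(\sigma') \;\ge\; \text{SAT}_{\phi'}(\sigma') \;\ge\; \text{SAT}_{\phi'}(\sigma^*) \;\ge\; \text{SAT}_\phi(\sigma^*) - |E_1| \;\ge\; \OPT(\phi) - O((\delta+\eps)m). \]
Since any MAX-$k$-SAT instance with constant $k$ admits an assignment satisfying $(1 - 2^{-k})m = \Omega(m)$ clauses, $\OPT(\phi) = \Omega(m)$, and the displayed bound translates to a $(1 + O(\eps + \delta))$-approximation.

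The hardest step is carrying out Baker's decomposition to yield an edge-deletion bound (rather than the more familiar vertex-deletion bound) and checking that it applies cleanly to a bipartite factor graph of bounded degree; this should be a routine averaging argument over the BFS offset, but it must be written carefully to keep the $O(\eps m)$ slack. A secondary technicality is verifying that the exact algorithm for $\hh_t$-edge deletion invoked inside Corollary~\ref{cor:fed} has the claimed dependence on $t$, $\deg(G)$, and $\eps$ when $\hh_t$ is the class of graphs of treewidth at most $t-1$.
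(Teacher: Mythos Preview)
Your proposal is correct and follows essentially the same route as the paper: use Corollary~\ref{cor:fed} on the factor graph (after arguing via planarity plus a Baker-style layering that the edge-deletion optimum to reach treewidth $O(1/\eps)$ is at most $O((\delta+\eps))m$), then solve the bounded-treewidth instance exactly. The only cosmetic difference is that the paper discards every clause touched by a deleted edge, whereas you keep such clauses with the corresponding literals removed; both handle the $O((\delta+\eps))m$ loss against $\OPT(\phi)=\Omega(m)$ in the same way.
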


For the edge deletion problems, the trivial reduction from \VC, that gave $(2 - \eps)$-inapproximability for all vertex problems under the Unique Games Conjecture, does not work. 
When $k = 1$, \kSES becomes the famous {\sc Multiway Cut} problem, which is hard to approximate within a factor $\approx 1.2$ assuming the Unique Games Conjecture~\cite{AMM17}. 
One can also speculate that the edge deletion problems may have an exact algorithm or PTAS when $\deg(G)$ is bounded. 
We prove the following hardness result that \kES with $k = 3$ is APX-hard even when $\deg(G) = 4$. 
Taking $\ff$ to be the set of all graphs with three vertices (which are all planar), this also proves that excluding $\ff$ as a subgraph, minor, or immersion will not admit a PTAS even for bounded degree graphs. 

\begin{restatable}{theorem}{hardnessedge}
\label{thm:kes_inapprox}
There exists a constant $c > 1$ such that \kES is NP-hard to approximate within a factor of $c$ even when $k = 3$ and $\deg(G) = 4$.
Consequently, when $\ff$ is the set of all graphs with three vertices, deleting the minimum number of edges to exclude $\ff$ as a subgraph, minor, or immersion is APX-hard for bounded degree graphs. 
\end{restatable}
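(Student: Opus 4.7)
The plan is to set up a gap-preserving reduction from a canonical APX-hard problem on bounded-degree or bounded-occurrence instances; natural candidates are \textsc{Max-3-Dimensional-Matching} with bounded occurrences, \textsc{Max-Cut} on cubic graphs, or \textsc{Max Independent Set} on cubic graphs. The starting observation is that in any 4-regular graph $G$ on $n$ vertices (with $2n$ edges), a feasible \kES solution with $k = 3$ is equivalent to a partition of $V(G)$ into parts of size at most $3$, and the number of deleted edges equals $2n - \sum_i |E(G[P_i])|$. Hence, up to an additive $2n$, \kES on such inputs is equivalent to the maximization problem ``pack as many edges as possible inside a vertex-disjoint collection of singletons, pairs, and triples''. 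Since the optimum lies in $[n, 2n]$, any constant-factor gap on the maximization side transfers directly to a constant-factor gap on \kES.

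From there I would encode the source instance via gadgets. Each variable or element of the source instance becomes a small subgraph (a ``variable gadget'') whose optimal internal configuration under the triangle-plus-matching packing objective encodes a bounded combinatorial choice, e.g.\ a Boolean assignment or the selection of a 3-element set. Each constraint is realized by a ``constraint gadget'' that connects the relevant variable gadgets through bounded-degree connectors and pays extra cut edges exactly when the local variable configurations are inconsistent with satisfying the constraint. Standard degree-regularization tricks (pendant pads, dummy edges) ensure that every vertex ends up with degree at most $4$. A careful counting then shows that $\OPT$ of \kES on $G$ is an affine function of the source optimum, so a $(1+\alpha)$-inapproximability for the source problem transfers to a $(1+\beta)$-inapproximability for \kES for some absolute $\beta > 0$.

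The corollary for the subgraph, minor, and immersion versions follows by a cosmetic manipulation. Once $k = 3$ forces every connected component of $G \setminus S$ to have at most $3$ vertices, the resulting graph automatically excludes every 4-or-more-vertex connected obstruction, and conversely, excluding the appropriate family $\ff$ of 3-vertex structures in any of the three containment relations reduces to exactly this \kES instance. Because the three containment relations (subgraph, minor, immersion) agree on the tiny connected obstructions relevant here, the same lower-bound instances simultaneously witness APX-hardness in all three settings.

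The main obstacle will be designing the variable and constraint gadgets. The joint requirements $\deg(G) \leq 4$ and $k = 3$ are tight: a degree-$4$ vertex sitting inside a triangle-part already loses two edges to the cut, leaving essentially no slack to encode additional logic. Constructing a gadget whose optimal configuration robustly corresponds to a local source-level choice while preserving a clean affine relation between the two optima is the delicate step, and verifying the gadget correctness in both directions, including the degree-regularization and the matching of error terms to source-constraint violations, is where the bulk of the technical work will lie.
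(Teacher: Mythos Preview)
Your high-level strategy is right and matches the paper's: show a gap-preserving reduction from a bounded-degree APX-hard source problem by building gadgets whose optimal local packings encode source-level choices, then translate the gap to the number of cut edges via the identity $|E(C_1,\dots,C_m)| = |E| - \sum_i |E(G[C_i])|$. But you leave the actual gadget design open and flag it as the hard part, and that is precisely where the paper saves all the work.

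The paper does not build gadgets from scratch. It routes through \textsc{Partition into Triangles} (PIT): maximize the number of vertex-disjoint triangles in $G$. The reduction from PIT to \kES with $k=3$ is the identity map on $G$; in the completeness case a perfect triangle partition leaves $|E|-n$ cut edges, and in the soundness case fewer than $(1-\varepsilon)n/3$ disjoint triangles forces more than $|E|-(1-\varepsilon/2)n$ cut edges, yielding the desired constant gap since $|E|\le 2n$. The degree-4 APX-hardness of PIT is then obtained by quoting the reduction of van Rooij, van Kooten Niekerk and Bodlaender from \textsc{Max 1-in-3SAT} (bounded occurrence), which already supplies the delicate variable ``cloud'' and clause ``fan'' gadgets under a degree-4 constraint; the paper only needs to check that the reduction is gap-preserving, which is a short counting argument. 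So the intermediate problem PIT is the shortcut you are missing: it turns your ``delicate step'' into a citation plus a one-page verification.

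Your treatment of the ``Consequently'' clause is fine and matches the paper's (implicit) reasoning: once components have bounded size, all three containment relations coincide on the relevant obstructions, so the same instances witness APX-hardness for subgraph, minor, and immersion deletion simultaneously.
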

We note that the above hardness result only leaves open the case of $\deg(G) = 3$; when $\deg(G) = 2$, the graph is simply a disjoint union of paths and cycles, and hence \kES can be solved (exactly) in polynomial time.


\subsection{Techniques}
\label{sec:techniques}
\paragraph{Vertex Deletion.} 
We briefly sketch our proof techniques for Theorem~\ref{thm:main_vertex} for general \HVD using an algorithm for \kSVS as a black box. For simplicity, let us focus on \TVD. 

Let $S^* \subseteq V$ be the optimal solution with $|S^*| = \OPT$. 
Our high-level approach is the following iterative algorithm that maintains a feasible solution $R \subseteq V$ and {\em refines} it to a smaller solution in each iteration. (Initially we start from $R = V$.) 
The simple but crucial lemma for us is Lemma~\ref{lem:cut} in Section~\ref{sec:vertex}, which states that if in the induced subgraph $G[V \setminus S^*]$, which has treewidth at most $k$, 
there are at most $\eps |R|$ vertices such that additionally deleting them from $G[V \setminus S^*]$ ensures that each connected component has at most $O(k / \eps)$ vertices from $R$.
This type of argument guaranteeing the existence of a small separator that {\em finely separates} a subset (i.e., each component has $O_{k, \eps}(1)$ vertices from $R$) previously appeared in Fomin et al.~\cite{FLRS11} for bidimensionality theory. 
Our lemma admits a simpler proof because we need less properties and do not need to be constructive.

Our main conceptual contribution that bridges treewidth deletion and \kSVS is to observe that the above lemma guarantees a feasible solution of \SVS{O(k/\eps)} of size $\OPT + \eps R$. 
Applying an $(\alpha, \beta)$-bicriteria approximation algorithm for \kSVS will delete at most $\alpha(\OPT + \eps R)$ vertices to make sure that each connected component has at most $O(\beta k / \eps)$ vertices from $R$. Since $R$ is a feasible solution, each connected component admits a solution of size $O(\beta k / \eps)$, which can optimally solved in $f(\beta k / \eps) \cdot n^{O(1)}$ time by deleting at most $\OPT$ vertices. In total, the size of our new solution is at most $\alpha(\OPT + \eps R) + \OPT$. 
By appropriately adjusting $\eps$, we can prove that unless $R = O(\alpha \cdot \OPT)$, the size of the new solution is at most $|R| / 2$, which implies that we will achieve $O(\alpha)$-approximation in at most $O(\log n)$ iterations. The current best $(O(\log k), 2)$-bicriteria approximation algorithm for \kSVS immediately yields $O(\log k)$-approximation for \TVD. 

Recently, Bansal et al.~\cite{BRU17} and Agarwal et al.~\cite{ALMSZ17}
used graph partitioning algorithms to solve treewidth deletion problems. 
Agarwal et al.'s approach was based on graphs with bounded treewidth admitting small {\em global separators} (i.e., whose deletion ensures each component has $2n/3$ vertices),
while Bansal et al.\ additionally used the fact that any subset $R \subseteq V$ admits a small {\em $R$-global separator} (i.e., whose deletion ensures each component has $2|R|/3$ vertices). 
Such small separators are found by a modification of traditional graph
partitioning algorithms for global separators, which allows us to
recurse into smaller components.
Using such global partitioning algorithms
gives an inherent loss  of $\Omega(\log n)$.
Our results indicate that computing a {\em finer-grained separator} of
$R$, such as \kSVS, avoids the loss in terms of $n$. One downside of our
approach is that it does not work for weighted settings.
The idea of reducing to the {\em subset} version of a classical combinatorial optimization problem was also employed by Bonnet et al.~\cite{BBKM16} where they used {\sc Subset Feedback Vertex Set} to solve {\sc Bounded $\mathcal{P}$-Block Vertex Deletion}.

It is an interesting open question to see whether \kSVS admits an $(\alpha, \beta)$-bicriteria approximation algorithm for absolute constants $\alpha, \beta$ since it will imply an $O(\alpha)$-approximation algorithm for \TVD that does not depend on $k$ by Theorem~\ref{thm:main_vertex}. 
The best inapproximability is $(2 - \eps)$ coming from \VC, and this is even for the case $R = V$.

\paragraph{\kSES.}

Here we highlight our techniques for the edge deletion problems, which
result in algorithms with better approximation factors than their vertex
deletion counterparts. The gap in difficulty between vertex- and edge-deletion versions has been observed in other cut problems on undirected graphs, such as \textsc{Multiway Cut}~\cite{garg2004multiway,buchbinder2017simplex} and \textsc{Minimum $k$-way Cut}~\cite{saran1995finding}.\footnote{The hardness of node $k$-way cut follows from the observation that the instance is feasible iff there is an independent set of size $k$ in the graph, and independent set is hard to approximate~\cite{zuckerman2006linear}.} Intuitively, the reason is that in edge deletion problems, we can charge the solution cost to the boundary size of each connected component in the remaining graph (without the deleted edges). Since every edge deleted belongs to the boundary of exactly two components, the sum of the boundary sizes of the components is exactly twice the solution cost. Charging the cost of an algorithm to the sizes of boundaries proves to be a more tractable strategy in many cases.

The \kES problem is a special case of \kSES where $R = V(G)$.
Our two $(2+\eps)$-approximation algorithms for \kES start by reducing the degree of the graph to $O(k)$, while sacrificing only an $(1+\eps)$ factor loss in approximation. This step relies on the observation that if a vertex has very high degree, then nearly all of its incident edges must be deleted in any feasible solution, so we might as well delete them all and sacrifice an $(1+\eps)$ factor loss.

After the degree of the graph is parameterized by $k$, our first algorithm begins with any feasible solution and iteratively improves it using local search. At each step, the algorithm examines connected components of at most $k$ vertices and looks for one which can improve the current solution. If the graph has degree $O(k)$, then there are only $k^{O(k)}n$ many connected components of size at most $k$, which leads to a running time FPT in $k$.

The second algorithm for \kES relies on a direct reduction to an instance of \UML by viewing each of the $k^{O(k)}n$ connected components of size at most $k$ as a color in \UML. It then applies the 2-approximation algorithm of \UML from~\cite{KT02}.

For the more general \kSES problem, the local search algorithm generalizes to one running in time $n^{k+O(1)}$. Instead of trying all connected components of size at most $k$, we try all subsets of $R$ of size at most $k$, not necessarily connected. Determining if a given subset improves the solution is more technical, requiring a gadget reduction to a minimum $s$-$t$ cut instance.

Finally, for the case when the graph degree is small, we can follow the \UML reduction approach to obtain an algorithm FPT in both $k$ and $\deg(G)$. In particular, we prove that, modulo an $(1+\eps)$ loss in approximation, there are essentially $2^{O(k^2\deg(G))}n$ many relevant connected components to consider. For this, we use the idea of \textit{important cuts}, a tool that has been used in FPT algorithms for other cut problems, such as \textsc{Multiway Cut}~\cite{marx2006parameterized,chen2009improved}. Assigning each of these relevant components a color gives a \UML instance of size parameterized by both $k$ and $\deg(G)$, which is again approximated to factor $2$.


\subsection{Related Work}
\label{subsec:related}
In this subsection, we briefly survey known parameterized approximation algorithms for \FVD and \FED, parameterized by $\ff$,
leaving out a rich set of results on exact parameterized algorithms and kernelization (often parameterized by $OPT$)~\cite{HVJKV11, FLMS12, CM15, DDvH16, KLPRRSS16, EGK16, BBKM16, GLSS16, BCKP16b, JP17, ALMSZ17b, GJLS17}. 

For minor deletion problems, 
Agrawal et al.~\cite{ALMSZ17} gave a $\polylog(n)$-approximation for \FVD.
When $\ff = \{ K_5, K_{3, 3} \}$, this problem is known as \textsc{Minimum Planarization} and admits $\polylog(n)$-approximation in time $n^{O(\log n / \log \log n)}$~\cite{KS17}. 
\cite{FJP10} gave a $9$-approximation for \textsc{Diamond Hitting Set}, which is excluding the graph with two vertices and three parallel edges as a topological minor. 
Besides \VC ($F$ is a single edge) and \FVS ($F$ is a triangle), to the best of our knowledge, 
the only special cases of $\FVD$ that admit an $O_{\calF}(1)$-approximation algorithm for the weighted case is when $F$ is a diamond~\cite{FJP10} or 
$F$ is a simple path (where minor deletion and subgraph deletion become equivalent). 

When we exclude a single graph $F$ as a
  subgraph, there is a simple $k$-approximation algorithm where $k$ is
  the number of vertices in $F$. A nearly-matching hardness was proved
  by 
Guruswami and Lee~\cite{GL15}, who showed that \HVD is NP-hard to
approximate within a factor of $k - 1 - \eps$ for any $\eps > 0$ ($k -
\eps$ assuming the Unique Games Conjecture) whenever $F$ is $2$-vertex-connected. If $F$ is a star or a simple path with $k$ vertices, $O(\log k)$-approximation algorithms are known~\cite{GL15, Lee18}. 

For \HED, the notion of {\em immersion deletion} has commonly been studied instead of minor deletion~\cite{GPRTW17}. Bansal et al.~\cite{BRU17} gave an $O(\log n \log \log n)$-approximation for \TED.
The edge-deletion version for {\em induced subgraph deletion} was also studied~\cite{BCKP16}. 

There is also vast literature on general \HVD or \HED besides the aforementioned minor, immersion, subgraph, and induced subgraph deletions.
Lund and Yannakakis~\cite{LY93} considered the {\em maximization version} where we want to find the maximum $S \subseteq V(G)$ such that the induced subgraph $G[S] \in \hh$, and showed that whenever $\hh$ is hereditary and {\em nontrivial} ($\hh$ contains an infinite number of graphs and does not contain an infinite number of graphs), then the maximization version is hard to approximate within a factor $2^{\log^{1/2 - \eps} n}$ for any $\eps > 0$. This inapproximability ratio was subsequently improved to $n^{1 - \eps}$ for any $\eps > 0$ by Feige and Kogan~\cite{FK05}.

{\sc Chordal Vertex Deletion}~\cite{ALMSZ17, JP17, KK18} and {\sc Odd Cycle Transversal}~\cite{ACMM05} are other primary examples of $\HVD$;  they can be captured as a subgraph deletion when $\ff$ is the set of all chordless or odd cycles. The problem of reducing other width parameters (e.g., rankwidth, cliquewidth) have been studied~\cite{ALMSZ17}. Besides approximation algorithms, these problems also have been studied through the lens of their parameterized complexity (parameterized by $\OPT$) and covering-packing duality (known as the Erd\H{o}s-P\'{o}sa property). We refer the reader to the introduction of~\cite{ALMSZ17} and~\cite{GL15} for more detailed survey. 

\section{Preliminaries}
Unless otherwise specified by a subscript, all constants hidden in $O(\cdot)$ notations are absolute constants that do not depend on any parameter. 
For a graph $G = (V(G), E(G))$, let $n$ denote the number of vertices, and a subset $S$ of vertices or edges, let $G \setminus S$ be the graph after deleting $S$ from $G$. 
For disjoint subsets $C_1, \dots, C_m \subseteq V(G)$, let $E(C_1, \dots, C_m) := \{ (u, v) \in E(G): u \in C_i, v \in C_j, i \neq j \}$. 
For $C \subseteq V(G)$, let $\partial(C) := E(C, V \setminus C)$. 
For $v \in V$, let $\deg(v)$ denote the degree of $v$, and let $\deg(G)$ be the maximum degree of $G$.

\paragraph{Treewidth and pathwidth.}
Given a graph $G = (V(G), E(G))$, a tree $T = (V(T), E(T))$ is called a {\em tree decomposition} of $G$ if every node (also called a bag) $t \in V(T)$ is a subset of $V(G)$, and the following conditions are met.
\begin{OneLiners}
\item[1.] The union of all bags is $V(G)$. 
\item[2.] For each $v \in V(G)$, the subtree of $T$ induced by $\{ t \in V(T) : v \in t \}$ is connected. 
\item[3.] For each $(u, v) \in E(G)$, there is a bag $t$ such that $u, v \in t$. 
\end{OneLiners}
The {\em width} of $T$ is the cardinality of the largest bag minus $1$, and the {\em treewidth} of $G$, denoted $tw(G)$, is the minimal width of a tree decomposition of $G$.
If we restrict the tree $T$ to be  a path, we obtain analogous notions of \emph{path decomposition} and \emph{pathwidth} of $G$, denoted $pw(G)$.

\paragraph{Treedepth.}
A \emph{treedepth decomposition} of $G$ is a tree $T$ with an injective mapping $\phi: V(G) \rightarrow V(T)$, such that whenever $(u,v) \in E(G)$ then $\phi(u)$ and $\phi(v)$ are in ancestor-descendant relation.
The \emph{treedepth} of $G$, denoted $td(G)$, is the minimum
height of a treedepth decomposition of G.
We have $tw(G) \le pw(G) \le td(G) - 1$~\cite{treedepth-dynamic}.

\paragraph{Minors.}
We say that graph $M$ is a \emph{minor} of graph $G$ if there exists a mapping $\phi$ from $V(M)$ to disjoint connected subgraphs of $G$, such that whenever $(u,v) \in E(M)$ then $E(\phi(u),\phi(v)) \ne \emptyset$.
Otherwise we say that $G$ is $M$-minor-free.
If $M$ is planar, then all $M$-minor-free graphs have treewidth bounded by $|V(M)|^{O(1)}$~\cite{RS86, CC16}.


\section{Vertex Deletion}
\label{sec:vertex}
In this section, we prove our results for the vertex deletion problems. 
We first prove Theorem~\ref{thm:main_vertex} and then show
its applications to \FVD, uniform algorithms, and bidimensionality.

Our proof of Theorem~\ref{thm:main_vertex} is based on the following simple lemma that reveals a natural connection between \kSVS and \HVD when graphs in $\hh$ have bounded treewidth. 
\begin{lemma}\label{lem:cut}
Suppose graph $G$ has its treewidth bounded by $t-1$ and let $R \subseteq V(G)$.
Then for each natural number $\delta$ there exists a set $X \subseteq V(G)$ such that
$|X| \le \frac{t}{\delta} \cdot |R|$ and each connected component of $G \sm X$
contains at most $\delta$ elements from $R$.
What is more, if the tree decomposition is given, the set $X$ can be constructed in polynomial time.
\end{lemma}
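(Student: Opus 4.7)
My plan is to use a greedy bottom-up algorithm on a rooted tree decomposition, using the fact that each bag of width $\le t-1$ acts as a separator for the graph portions corresponding to different parts of the tree.

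\textbf{Setup.} Let $T$ be the given tree decomposition of $G$, rooted at an arbitrary node. For a node $v\in V(T)$, write $B(v)$ for its bag (so $|B(v)|\le t$) and $V_v$ for the union of bags in the subtree rooted at $v$. By the standard separator property, removing $B(v)$ from $G$ disconnects $V_v\setminus B(v)$ from $V(G)\setminus V_v$. For each $u\in R$ define $\mathrm{home}(u)$ to be the bag $B(v)$ closest to the root that contains $u$; this is well-defined since the bags containing $u$ form a connected subtree of $T$. For each node $v$ let $n(v)=|\{u\in R:\mathrm{home}(u)=B(v)\}|\le t$, and let $N(v)=\sum_{w\in\mathrm{subtree}(v)} n(w)$.

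\textbf{Algorithm.} Initialize $X=\emptyset$. While there exists a node $v$ in the current tree with $N(v)\ge \delta$, pick a lowest such $v$, add $B(v)$ to $X$, and delete the entire subtree rooted at $v$ from $T$ (this zeroes $n$ on that subtree and decreases $N(u)$ by $N(v)$ for every ancestor $u$). Each update is clearly polynomial time, and the main accounting is: each iteration decreases the potential $\Phi=\sum_{v}n(v)$ (over surviving nodes) by at least $\delta$, while $\Phi$ starts at $|R|$ and stays nonnegative. Hence the loop runs at most $|R|/\delta$ times and adds at most $t\cdot |R|/\delta$ vertices to $X$, meeting the promised bound.

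\textbf{Correctness.} I will show that every connected component $C$ of $G\setminus X$ satisfies $|C\cap R|<\delta$. Let $T_C$ be the subtree of (the pruned) $T$ consisting of bags that intersect $C$; by the tree decomposition axiom $T_C$ is connected. Let $v^*$ be its root. Then $B(v^*)\not\subseteq X$ (it meets $C$), so the subtree at $v^*$ survives in the final tree. For every $u\in C\cap R$, all bags containing $u$ lie in $T_C$, hence $\mathrm{home}(u)$ lies in the subtree of $v^*$; moreover $\mathrm{home}(u)$ was not deleted since $u\notin X$. Thus $|C\cap R|\le N(v^*)$, and by the termination criterion $N(v^*)<\delta$, giving the required bound.

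\textbf{Expected main obstacle.} The only subtle point is the book-keeping of ``home bags'': one needs to check both that a vertex's home bag survives whenever the vertex itself does, and that the $R$-vertices inside a surviving component are counted exactly once by $N(v^*)$. Both follow directly from the subtree-connectivity axiom of tree decompositions, so this is more a matter of careful exposition than of any genuine technical difficulty.
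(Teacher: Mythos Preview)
Your approach is essentially identical to the paper's: root the tree decomposition, repeatedly find a lowest bag whose subtree carries at least $\delta$ vertices of $R$, add that bag to $X$, prune the subtree, and iterate; the size bound $|X|\le t|R|/\delta$ follows from the same potential argument.

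Your correctness argument, however, contains a real error. The claim ``$\mathrm{home}(u)$ was not deleted since $u\notin X$'' is false, and it does \emph{not} follow from the subtree-connectivity axiom. If $\mathrm{home}(u)$ lies strictly below some deleted root $v_i$, then $\mathrm{home}(u)$ is pruned even though $u\notin B(v_i)$. Concretely, take a path decomposition $r$--$a$--$b$ with a vertex $u$ appearing only in $B(b)$; deleting the subtree at $a$ removes $b=\mathrm{home}(u)$ while $u\notin B(a)$. As a consequence your argument only covers components $C$ whose bags all lie in the final pruned tree; for a component sitting entirely inside a deleted subtree, your $T_C$ (taken in the pruned tree) is empty and there is no $v^\ast$ to invoke.

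The fix, which the paper carries out in one sentence, is to argue at the moment of each deletion rather than only at termination. When you delete the subtree at $v$, every child $w$ of $v$ in the current tree has $N(w)<\delta$; any component of $G\setminus X$ whose bag-set is first removed at this step is contained in some such $T_w$ (it is connected and cannot pass through $v$ since $B(v)\subseteq X$), hence contains fewer than $\delta$ vertices of $R$. Combined with your argument for components that survive to the end, this closes the gap.
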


\begin{proof}
Consider a tree decomposition of $G$ of width $t-1$.
For a bag $B$ let $r(B)$ denote the number of vertices from $R$ introduced in the subtree of the
decomposition rooted at $B$.
If $|R| \le \delta$ then $X = \emptyset$ satisfies the claim and otherwise there is a bag $B$ with
$r(B) > \delta$.
Let $B_0$ be such a bag with all its descendant having $r(B) \le \delta$.
Vertices contained in $B_0$ form a cut with all connected components formed by descendants of $B_0$ 
having at most $\delta$ vertices from $R$.
We iterate this procedure and define $X$ to be the union of all performed cuts.
Each cut is formed by at most $t$ vertices and there can be at most $\frac{R}{\delta}$ iterations.
The claim follows.
\end{proof}

Now we recall and prove the main theorem for the vertex deletion problems.

\mainvertex*

\begin{proof}
Let $\eps > 0$ be a constant determined later (depending on $t$). 
Our algorithm maintains a feasible solution $R \subseteq V(G)$ (say we start from $R = V(G)$) and iteratively finds a smaller solution. 
Let $S^* \subseteq V(G)$ be an optimal solution to \HVD and let $R \subseteq V(G)$ be the current solution. 
The graph $G \sm S^*$ has its treewidth bounded by $t - 1$, therefore
Lemma~\ref{lem:cut} with $\delta = t / \eps$ guarantees that there exists 
a set $X \subseteq V(G) \sm S^*,\, |X| \le \eps |R|$ so that each connected component in $G \sm (S^* \cup X)$
has at most $t / \eps$ vertices from $R$.

We launch the $(\alpha(k), O(1))$-bicriteria approximation for \kSVS on $G$ with $k = t / \eps$ and $\alpha(k) = O(\log k)$.
It returns a set $Y \subseteq V(G)$ of size at most
$$\alpha \cdot |S^* \cup X| \le \alpha \cdot (\OPT + \eps |R|)$$
such that each connected component of $G \sm Y$ has at most $O(t / \eps)$ vertices from $R$.
Since $\hh$ is hereditary, $R \cup Y$ is a valid solution and we have a bound $O(t / \eps)$ on the solution size for each connected component.
We thus can solve \HVD on each component $C \subseteq G \sm Y$ in time $f(n, O(t / \eps))$.
We know that $C \cap S^*$ is a feasible solution for each $C$, so the sum of solution sizes is bounded by $|S^*| = \OPT$.

Let $R'$ be the union of $Y$ and all solutions obtained for the connected components in $G \sm Y$.
It will be the new $R$ in the next iteration. 
Since
$|R'| \leq |Y| + \OPT \leq (\alpha + 1) \OPT + \alpha\eps |R|,$
as long as 
$$
(\alpha(k) + 1) \cdot \OPT + \alpha(k) \eps |R| \leq (3/4) |R| 
\quad \Leftrightarrow \quad |R| \geq \frac{\alpha(k) + 1}{3/4 - \alpha(k) \eps} \OPT, 
$$
the size of the maintained solution is decreased by a factor of $3/4$. 
Since $\alpha(k) = O(\log k)$, if $\eps = c / \alpha(t)$ for small constant $c > 0$, 
$$
\frac{\alpha(k) + 1}{3/4 - \alpha(k) \eps} = 
\frac{\alpha(t / \eps) + 1}{3/4 - \alpha(t / \eps) \eps} =
\frac{\alpha(t \alpha(t) / c) + 1}{3/4 - c \alpha(t \alpha(t) / c) / \alpha(t)} 
\leq 2 \alpha(t).
$$
The last inequality holds since 
$\alpha(t \alpha(t) / c) \leq \alpha(t) + \alpha(\alpha(t) / c)$ gets multiplicatively closer to $\alpha(t)$ as $t$ grows,
so that for small enought $c > 0$, we can ensure that the denominator is at least $3/5$, and the numerator is at most $(6/5)\alpha(t)$ for large enough $t$. 
Therefore, if we begin with $R = V(G)$ and iterate the procedure $O(\log n)$ times, we have a $2\alpha(t)$-approximation. 
The running time for each iteration is $f(n, k) + g(n, k) = f(n, O(t \alpha (t))) + g(n, O(t \alpha(t)))$. 
\end{proof}

We combine this meta-theorem with a recent  result for \kSVS.

\begin{theorem}[\cite{Lee18}]
There exists an $(O(\log k), \,2)$-bicriteria approximation algorithm for \kSVS that runs in time $n^{O(1)}$. 
\label{thm:ksvs}
\end{theorem}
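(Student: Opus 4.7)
The plan is to adapt the LP-plus-region-growing paradigm of Garg--Vazirani--Yannakakis for \textsc{Multicut} and of Even et al.\ for the $\rho$-separator problem, tailored so that only terminals in $R$ count toward the cluster-size budget. First I would set up an LP relaxation with a fractional deletion variable $x_v \in [0,1]$ for each vertex. Interpreting $x_v$ as the length of $v$ induces a vertex-weighted shortest-path semi-metric $d_x$, and the LP minimizes $\sum_v x_v$ subject to a spreading/separation constraint stating, in effect, that no $r \in R$ can be ``$d_x$-close'' to more than $k$ other terminals. A natural formulation is to require, for every $r \in R$ and every $T \subseteq R$ with $r\in T$ and $|T|>k$, that $\sum_{r'\in T} \min\{d_x(r,r'),\tfrac12\} \geq \tfrac12$; the LP can then be solved either directly or via a polynomial-time separation oracle that greedily identifies a violating $T$ from a shortest-path traversal.

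Next comes the rounding. Starting from any unclustered $r \in R$, grow a ball $B(r,\rho)$ in the $d_x$ metric, using a volume functional that blends LP mass with a $1/(2k)$ contribution per terminal inside and an additive seed of order $\OPT_{LP}/|R|$. A textbook volume-doubling argument then yields a radius $\rho_r \in [0,\tfrac12]$ at which the boundary's $x$-mass is at most $O(\log k)\cdot\mathrm{vol}(B(r,\rho_r))$, while the LP spreading constraint certifies $|B(r,\rho_r)\cap R|\leq 2k$. Place the boundary into the cut, remove the entire ball from the graph, and iterate until no terminals remain. Because each vertex lies in at most one ball, the total cut cost telescopes to $O(\log k)\cdot\bigl(\sum_v x_v + |R|/(2k) + \mathrm{seed}\bigr) = O(\log k)\cdot\OPT$, and by construction every surviving component carries at most $2k$ terminals.

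The main obstacle I expect is designing the LP and the volume functional so that the subset-size constraint $|C\cap R|\leq k$ is encoded faithfully yet the $(O(\log k),2)$ trade-off --- rather than $(O(\log n),2)$ or $(O(\log k),c)$ for a larger $c$ --- falls out cleanly from the ball-growing step. In particular, the volume definition must be tuned so that the ``$\log$'' in the boundary-to-volume ratio is $\log k$ (driven by the per-terminal share) and not $\log n$, and the spreading threshold $\tfrac12$ must align with the termination radius of the region-growing procedure. Derandomizing the random choice of radius by enumerating $O(\log n)$ discrete candidate values is routine but should be checked. For the full construction I would defer to Lee~\cite{Lee18}, whose journal version already packages these ingredients in deterministic form.
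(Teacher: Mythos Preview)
The paper does not supply its own proof of this statement: Theorem~\ref{thm:ksvs} is quoted from~\cite{Lee18} and used purely as a black box (combined with Theorem~\ref{thm:main_vertex}) to derive Corollary~\ref{cor:main_vertex}. Your sketch --- an LP with a per-terminal spreading constraint followed by GVY-style region growing, with the volume functional tuned so the boundary-to-volume ratio yields a $\log k$ rather than a $\log n$ --- is indeed the shape of the argument in~\cite{Lee18}, and you already defer to that reference for the full construction. There is therefore nothing in the present paper to compare against; your proposal and the cited source are aligned.
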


\begin{corollary}
\label{cor:main_vertex}
Suppose $\hh$ is a hereditary class of graphs with treewidth bounded by $t - 1$
and \HVD admits an exact algorithm that runs in time $f(n, \OPT)$.
Then \HVD admits $O(\log t)$-approximation algorithm with running time $f(n, O(t\log t))\log n + n^{O(1)}$.
\end{corollary}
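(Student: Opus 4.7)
The plan is to derive this as a direct instantiation of Theorem~\ref{thm:main_vertex} with the bicriteria algorithm for \kSVS provided by Theorem~\ref{thm:ksvs}. For hypothesis (a) of Theorem~\ref{thm:main_vertex}, I will use the exact algorithm for \HVD with running time $f(n, \OPT)$ that is assumed in the statement. For hypothesis (b), I will invoke Theorem~\ref{thm:ksvs}, which supplies an $(\alpha(k), 2)$-bicriteria approximation with $\alpha(k) = O(\log k)$ and running time $g(n,k) = n^{O(1)}$; in particular the growth condition $\alpha(k) = O(\log k)$ required by Theorem~\ref{thm:main_vertex} is satisfied.

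Feeding these two ingredients into Theorem~\ref{thm:main_vertex}, I immediately obtain a $2\alpha(t) = O(\log t)$-approximation algorithm for \HVD with running time
\[
\bigl(f(n, O(t\alpha(t))) + g(n, O(t\alpha(t)))\bigr) \log n
  = \bigl(f(n, O(t\log t)) + n^{O(1)}\bigr) \log n,
\]
which simplifies to $f(n, O(t \log t))\log n + n^{O(1)}$ as claimed. Since the derivation is purely a substitution of parameters, there is no genuine obstacle; the only thing to double-check is that $O(t\alpha(t)) = O(t \log t)$ under $\alpha(t) = O(\log t)$, which is immediate, and that the polynomial $n^{O(1)}$ factor absorbs the extra $\log n$ multiplier coming from the outer iteration of Theorem~\ref{thm:main_vertex}.
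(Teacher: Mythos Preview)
Your proposal is correct and matches the paper's own approach exactly: the corollary is stated immediately after Theorem~\ref{thm:ksvs} as the direct combination of Theorem~\ref{thm:main_vertex} with the $(O(\log k),2)$-bicriteria algorithm for \kSVS, with no additional argument given. Your verification of the parameter substitutions and the running-time simplification is precisely what is needed.
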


We are ready to present the most general result
improving upon  \cite{FLMS12} who gave a $c_{\hh}$-approximation for \HVD for some implicit constant $c_{\hh}$.
We emphasize that the constant hidden in term $O(\log t)$ is universal.

\fvd*
\begin{proof}
For \FVD, we use the Polynomial Grid Minor theorem~\cite{CC16} which says that if $G$ does not have a planar graph $F$ as a minor, the treewidth of $G$ is bounded by $|V(F)|^{O(1)}$.
\FVD admits a linear-time exact algorithm parameterized by the solution size~\cite{bodlaender97}  so the assumptions of Corollary~\ref{cor:main_vertex} are satisfied.

Due to the result of Robertson and Seymour~\cite{graph-minors-xx}, every minor-closed class can be represented
as $\ff$-minor-free graphs for some finite family $\ff$.
If the treewidth in $\hh$ is additionally bounded, then
at least one of graphs in $\ff$ must be planar.
Therefore \HVD reduces to \FVD.
\end{proof}

\subsection{Uniform Algorithms for Width Reduction}
\label{sec:applications_vertex}
Another application of our approach emerges
when we deal with a sequence of families $\hh_k$.
In~contrary to the previously known techniques, Theorem~\ref{thm:main_vertex} can produce uniform algorithms for $\hh_k$-\textsc{Vertex Deletion}
when provided with an exact uniform algorithm parameterized by both $k$ and $\OPT$.
We present such an exact algorithm for \TVD, together with related problems,
and combine it with our framework.
Then we also cover \PT problem where $\hh_k$ consists all graphs with no simple path of length $k$. 

\begin{lemma}
\label{lem:treewidth-exact}
The problems of $k$-\textsc{Treewidth / Pathwidth / Treedepth Vertex
  Deletion} parameterized by $k$ and the solution size $p$ admit exact
algorithms with running times  $2^{O((k+p)^2k)}n$, $2^{O((k+p)\cdot(k +
  \log(k+p)))}n$, and $2^{O((k+p)k)}n$ respectively.
\end{lemma}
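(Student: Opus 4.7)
The plan is to treat all three problems with a common template: first reduce to the bounded–treewidth case, then run bag–by–bag dynamic programming on a tree decomposition of $G$, using problem-specific signatures for the target width/depth-$k$ decomposition of $G\sm S$.

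First I would observe that, in each of the three problems, a feasible solution $S$ of size at most $p$ certifies that $G$ itself has small treewidth: if $G\sm S$ admits a tree/path/treedepth decomposition of width/depth at most $k$, then adding $S$ to every bag (respectively, prepending $S$ as a chain of ancestors) yields a tree decomposition of $G$ of width at most $k+p$. Consequently we begin by computing a tree decomposition of $G$ of width $O(k+p)$ in single-exponential time using a known algorithm (e.g.\ the $2^{O(w)}n$-time $5$-approximation of Bodlaender--Drange--Dregi--Fomin--Lokshtanov--Pilipczuk); if none exists, we reject. We then convert this decomposition into a nice decomposition and perform a bottom-up DP.

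At every bag $B$ (of size $w=O(k+p)$) the DP table is indexed by two items: (i)~the intersection $S_B = S\cap B$, contributing $2^{|B|}=2^{O(k+p)}$ choices and an integer counter for $|S|$ so far; and (ii)~a \emph{signature} $\sigma_B$ that records how $B\sm S_B$ sits inside a partial target decomposition of $G\sm S$ on the subgraph already processed. The signatures are the only problem-specific ingredient. For \TDVD, $\sigma_B$ is an injection from $B\sm S_B$ into a rooted tree of depth at most $k$ (so each vertex is given by its at most $k$-long ancestor chain), giving roughly $k^{O(k+p)}$ options, hence $2^{O((k+p)k)}$ states per bag. For \PVD, the signature stores the current bag of the target width-$k$ path decomposition together with the order in which the vertices of $B\sm S_B$ were introduced/forgotten, which is bounded by $2^{O(k(k+p))}\cdot (k+p)!=2^{O((k+p)(k+\log(k+p)))}$. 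For \TVD, the signature encodes a partial width-$k$ tree decomposition restricted to $B\sm S_B$ together with an equivalence relation on the open branches reaching into the processed subgraph; a careful counting gives $2^{O((k+p)^{2}k)}$ states. Transitions at introduce/forget/join nodes are verified in $\poly(w)$ time per compatible pair of states (check that all edges of $G\sm S$ incident to the introduced vertex are respected, that width/depth never exceeds $k$, and that $|S|$ stays within $p$), and after the root is processed we accept iff some reachable state corresponds to a valid global decomposition.

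The running times follow by multiplying the number of states per bag by the $\poly(w)$ cost of each transition and by the $O(n)$ number of bags of the nice decomposition: $2^{O((k+p)^{2}k)}n$ for \TVD, $2^{O((k+p)(k+\log(k+p)))}n$ for \PVD, and $2^{O((k+p)k)}n$ for \TDVD, matching the claimed bounds.

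The main obstacle is the design of the signature $\sigma_B$ for \TVD: unlike the path and treedepth cases, where the target structure is linear or rooted, for treewidth we must simultaneously remember a branching pattern of the partial decomposition and the \emph{connectivity type} of how each open branch connects the already-processed vertices. The delicate point is showing that the signatures are (a) expressive enough that a valid global decomposition is reconstructible from locally consistent transitions, and (b) concise enough that their total number does not exceed $2^{O((k+p)^{2}k)}$. Once this encoding is set up correctly, the introduce, forget and join transitions and their polynomial-time verification are routine adaptations of standard treewidth DP.
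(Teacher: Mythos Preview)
Your overall template---bound $tw(G)$ by $k+p$, compute a width-$O(k+p)$ tree decomposition, then run bag-by-bag DP that simultaneously guesses $S\cap B$ and a ``signature'' of the target decomposition---is exactly the approach the paper takes. The paper likewise tracks the deletion set in the DP state (via triples $(x,X,\ell)$ in the appendix), so conceptually you are aligned.

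The gap is in your problem-specific signatures for pathwidth and treewidth. For pathwidth, ``the current bag of the target width-$k$ path decomposition together with the order in which the vertices of $B\sm S_B$ were introduced/forgotten'' is not enough: at a \emph{join} node of the outer tree decomposition you must interleave two partial path decompositions, and to decide whether the merged decomposition still has width $\le k$ you need, for each stretch between consecutive entries/exits of bag vertices, information about the profile of bag sizes there---not just the endpoints. This is precisely what the Bodlaender--Kloks \emph{typical sequence} machinery captures, and it is the source of the $4^{O(kt)}$ factor (with $t=O(k+p)$). Your description omits it, and without it the join transition cannot be implemented soundly; the same issue recurs in the treewidth case, where the characteristic additionally carries a \emph{trunk} (a tree on $O(t)$ nodes) with an interval model and a list of typical sequences on each trunk edge, giving the $2^{O(t^2k)}$ state bound. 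Your ``equivalence relation on the open branches'' is a reasonable intuition for the trunk, but it does not encode the size profiles along the branches and hence cannot certify width $\le k$ after merges.

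In short, the paper does not design these signatures from scratch: it invokes the Bodlaender--Kloks characteristics for pathwidth/treewidth and the analogous states of Reidl et al.\ for treedepth, and only shows how to thread the deletion bookkeeping through those existing DPs. Your proposal would become correct once you replace your ad-hoc signature descriptions by those established characteristics (or reprove their key properties, notably that typical sequences are preserved under the $\prec$ order and that there are $O(4^k)$ of them over $[0,k]$).
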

\begin{proof}[Proof sketch.]
As these algorithms are variants of well-known previous algorithms, we briefly give a sketch of the proof here and give more detailed explanations in Section~\ref{sec:twd-deletion}.
For any graph $H$ we have $tw(H) \le pw(H) \le td(H)$.
Consider a solution $X \subseteq V(G)$ -- it satisfies $|X| \le p$ and $tw(G\setminus X) \le k$.
After adding $X$ to each bag of the tree decomposition for $G\setminus X$ we obtain a decomposition for $G$ with width at most $k+p$.
We can thus use the linear-time constant approximation algorithm for treewidth~\cite{treewidth-apx} to find a tree decomposition of $G$ with width $O(k+p)$ in time $2^{O(k+p)}n$.

The problem of finding a tree (or path) decomposition of width $k$ parameterized by the width $t$ of the input tree decomposition has been studied by~\cite{BK91} who gave a $2^{O(tk + t\log t))}n$-time algorithm for the pathwidth case
and a $2^{O(t^2k)}n$-time algorithm for the treewidth case.
A $2^{O(tk)}n$-time algorithm for finding a treedepth decomposition of width $k$ given a tree decomposition of width $t$ has been obtained by~\cite{treedepth-dynamic}.
We slightly modify these procedures to handle vertex deletion
and use them over the precomputed tree decomposition of width $O(k+p)$.
A more detailed construction is presented in Section~\ref{sec:twd-deletion}. 
\end{proof}

\tvd*
\begin{proof}
For \textsc{Treewidth / Pathwidth / Treedepth Vertex Deletion}, we inject the bounds from Lemma~\ref{lem:treewidth-exact} into
Corollary~\ref{cor:main_vertex}.
To handle \textsc{Path Transversal} observe that
the $k$-path-free graphs have treedepth bounded by $k$ and therefore also treewidth bounded by $k$~\cite{treedepth-dynamic}.
There is an exact algorithm for \PT with running time $f_k(n,p) = O(k^p n)$, where $p$ is the bound on the solution size~\cite{Lee18}.
The claim follows again from Corollary~\ref{cor:main_vertex}.
\end{proof}

\subsection{Applications in Bidimensionality}
\label{sec:bidimensionality}
In this section we show how to obtain better guarantees
over planar graphs or, more generally, over graphs with excluded minor.
The main insight from bidimensionality we rely on is the following lemma allowing to truncate the solution candidate by increasing the working treewidth moderately.

\begin{lemma}[\cite{FLRS11}, Corollary 1]
\label{lem:minor_treewidth}
Let $G$ be a $M$-minor-free graph, $X \subseteq V(G)$, and $tw(G \setminus X) \le t$.
Then for any $\epsilon > 0$ there exists a set $X' \subseteq V(G)$ such that $|X'| \le \epsilon |X|$ and $tw(G \setminus X') = O_M({t}/{\epsilon})$, where the hidden constant depends on the excluded minor $M$.
Moreover, for given $G,X,\epsilon$, the set $X'$ can be constructed in polynomial time, however with a slightly worse guarantee $tw(G \setminus X') = O_M(\frac{t\log t}{\epsilon})$.
\end{lemma}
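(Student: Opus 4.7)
The plan is to prove Lemma~\ref{lem:minor_treewidth} via the bidimensionality machinery for $M$-minor-free graphs. The guiding principle is that the ``extra'' treewidth of $G$ beyond $t$ must be attributable to the set $X$; hence deleting only a small fraction of $X$, after an appropriate rearrangement, should suffice to bring the treewidth back down, up to a factor of $1/\epsilon$.

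First I would fix a tree decomposition $\mathcal{T}$ of $G \sm X$ of width $\le t-1$, and build a skeleton graph $G^\sharp$ by contracting each bag of $\mathcal{T}$ into a single supernode, while keeping the vertices of $X$ as individual vertices connected to the supernodes whose bags contain their neighbors in $G \sm X$. Because $M$-minor-freeness is closed under contraction, $G^\sharp$ is also $M$-minor-free. After canonicalizing $\mathcal{T}$ (pruning long degree-two paths in the decomposition, merging redundant bags that carry no $X$-incidences), I can force the number of ``load-bearing'' supernodes to be $O(|X|)$, so that $|V(G^\sharp)| = O(|X|)$.

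Next I would invoke the contraction-decomposition theorem for $M$-minor-free graphs in the style of Demaine--Hajiaghayi: for any integer $r \ge 1$ there exists $S \subseteq V(G^\sharp)$ with $|S| = O_M(|V(G^\sharp)|/r) = O_M(|X|/r)$ and $tw(G^\sharp \sm S) = O_M(r)$. Taking $r = t/\epsilon$ and lifting $S$ back to $V(G)$ (each supernode expanding into at most $t$ original vertices) produces $X' \subseteq V(G)$ of size $O_M(t|X|/r) = O_M(\epsilon|X|)$. Since every bag has $\le t$ vertices, pasting the width-$O_M(r)$ decomposition of $G^\sharp \sm S$ back together with the local bag contents contributes an additional $O(t)$ per node, giving $tw(G \sm X') = O_M(r) + O(t) = O_M(t/\epsilon)$. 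A constant rescaling of $\epsilon$ absorbs the hidden $M$-dependence to match the stated $|X'| \le \epsilon|X|$.

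The main obstacle is the algorithmic version, which is responsible for the extra $\log t$ factor. The existential contraction-decomposition above is not known to be polynomial-time constructive without weakening the separator guarantee, so I would substitute an approximate balanced-separator routine (in the spirit of \cite{FHL08}) applied recursively inside $G^\sharp$. Because the recursion runs on the $O(|X|)$-vertex skeleton rather than on all of $G$, each of the $O(\log|X|)$ recursion levels loses only a constant factor in imbalance; folding the accumulated slack into the treewidth parameter, rather than the deletion budget, gives $tw(G \sm X') = O_M(\tfrac{t\log t}{\epsilon})$ while preserving $|X'| \le \epsilon|X|$. The most delicate point is ensuring that this approximation loss appears exclusively on the treewidth side and not in the size of $X'$, which forces the balanced-separator invariants to be expressed relative to $|X|$ rather than $|V(G)|$ throughout the recursion.
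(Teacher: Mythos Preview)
Your approach diverges substantially from the paper's, and it has a genuine gap in the construction of the skeleton graph $G^\sharp$.

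The core difficulty is that you cannot simultaneously ensure (i) $G^\sharp$ is a minor of $G$, (ii) $|V(G^\sharp)| = O(|X|)$, and (iii) each supernode lifts back to at most $t$ original vertices. Bags of $\mathcal{T}$ overlap, so ``contracting each bag'' is not a well-defined minor operation; and pruning/merging bags to bring their number down to $O(|X|)$ destroys property~(iii). Concretely, take $M = K_4$, let $G$ be the fan on $n+1$ vertices (a path $v_1 \cdots v_n$ plus a universal vertex $x$), and let $X = \{x\}$. Then $tw(G \setminus X) = 1$, so $t = 1$, and any tree decomposition of the path has $\Theta(n)$ bags, every one of which contains a neighbor of $x$. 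Nothing can be pruned as ``carrying no $X$-incidences,'' so either $|V(G^\sharp)| = \Theta(n)$ (and your bound $|S| = O_M(|V(G^\sharp)|/r)$ yields $|X'| = \Theta(n\epsilon)$ instead of $\le \epsilon$), or you merge bags and each supernode now corresponds to $\Theta(n)$ vertices, wrecking both the lift-back size bound and the ``additional $O(t)$ per node'' treewidth bound. The same tension appears in general: a single vertex in $X$ can be adjacent to arbitrarily many bags, so $|X|$ alone does not control the number of load-bearing supernodes.

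What the paper (following \cite{FLRS11}) actually uses is a global structural fact you never invoke: the \emph{truly sublinear treewidth} property, namely that for $M$-minor-free $G$ with $tw(G\setminus X)\le t$ one has $tw(G) = O_M(t\sqrt{|X|})$. This is where minor-freeness does its real work (via the grid-minor theorem), and it is what lets the recursive separator argument of their Lemma~1 produce $X'$ with $|X'|\le \epsilon|X|$ such that each component of $G\setminus X'$ has only $\gamma_t(\epsilon)=O_M((t/\epsilon)^2)$ vertices of $X$ and boundary of that size; plugging this back into the sublinear-treewidth bound on each component gives $tw(G\setminus X') = O_M(t/\epsilon)$. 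The constructive $\log t$ loss enters because the algorithmic version of their Lemma~1 only guarantees $\gamma_t(\epsilon)=O_M((t\log t/\epsilon)^2)$. Your proposal tries to bypass this by a direct skeleton/contraction-decomposition argument, but without the sublinear-treewidth leverage there is no mechanism that ties the number of supernodes to $|X|$ rather than to $|V(G)|$.
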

\begin{proof}
We retrace the proofs in \cite{FLRS11} to give explicit dependence on $t$. 
Their Corollary~2 says that if $G$ is $M$-minor-free and $tw(G \setminus X) \le t$, then $tw(G) = O_M(t\sqrt{|X|})$, i.e., $M$-minor-free graphs have "truly sublinear treewidth" with $\lambda = 1/2$.
For such a family and with assumptions as above, Lemma~1 guarantees that there exists $\gamma_t(\eps)$ and a set $X' \subseteq V(G),\, |X'| \le \epsilon |X|$,
such that every connected component $C$ of $G \setminus X'$ satisfies $|C \cap X| \le \gamma_t(\eps)$ and $|N(C)| \le \gamma_t(\eps)$.
Moreover, the proof indicates that $\gamma_t(\eps) = O_M\left(\left(\frac{t}{\eps}\right)^2\right)$ in the existential variant and $\gamma_t(\eps) = O_M\left(\left(\frac{t\log t}{\eps}\right)^2\right)$ in the constructive variant.
Injecting this bound into their Corollary~1 entails the claim.
\end{proof}
We now prove our meta-theorem for bidimensional problems. 
Roughly, a problem is bidimensional if the solution value for the problem on a $k \times k$ grid is $\Omega(k^2)$,
which is true for \HVD when $\hh$ is a class of graphs with bounded treewidth.
Introduced in Demaine et al.~\cite{DFHT05}, it has been a unifying theory for many algorithms in minor-free graphs. 
Demaine and Hajiaghayi~\cite{DH05} and later Fomin et al.~\cite{FLRS11} designed EPTASes for a large class of bidimensional problems.
For \HVD on $M$-minor free graphs $G$, there is an uniform algorithm runs in time $n^{f(\hh, M)}$~\cite{FLRS11}, and a non-uniform algorithm that runs in time $g(\hh, M) \cdot n^{O(1)}$~\cite{FLMS12}. 

As previously observed in~\cite{FLMS12}, the main bottleneck of the running time was reducing treewidth, so our algorithm for \TVD can be used to obtain improved running time for all bidimensional problem considered in~\cite{FLRS11}. 
We formally present EPTASes for \HVD with explicit running times that are uniform as long as the promised exact algorithm is. 
The only hidden factor we do not keep track of comes from the grid obstruction for excluded minor $M$.



\eptas*
\begin{proof}
Let us start with finding an $O(\log t)$-approximate solution $X$ with Corollary~\ref{cor:main_vertex} in time $f(n, O(t\log t)) + n^{O(1)}$.
Since $tw(G\setminus X) \le t$, we can use the constructive variant of Lemma~\ref{lem:minor_treewidth} with $\epsilon' = O(\epsilon / \log t)$ to find $X'$ such that $|X'| \le \epsilon' |X| \le \frac{\epsilon}{2}\cdot \OPT$ and $tw(G \setminus X') \le O_M(\frac{t\log^2 t}{\epsilon})$.
Though tree composition of $G \setminus X'$ is not explicitly given, we can find a decomposition of width $O(tw(G \setminus X'))$ in time $2^{O(tw(G \setminus X'))}\cdot n$~\cite{treewidth-apx} .

We apply the constructive variant of Lemma~\ref{lem:cut} to graph $G\setminus X'$ with $R = X$ and $\delta = O_M\left(\frac{t\log^3 t}{\epsilon^2}\right)$.
By choosing an appropriate constant, we obtain set $Y$
of size at most $\frac{O(tw(G\setminus X'))}{\delta}\cdot |X| \le \frac{\epsilon}{2}\cdot \OPT$ such that each connected component $C$ of $G \setminus (X' \cup Y)$ satisfies $|C \cap X| \le \delta$.
We launch the exact algorithm for \HVD on each component
with bound $\OPT \le \delta$ in total time $f(n,\delta)$ and return the sum of solutions together with $X' \cup Y$.
\end{proof}

We illustrate some applications of the above theorem.
For some problems, we additionally take advantage of the bidimensionality to show that the dependence on $k$ can be even subexponential.

\begin{corollary}
\label{cor:minor_pt}
\PT and \kVS admit an EPTAS on $M$-minor-free graphs with running time $\exp\{O_M\big(\frac{\sqrt{k}\log^4 k}{\epsilon^2}\big)\}\cdot n\log n + n^{O(1)}$.
\end{corollary}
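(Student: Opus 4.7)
The plan is to specialize Theorem~\ref{thm:eptas} to $M$-minor-free inputs, exploiting bidimensionality to sharpen the effective treewidth of the target class from the generic $O(k)$ down to $O_M(\sqrt{k})$, and then to plug in an exact FPT algorithm parameterized by the solution size.

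The pivotal observation is that any graph $H$ lying in the target class of either problem avoids a $(\sqrt{k}+1)\times(\sqrt{k}+1)$ grid as a minor: for \PT such a grid hosts a simple path through every vertex whose length exceeds $k$, and for \kVS it forms a single connected block of more than $k$ vertices. Combining this with the polynomial excluded-grid theorem for $M$-minor-free graphs~\cite{CC16} yields $tw(H) = O_M(\sqrt{k})$. Since $M$-minor-freeness is inherited by (induced) vertex-deleted subgraphs, the same bound applies to every $G \setminus S$ encountered in the recursion whenever $G \setminus S$ lies in the target class.

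Armed with this, I would retrace the proof of Theorem~\ref{thm:eptas} with $t := O_M(\sqrt{k})$ in place of the generic bound. Corollary~\ref{cor:main_vertex} produces an $O(\log k)$-approximate $X$ of size $O(\OPT\log k)$; the constructive Lemma~\ref{lem:minor_treewidth}, applied with $\epsilon' = \Theta(\epsilon/\log k)$, yields $X'$ with $|X'|\le \epsilon\OPT/2$ and $tw(G\setminus X') = O_M(\sqrt{k}\log^2 k/\epsilon)$; and Lemma~\ref{lem:cut} on $G\setminus X'$ with $R=X$ and $\delta := O_M(\sqrt{k}\log^3 k/\epsilon^2)$ produces a further $Y$ of size $\le \epsilon\OPT/2$ so that every component $C$ of $G\setminus(X'\cup Y)$ meets $X$ in at most $\delta$ vertices. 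Since $X$ is feasible, each such $C$ admits a solution of size at most $\delta$.

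To solve \PT (resp. \kVS) exactly on each component, I would invoke the $O(k^\delta n)$ algorithm of~\cite{Lee18} for \PT; for \kVS, the graph $C \setminus (C\cap X)$ already lies in the target class, so by the opening observation it has treewidth $O_M(\sqrt{k})$, and reinserting the $\le \delta$ vertices of $C\cap X$ into every bag gives a tree decomposition of $C$ of width $O_M(\delta)$ on which a standard DP with state per bag vertex in $\{\text{deleted}\} \cup \{1,\dots,k\}$ runs in $k^{O_M(\delta)}\cdot |C|$ time. In either case the per-component cost is $\exp(O_M(\sqrt{k}\log^4 k/\epsilon^2))\cdot |C|$, and summing over components and multiplying by the $O(\log n)$ outer iterations from Corollary~\ref{cor:main_vertex} gives the claimed bound. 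The main obstacle will be verifying that the adaptation of Theorem~\ref{thm:eptas} is sound when the class-wide treewidth bound holds only relative to $M$-minor-free inputs; this amounts to checking that every graph to which Lemma~\ref{lem:cut} or Lemma~\ref{lem:minor_treewidth} is applied is an induced subgraph of $G$ and hence inherits $M$-minor-freeness.
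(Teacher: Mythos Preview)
Your proposal is correct and follows essentially the same route as the paper: use bidimensionality to sharpen the effective treewidth to $t = O_M(\sqrt{k})$, then feed this into Theorem~\ref{thm:eptas} together with an $O(k^p n)$-time exact algorithm. For \PT you cite the same branching routine from~\cite{Lee18} that the paper uses.

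The one substantive difference is your exact algorithm for \kVS. The paper does not build a tree decomposition of each component; it simply observes that the same branching idea works---as long as some component has more than $k$ vertices, find any connected $(k{+}1)$-vertex subgraph and branch on which of its $k{+}1$ vertices to delete, to depth at most $p$---yielding $(k{+}1)^p n$ with no width computation at all. This is both simpler and avoids the issue that your DP sketch is underspecified: a state ``per bag vertex in $\{\text{deleted}\}\cup\{1,\dots,k\}$'' does not capture which bag vertices currently lie in the same component, which the DP must track (a partition of the non-deleted bag vertices plus a size in $\{1,\dots,k\}$ per block). Once that is fixed your DP does give the same $k^{O_M(\delta)}$ bound, so the approach is viable, just heavier than necessary.

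Your soundness worry is real but dissolves cleanly: take $\hh$ to be the intersection of the target class with the $M$-minor-free graphs. This $\hh$ is hereditary, has treewidth bounded by $O_M(\sqrt{k})$ outright, and on an $M$-minor-free input $G$ the problem \HVD for this $\hh$ coincides with the original problem---so Theorem~\ref{thm:eptas} applies verbatim with no need to retrace its proof.
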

\begin{proof}
Graphs with excluded minor $M$ that are $k$-path free or have each component size bounded by $k$ have treewidth of order $O_M(\sqrt{k})$~\cite{bidimensionality}.
There is an exact algorithm for \PT with running time $f_k(n,p) = O(k^p n)$, where $p$ is the bound on the solution size~\cite{Lee18}.
The same approach works for \kVS{}: as long as the graph has a component of size at least $k+1$ we can find a connected subgraph of size $k+1$.
At least one of its vertices must belong to the solution so we can perform branching with $k+1$ direct recursive calls and depth at most $p$.
The claim follows from Theorem~\ref{thm:eptas} with $f\left(n, O_M\left(\frac{\sqrt{k}\log^3 k}{\epsilon^2}\right)\right)= \exp\{O_M\big(\frac{\sqrt{k}\log^4 k}{\epsilon^2}\big)\}\cdot n$.
\end{proof}

\begin{corollary}
\label{cor:minor_tvd}
The \PVD and \TDVD problems admit EPTASes on $M$-minor-free graphs with running
time $\exp\{O_M\big(\frac{k^2\log^3 k}{\epsilon^2}\big)\} \cdot n\log n
+ n^{O(1)}$. Also,
\TVD admits an analogous result with running time $\exp\{O_M\big(\frac{k^3\log^6 k}{\epsilon^4}\big)\} \cdot n\log n + n^{O(1)}$.
\end{corollary}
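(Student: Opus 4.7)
The plan is to apply Theorem~\ref{thm:eptas} directly to each of \PVD, \TDVD, and \TVD, choosing for $\hh$ the hereditary class of graphs with pathwidth (resp.\ treedepth, treewidth) bounded by $k$. All three classes have treewidth at most $k$, so the parameter $t$ in Theorem~\ref{thm:eptas} equals $k+1$, and the exact algorithms required as input are precisely those supplied by Lemma~\ref{lem:treewidth-exact}. The whole proof then reduces to substituting the three running times into the template $f(n, O_M(t\log^3 t/\eps^2))\cdot n\log n + n^{O(1)}$ from Theorem~\ref{thm:eptas} and simplifying.

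For \TDVD, Lemma~\ref{lem:treewidth-exact} provides an exact algorithm running in time $2^{O((k+p)k)}\cdot n$ for solution size bound $p$. Invoking Theorem~\ref{thm:eptas} sets $p = O_M(k\log^3 k/\eps^2)$, so $k+p = O_M(k\log^3 k/\eps^2)$, and the exponent becomes $O_M(k^2\log^3 k/\eps^2)$, matching the claim. For \PVD, the exact running time is $2^{O((k+p)(k+\log(k+p)))}\cdot n$; with the same choice of $p$ we have $\log(k+p) = O(\log(k/\eps))$, so the factor $(k+\log(k+p))$ remains $O(k)$ up to a lower-order logarithmic-in-$\eps$ term that is absorbed into the $1/\eps^2$ factor, yielding again exponent $O_M(k^2\log^3 k/\eps^2)$. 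For \TVD, plugging the same $p$ into $2^{O((k+p)^2 k)}\cdot n$ produces exponent $O_M((k\log^3 k/\eps^2)^2\cdot k) = O_M(k^3\log^6 k/\eps^4)$, which is the stated bound.

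I do not anticipate a serious obstacle here: the three classes are hereditary and have treewidth at most $k$ (so Theorem~\ref{thm:eptas} applies), and the exact algorithms from Lemma~\ref{lem:treewidth-exact} are already parameterized jointly by the width $k$ and the solution size $p$, which is exactly the joint parameterization Theorem~\ref{thm:eptas} needs. The only bookkeeping point that requires a line of justification is the treatment of the $\log(k+p)$ term in the \PVD bound, which will be swept into the constants (or into the $1/\eps^2$ factor) under the standing assumption $\eps < 1$ without affecting the stated asymptotic form. The outer $n\log n$ overhead and $n^{O(1)}$ additive term are inherited verbatim from Theorem~\ref{thm:eptas}.
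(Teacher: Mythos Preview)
Your proposal is correct and follows essentially the same approach as the paper: apply Theorem~\ref{thm:eptas} with the exact algorithms from Lemma~\ref{lem:treewidth-exact} and substitute $p = O_M(k\log^3 k/\eps^2)$ into each of the three running-time bounds. The paper's own proof is in fact terser than yours---it simply lists the three functions $f^{pw}_k$, $f^{td}_k$, $f^{tw}_k$ and declares the result---so your explicit bookkeeping (including the remark on the $\log(k+p)$ term for \PVD) is, if anything, more thorough.
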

\begin{proof}
We apply Lemma~\ref{lem:treewidth-exact}, providing an exact algorithm for these problems, to
Theorem~\ref{thm:eptas}.
The respective running times of these routines are:
\begin{OneLiners}
\item $f^{pw}_k(n,p) =2^{O\left((k+p)\cdot(k + \log(k+p))\right)} \cdot n$, 
\item $f^{td}_k(n,p) =2^{O\left((k+p)k)\right)} \cdot n$, and
\item $f^{tw}_k(n,p) =2^{O\left((k+p)^2k)\right)} \cdot n$. 
\end{OneLiners}
This completes the proof.
\end{proof}


\section{Edge Deletion}
For \kES and \kSES the best previous approximation algorithm gave $O(\log k)$-approximation~\cite{Lee18}. 
We present an improved $(2 + \eps)$-approximation algorithm for \kES in Section~\ref{sec:kes}, 
and give two extensions to \kSES in Section~\ref{sec:kses_local} and~\ref{sec:kses_uml} with (almost) the same approximation ratio. 
In Section~\ref{sec:fed}, we apply these algorithms for \HED and study further applications in Section~\ref{subsec:edge_applications}.
Finally, in Section~\ref{sec:inapprox}, we prove inapproximability results for \kES for $k = 3$ which implies inapproximability for all edge deletion problems considered in this paper. 

\subsection{\kES}
\label{sec:kes}
We first give a $(2 + \eps)$-approximation algorithm for \kES that runs in time $2^{O(k \log k)} n^{O(1)}$, proving the first part of Theorem~\ref{thm:kes}.
It can be proved in two ways. 
After showing that we can assume that the maximum degree is $O(k)$ without loss of generality, 
the first proof is a reduction to \UML studied by Kleinberg and Tardos~\cite{KT02},
where 2-approximation is achieved via the standard LP relaxation. 
The second proof is based on a direct local search algorithm,
which creates a new component with at most $k$ vertices whenever it improves the overall cost. 

These two proofs lead to two approximation algorithms for \kSES with similar approximation ratios that run in time 
$f(k, \deg(G)) \cdot n^{O(1)}$ and $n^{f(k)}$ respectively.
The first algorithm is based on the reduction to \UML extended by the technique of {\em important separators}~\cite{MR14}, 
and the second algorithm extends the local search in the second proof by efficiently computing the best local move.

For \kES, we present the local search based algorithm.
We start by noting that considering only bounded degree graphs suffices,
since in the optimal solution, large degree vertices will lose almost all incident edges.

\begin{claim}
\label{claim:deg_reduction}
For any $\eps > 0$, 
an $\alpha$-approximation algorithm for \kES that runs in time $f(k, \deg(G)) n^{O(1)}$
implies an $(\alpha + \eps)$-approximation algorithm for \kES that runs in time $f(k, 2 k / \eps) n^{O(1)}$. 
\end{claim}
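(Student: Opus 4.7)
The plan is to use a straightforward preprocessing step: define the set of \emph{high-degree} vertices $H := \{v \in V(G) : \deg(v) > 2k/\eps\}$, force the set $D := \{e \in E(G) : e \text{ is incident to some } v \in H\}$ into the solution, and then apply the given $\alpha$-approximation algorithm to the residual instance $(G \setminus D, k)$, which now has maximum degree at most $2k/\eps$. The output is $D \cup Y$, where $Y$ is the output of the approximation algorithm on $G \setminus D$. The residual running time is $f(k, 2k/\eps) \cdot n^{O(1)}$ as claimed, so only the approximation guarantee needs justification.

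The key structural observation is that in any feasible solution to \kES on $G$, the component containing a vertex $v$ has at most $k$ vertices, hence at most $k-1$ edges incident to $v$ survive. Applied to $v \in H$, this means the optimum $\OPT$ must cut at least $\deg(v) - (k-1)$ edges at $v$. Summing over $H$ and accounting for the fact that an edge is counted at most twice in such a vertex-indexed sum, we obtain
\[
\OPT \;\geq\; \tfrac{1}{2}\sum_{v \in H}(\deg(v) - k + 1) \;\geq\; \tfrac{|H|}{2}\left(\tfrac{2k}{\eps} - k + 1\right).
\]
On the other hand, the edges of $D$ that are \emph{not} already in $\OPT$ number at most $(k-1)|H|$, since for each $v \in H$ at most $k-1$ edges at $v$ can be outside $\OPT$. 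Combining these two bounds, the extra cost $|D \setminus \OPT|$ is at most $\frac{2(k-1)}{2k/\eps - k + 1}\cdot \OPT = O(\eps)\cdot \OPT$.

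To finish, I would observe that $\OPT \setminus D$ is a feasible solution for the residual instance on $G \setminus D$ (removing more edges only shrinks connected components), so the residual optimum $\OPT'$ satisfies $\OPT' \leq \OPT$. Therefore $|Y| \leq \alpha\cdot \OPT' \leq \alpha\cdot\OPT$, and
\[
|D \cup Y| \;=\; |D \setminus \OPT| + |D \cap \OPT| + |Y \setminus \OPT|  \;\leq\; O(\eps)\cdot \OPT + \alpha\cdot \OPT.
\]
Replacing $\eps$ by a suitable constant fraction of itself absorbs the hidden constant and yields the claimed $(\alpha + \eps)$-approximation, with the threshold rescaled by only a constant factor (which does not change the asymptotic form $f(k, 2k/\eps)$).

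The main obstacle, if any, is purely bookkeeping: making sure the constants in the ratio $2(k-1)/(2k/\eps - k + 1)$ work out after the rescaling, and verifying the $\tfrac{1}{2}$ double-counting factor does not weaken the bound below $O(\eps)$. Conceptually there is no difficulty — the argument is essentially that any very-high-degree vertex is nearly fully isolated by $\OPT$ anyway, so committing to isolate it entirely loses at most a $k-1$ additive amount per vertex, which is a lower-order term compared to the $\Omega(k/\eps)$ cost that $\OPT$ already pays at each such vertex.
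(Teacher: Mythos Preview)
Your approach is essentially identical to the paper's: delete all edges incident to vertices of degree exceeding $2k/\eps$, then run the bounded-degree algorithm on the remainder. Your bound $|D\setminus\OPT|\le O(\eps)\cdot\OPT$ via the global count $\sum_{v\in H}(\deg(v)-k+1)\le 2\,\OPT$ is equivalent to the paper's per-vertex charging.

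However, your final accounting has a slip. The displayed equality
\[
|D\cup Y|\;=\;|D\setminus\OPT|+|D\cap\OPT|+|Y\setminus\OPT|
\]
is false: since $D$ and $Y$ are disjoint, $|D\cup Y|=|D|+|Y|$, and you have dropped the $|Y\cap\OPT|$ term. More importantly, using only $\OPT'\le\OPT$ to get $|Y|\le\alpha\,\OPT$ is too weak, because it leaves the term $|D\cap\OPT|$ uncancelled, and that term can be as large as $\OPT$; as written you would only obtain $(\alpha+1+O(\eps))\OPT$. The fix is to use the sharper observation you already made: since $\OPT\setminus D$ is feasible for the residual instance, $\OPT'\le|\OPT\setminus D|=\OPT-|D\cap\OPT|$, and hence
\[
|D|+|Y|\;\le\;|D\setminus\OPT|+|D\cap\OPT|+\alpha\big(\OPT-|D\cap\OPT|\big)\;\le\;O(\eps)\cdot\OPT+\alpha\cdot\OPT,
\]
where the last step uses $\alpha\ge 1$ to discard the $-(\alpha-1)|D\cap\OPT|$ term. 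With this correction the proof is complete and matches the paper.
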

\begin{proof}
Suppose that there exists an $\alpha$-approximation algorithm for \kES that runs in time $f(k, \deg(G)) n^{O(1)}$.
To solve \kES for a unbounded degree graph, given a graph $G = (V(G), E(G))$,
we remove all the edges incident on vertices whose degree is more than $2 k / \eps$. 

Let $S^* \subseteq E(G)$ be the optimal solution. 
Every vertex $v$ in $G \setminus S^*$ has degree at most $k - 1$, 
so for $v$ with $\deg(v) > 2 k / \eps$, 
the above operation deletes at most $k - 1$ edges not in $S^*$, 
which is at most $\eps / 2$ fraction of edges in $S^*$ incident to $v$. 
Therefore, this operation deletes at most $\eps |E(S^*)| = \eps \cdot \OPT$ edges overall outside $S^*$. 
Running the bounded degree algorithm on the resulting graph proves the claim. 
\end{proof}

Now we give an algorithm for \kES.
\begin{lemma} [(i) of Theorem~\ref{thm:kes}]
There is a $(2 + \eps)$-approximation algorithm for \kES that runs in time $2^{O(k \log (k/\eps))}n^{O(1)}$. 
\end{lemma}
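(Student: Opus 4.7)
The plan is to combine Claim~\ref{claim:deg_reduction} with a local search. First I would apply that claim with parameter $\eps/2$ to reduce to the case $\deg(G) \le 4k/\eps$, incurring only an additive $\eps/2$ factor in the approximation. Hence it suffices to design a true $2$-approximation for \kES in time $2^{O(k\log(k/\eps))} n^{O(1)}$ on graphs of maximum degree $O(k/\eps)$.

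Maintain a feasible solution $S \subseteq E(G)$ (initially $S = E(G)$), whose components in $G \setminus S$ all have at most $k$ vertices. The local move is to pick a connected subgraph $C \subseteq V(G)$ with $|C| \le k$ and replace $S$ by $S' := (S \setminus E(G[C])) \cup \partial(C)$. This declares $C$ a component and leaves every other component a subset of a previous one intersected with $V(G) \setminus C$, so the size bound is preserved. The cost change is $|\partial(C) \setminus S| - |E(G[C]) \cap S|$, and I perform the swap whenever this is negative. Since $|S|$ is a nonnegative integer, at most $|E(G)|$ improvements can occur.

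To enumerate candidates I would use the standard fact that in a graph of maximum degree $d$, the connected subgraphs of size at most $k$ containing a fixed vertex can be listed in time $d^{O(k)}$ by a BFS-style expansion. With $d = O(k/\eps)$ this gives $n \cdot 2^{O(k\log(k/\eps))}$ candidates to check per iteration, meeting the stated runtime.

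The main content is to show that every local optimum $S$ is a $2$-approximation, which is the step I expect to be the crux. Let $S^*$ be optimal and $C_1^*, \dots, C_p^*$ the components of $G \setminus S^*$, each of size at most $k$. Applying local optimality to each $C_i^*$ yields
\[
|\partial(C_i^*) \setminus S| \ge |E(G[C_i^*]) \cap S|.
\]
Summing over $i$ and using $\sum_i |\partial(C_i^*)| = 2|S^*|$ (each edge of $S^*$ contributes to exactly two boundaries), together with the fact that each edge of $S \setminus S^*$ lies inside some $G[C_i^*]$, one obtains $2|S^* \setminus S| \ge |S \setminus S^*|$. Rearranging gives $|S| - |S^*| = |S \setminus S^*| - |S^* \setminus S| \le |S^* \setminus S| \le |S^*|$, so $|S| \le 2|S^*|$. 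The only subtlety is verifying that the local move never breaks the $k$-bound on component sizes, which holds because $S'$ retains every edge of $S$ with both endpoints outside $C$, ensuring that the partition restricted to $V(G) \setminus C$ only refines.
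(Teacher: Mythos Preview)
Your proposal is correct and follows essentially the same approach as the paper: reduce the degree via Claim~\ref{claim:deg_reduction}, then run local search over connected subgraphs of size at most $k$, and analyze a local optimum by testing the optimal components $C_i^*$ as candidate moves. The only cosmetic differences are that you phrase everything in terms of the edge set $S$ rather than the partition $(C_1,\dots,C_m)$, and you apply the degree reduction before rather than after the $2$-approximation analysis; the summation argument yielding $2|S^*\setminus S|\ge |S\setminus S^*|$ and hence $|S|\le 2|S^*|$ is identical to the paper's.
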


\begin{proof}
Our local search algorithm maintains the partition $(C_1, \dots, C_m)$ of $V(G)$ where $|C_i| \leq k$ for each $i$. 
This corresponds to deleting edges in $E(C_1, \dots, C_m)$.
In each iteration, the algorithm considers every possible part $C \subseteq V$ of size at most $k$ such that the induced subgraph $G[C]$ is connected. 
There are at most $n\cdot \deg(G)^{k}$ such components to consider. 
For each $C$, we consider the new partition where $C$ is added to the partition, and each previous part $C_i$ becomes $C_i \leftarrow C_i \setminus C$.
(Delete empty part from the partition.) 
If the new partition cuts fewer edges, implement this change and repeat until there is no possible improvement. 
Since each iteration strictly improves the current solution, the total running time is bounded by $\deg(G)^k n^{O(1)}$. 

Let $(C_1, \dots, C_m)$ be the resulting partition output by the local search and $(C^*_1, \dots, C^*_{m^*})$ be the optimal partition. 
For each $C^*_i$, either $C^*_i$ is a part in $(C_1, \dots, C_m)$, or the local move with $C^*_i$ does not improve $(C_1, \dots, C_m)$.
In the latter case, 
as the local improvement with $C^*_i$ newly deletes edges in $\partial C_i^* \setminus E(C_1, \dots, C_m)$ and 
restores currently deleted edges in $E(G[C^*_i]) \cap E(C_1, \dots, C_m)$, we can conclude that 
\[
|\partial C_i^*  \setminus E(C_1, \dots, C_m)| \geq |E(G[C^*_i]) \cap E(C_1, \dots, C_m)|. 
\]
Note that in the former case, the above is trivially satisfied. 
If we add the above for every $i = 1, \dots, m^*$, 
the left-hand side is two times the number of edges in $E(C^*_1, \dots, C^*_{m^*})  \setminus E(C_1, \dots, C_m)$,
and the right-hand side is the number of edges in $E(C_1, \dots, C_m) \setminus E(C^*_1, \dots, C^*_{m^*})$. 
Therefore, 
\[
2|E(C^*_1, \dots, C^*_{m^*})  \setminus E(C_1, \dots, C_m)| \geq 
|E(C_1, \dots, C_m) \setminus E(C^*_1, \dots, C^*_{m^*})| \Rightarrow 2\OPT \geq |E(C_1, \dots, C_m)|.
\]
Therefore, this algorithm runs in time $2^{O(k \log \deg(G))} n^{O(1)}$ and gives a $2$-approximation. 
Applying Claim~\ref{claim:deg_reduction}, 
for any $\eps > 0$, we have a $(2 + \eps)$-approximation algorithm that runs in time $2^{O(k \log (k/\eps))} n^{O(1)}$ for general graphs.
\end{proof}

\subsection{\kSES in Time $n^{k+O(1)}$}
\label{sec:kses_local}
Note that the above local search algorithm, without the degree reduction step, also implies a $2$-approximation algorithm in time $n^{k+O(1)}$,
since there are at most $n^k$ subsets of $V$ of size at most $k$. 
For \kSES where each part can contain much more than $k$ vertices as long as it has at most $k$ vertices from $R$, even the degree bound 
does not yield a polynomial bound on the number of choices we need to consider in the local search algorithm. 
For example, given a subset $R' \subseteq R$ with $|R'| \leq k$, there can be exponentially many $C \subseteq V$ such that $C \cap R = R'$ and 
$G[C]$ is connected. 

The modified local search algorithm for \kSES, in each iteration, finds the {\em best} local improvement over all possible subsets. 
The following lemma shows that it can be done in polynomial time. It immediately proves (ii) of Theorem~\ref{thm:kes} which gives a $2$-approximation algorithm for \kSES in time $n^{k+O(1)}$. 

\begin{lemma}
Let $(C_1, \dots, C_m)$ be a partition of $V$ and $\emptyset \neq R' \subsetneq R$. 
There is a polynomial time algorithm to find $C^* \subseteq V$ that minimizes the cost
$|E(C_1 \setminus C, \dots, C_m \setminus C, C)|$ over every set $C$ that satisfies $R \cap C = R'$. 
\label{lem:kses_mincut}
\end{lemma}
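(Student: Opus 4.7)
The plan is to reduce the problem to a polynomially-solvable minimum $s$--$t$ cut instance. First I would analyze how the objective changes as a function of $C$. Let $\Delta_{\text{old}} = |E(C_1,\dots,C_m)|$ denote the original cut cost, and classify each edge $\{u,v\}$ according to whether its endpoints lie in the same part $C_i$ (``same-part'' edges) or in different parts (``cross-part'' edges). A direct case analysis on the new partition $(C_1\setminus C,\dots,C_m\setminus C,C)$ shows
\begin{equation*}
|E(C_1\setminus C,\dots,C_m\setminus C,C)| \;=\; \Delta_{\text{old}} \;-\; |A(C)| \;+\; |B(C)|,
\end{equation*}
where $A(C)$ is the set of cross-part edges with \emph{both} endpoints in $C$ and $B(C)$ is the set of same-part edges with \emph{exactly one} endpoint in $C$. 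Thus, up to an additive constant, minimizing over $C$ amounts to minimizing $|B(C)| - |A(C)|$ subject to the hard constraint $C\cap R = R'$.

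Next I would encode this as a min $s$--$t$ cut. Create source $s$ (representing ``in $C$'') and sink $t$ (``not in $C$''). Enforce the constraint $C\cap R = R'$ by adding, for each $v \in R'$, an arc $(s,v)$ of capacity $\infty$, and for each $v \in R\setminus R'$, an arc $(v,t)$ of capacity $\infty$. For each same-part edge $\{u,v\}$, add an undirected capacity-$1$ edge between $u$ and $v$; this contributes $1$ to the cut precisely when $u$ and $v$ end up on different sides, faithfully accounting for $|B(C)|$.

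The main obstacle is the negative term $-|A(C)|$: a standard min-cut can charge penalties but cannot directly reward configurations. For each cross-part edge $e=\{u,v\}$ I introduce an auxiliary vertex $a_e$ and add the directed arcs $(s,a_e)$ with capacity $1$, and $(a_e,u),(a_e,v)$ with capacity $\infty$. If both $u,v$ lie on the $s$-side, placing $a_e$ on the $s$-side costs $0$; otherwise, at least one of the $\infty$-capacity arcs forces $a_e$ to the $t$-side, severing $(s,a_e)$ at cost $1$. Hence this gadget contributes exactly $1 - [u,v\in C]$ to the cut value, matching $|E^{\text{cross}}| - |A(C)|$ summed over all cross-part edges.

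Finally I would observe that any finite $s$--$t$ cut in the constructed network respects the forced assignments from the $\infty$ arcs, so it corresponds to a valid $C$ with $C\cap R = R'$ by reading off the original vertices on the $s$-side; and conversely every such $C$ extends to an $s$--$t$ cut of the claimed value by placing each $a_e$ optimally. Therefore the minimum $s$--$t$ cut value equals $|B(C^*)| - |A(C^*)| + |E^{\text{cross}}|$, and $C^*$ is recovered in polynomial time from any max-flow/min-cut algorithm on a network with $O(n+m)$ vertices and $O(m)$ edges.
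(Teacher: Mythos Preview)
Your proposal is correct and follows essentially the same approach as the paper: both analyze the objective as $|B(C)|+\bigl(|E^{\text{cross}}|-|A(C)|\bigr)$ and reduce to a min $s$--$t$ cut, handling same-part edges as unit-capacity edges and encoding each cross-part edge with a small gadget that charges $1$ unless both endpoints land on the $s$-side. The only cosmetic differences are that the paper contracts $R'$ to $s$ and $R\setminus R'$ to $t$ (rather than using $\infty$-arcs) and its gadget uses an auxiliary vertex joined by three \emph{unit}-capacity undirected edges to $s,u,v$ instead of your directed $\infty$-arc variant; both gadgets realize the same cost function.
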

\begin{proof}
From $G$, merge all vertices in $R'$ to a vertex $s$, and merge all vertices in $R \setminus R'$ to a vertex $t$, while creating parallel edges if needed. Let $G_1$ be the resulting graph. 
Let $B \subseteq E(G_1)$ be the edges cut by the current solution. Call them {\em blue} edges. 
Finding the best $C^*$ in $G$ is equivalent to finding the best $s$-$t$ cut $(S, V(G_1) \setminus S)$ in $G_1$ $(s \in S)$
that minimizes 
\[
| (\partial_{G_1} S) \setminus B | + | B \setminus G_1[S] |,
\]
which is exactly the total cost of the new partition. The first term is the number of edges that are newly deleted by adding $S$ to the partition, 
and the second term is the number of the previously deleted edges minus the number of the undeleted edges in $S$. 

We find the minimum $S$ by reducing to the classic \STCUT problem. Starting from $G_1$, we do the following operations to obtain $G_2$.
\begin{itemize}
\item For each non-blue edge, do not change anything.
\item For each blue edge $e = (u, v)$ with $u, v \in V \setminus \{s, t \}$ we introduce a new vertex $t_e$ and replace $(u, v)$ by three edges $(s, t_e), (u, t_e), (v, t_e)$. 
\begin{itemize}
\item If $u, v \in S$, we can put $t_e$ to $S$ and do not cut any edge. 
\item If $u \notin S$ and $v \notin S$, we cut one edge by putting $t_e$ to $V \setminus S$. 
\item If $|S \cap \{ u, v \}| = 1$, we cut one edge by putting $t_e$ to $S$. 
\end{itemize}
\item For each blue edge $e = (s, v)$ with $v \neq t$, do not change anything.
\begin{itemize}
\item If $v \in S$, we do not cut any edge. 
\item If $v \notin S$, we  cut one edge. 
\end{itemize}
\item For each blue edge $e = (u, t)$ with $u \neq s$, keep this edge and add one more edge $(s, u)$. 
\begin{itemize}
\item We cut one edge whether $u \in S$ or not. 
\end{itemize}
\end{itemize}
Note that for each $(u, v) \in E(G_1)$, we cut exactly one edge in $G_2$ if (1) it is non-blue and cut by $S$, or (2) it is blue and not completely contained in $S$. 
This is exactly the objective function that we want to minimize in $S_1$. 
Therefore, the minimum $s$-$t$ cut in $G_2$ gives the optimal solution in $G_1$, which in turn gives the optimal local improvement $C^*$ in $G$. 
\end{proof}

\subsection{\kSES Parameterized by Degree}
\label{sec:kses_uml}

In this section, we provide a $(2+\eps)$-approximation algorithm for \kSES parameterized by $k$ and the maximum degree of the graph, proving (iii) of Theorem~\ref{thm:kes}. 
Throughout this section, we will fix $\eps>0$ and the maximum degree $d$ of the graph. Our algorithm has three main steps. First, we will reduce our search space of solutions to \kSES to a smaller set of \textit{canonical} solutions, which behave more nicely. In particular, in each canonical solution $S\subseteq E(G)$, every connected component of $G \setminus S$ containing a vertex in $R$ has a small number of edges leaving the component. We will show that there always exists a canonical solution of size $\le(1+\eps)\OPT$. Then, we will find a $2$-approximation to the best canonical solution by reducing to the \UML problem, which we will define later. Formulating the \UML instance requires another ingredient, the concept of \textit{important cuts}, a tool popular in FPT algorithm design.

We begin with canonical solutions.

\begin{defn}
A solution $S\subseteq E(G)$ to \kSES is called $\eps$-\textit{canonical} if, for each connected component $C\subseteq V(G)$ of $G \setminus S$ satisfying $C\cap R\ne\emptyset$, we have $|\partial C|\le 2k\deg(G)/\eps$.
\end{defn}

\begin{obsr}
There exists a $\eps$-canonical solution with size at most $(1+\eps)\OPT$.
\end{obsr}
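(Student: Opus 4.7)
The plan is to begin with an optimal solution $S^*$ of size $\OPT$ and patch it locally at every ``bad'' component, i.e.\ every component $C$ of $G \setminus S^*$ such that $C \cap R \ne \emptyset$ and $|\partial C| > 2k \deg(G)/\eps$. The key accounting fact is that each edge of $S^*$ lies in $\partial C$ for at most two components $C$ of $G \setminus S^*$, so
\[
\sum_{C : C \cap R \ne \emptyset} |\partial C| \;\le\; 2|S^*| \;=\; 2 \OPT,
\]
which means the number of bad components is at most $\eps \OPT /(k \deg(G))$. This bound is what will finance the modification.

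For a bad component $C$, the patch I would use is to additionally delete from $G$ every edge incident to a vertex of $C \cap R$ whose other endpoint lies inside $C$. After this, each vertex $v \in C \cap R$ becomes a singleton component of the new $G \setminus S$, while the vertices of $C \setminus R$ may form arbitrary components but, crucially, none of them contain a vertex of $R$ and therefore do not participate in the canonicity requirement. Because $|C \cap R| \le k$ and each such vertex has degree at most $\deg(G)$, the number of edges added when fixing a single bad component is at most $k \deg(G)$. Since the patches at different bad components of $G \setminus S^*$ touch disjoint vertex sets, they do not interact with each other and do not affect any of the non-bad components.

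Putting it together, the total number of edges added across all bad components is at most
\[
\frac{\eps \OPT}{k \deg(G)} \cdot k \deg(G) \;=\; \eps \OPT,
\]
so the modified solution $S$ satisfies $|S| \le (1+\eps) \OPT$. To verify canonicity of $S$, consider any component $C'$ of $G \setminus S$ with $C' \cap R \ne \emptyset$: either $C'$ coincides with a non-bad component of $G \setminus S^*$, in which case $|\partial C'| \le 2k \deg(G)/\eps$ by definition, or $C' = \{v\}$ is a singleton created by the patching step, in which case $|\partial C'| = \deg(v) \le \deg(G) \le 2k \deg(G)/\eps$ (using $\eps \le 2k$, which we may clearly assume).

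The only delicate point to check is that patching preserves the canonicity of all other $R$-containing components; this is where the argument could have broken had the patch removed boundary edges of $C$, but since we are only adding edges and only internal ones of $C$, the rest of the graph's partition structure is untouched. No additional structural lemmas are needed beyond the counting inequality above.
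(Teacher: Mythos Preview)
Your proof is correct and follows essentially the same approach as the paper: start from an optimal $S^*$, isolate the $R$-vertices inside each bad component by deleting their incident (internal) edges, and bound the total added cost by $\eps\,\OPT$. The only cosmetic difference is that the paper phrases the accounting as a charging argument (charging the $\le k\deg(G)$ extra edges evenly to the $>2k\deg(G)/\eps$ boundary edges of a bad component, with each edge of $S^*$ charged at most twice), whereas you first bound the number of bad components via $\sum_C |\partial C|\le 2\,\OPT$ and then multiply; these are equivalent, and your explicit verification of canonicity for the resulting singleton components is a detail the paper leaves implicit.
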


\begin{proof}
Consider the optimal solution $S^*$, which we modify as follows. For each component $C\subseteq V(G)$ with $|\partial C|>2k\deg(G)/\eps$, further delete all edges incident to each vertex in $R\cap C$. We delete $\le k\deg(G)$ edges, which can be charged evenly to the boundary edges of $C$, so that each edge gets charged $\le \eps/2$. Every edge gets charged twice, so the total number of additional edges deleted is $\le\eps\cdot \OPT$. It is clear that $S^*$ with these additional edges deleted is $\eps$-canonical.
\end{proof}

At this point, we are looking for a solution that separates the graph into pieces with a small number of vertices in $R$ and small boundary. Our next step is to provide a ``cover'' for all possible such pieces, which we will use in our \UML reduction. In particular, we look for a set $\mathcal C\subseteq 2^V$ of subsets of vertices of small size such that every piece $C\subseteq V$ that we might possibly look for satisfies $C\subseteq C'$ for some $C'\in\mathcal C$.

\begin{lemma}\label{lem:SmallCover}
There exists a set $\mathcal C\subseteq 2^V$ of subsets of vertices of size $4^{O(kM)}\cdot n$ such that (1) every subset $C\in \mathcal C$ satisfies $1\le|C\cap R|\le k$, and (2) for every connected component $C\subseteq V$ satisfying $1\le|C\cap R|\le k$ and $|\partial C|\le M$, we have $C\subseteq C'$ for some $C'\in\mathcal C$.
\end{lemma}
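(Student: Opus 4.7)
The plan is to iteratively apply Marx's \emph{important cuts}: in any graph there are at most $4^p$ important $(X, Y)$-cuts of size at most $p$, enumerable in $4^p \cdot n^{O(1)}$ time, and a cut $F$ is important if no other cut $F'$ of size $\le |F|$ separating $X$ from $Y$ yields a strictly larger reachable set from $X$. Set $d := \deg(G)$ and working budget $M' := M + kd$. I would build $\mathcal C$ in two phases. In phase one, for each $v \in R$, inductively construct families $\mathcal S^v_0, \ldots, \mathcal S^v_{k-1}$ of $R$-subsets containing $v$ with $|R'| = \ell + 1$ for $R' \in \mathcal S^v_\ell$: start from $\mathcal S^v_0 = \{\{v\}\}$, and for $\ell \ge 1$, for each $R' \in \mathcal S^v_{\ell-1}$ and each important $(R', R \setminus R')$-cut $F$ in $G$ of size $\le M'$, compute the reachable set $D_F$ of $R'$ in $G \setminus F$, and for each $w \in (N(D_F) \setminus D_F) \cap R$ add $R' \cup \{w\}$ to $\mathcal S^v_\ell$. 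In phase two, for each $R'$ produced in phase one and each important $(R', R \setminus R')$-cut of size $\le M$, add the corresponding reachable set of $R'$ to $\mathcal C$.

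For the size analysis, every $D_F$ in phase one has boundary exactly $F$ by minimality of important cuts, so at most $M'$ boundary edges and at most $M'$ outside $R$-neighbors to branch on; thus each $\mathcal S^v_\ell$ multiplies by $4^{M'} \cdot M'$ per iteration, yielding $|\mathcal S^v_\ell| \le 4^{O(\ell M')}$. Multiplying by the $4^M$ cuts per $R'$ in phase two and the $\le n$ starting vertices $v$, one gets $|\mathcal C| \le n \cdot 4^{O(kM + k^2 d)}$. In the regime where this lemma will be invoked ($M = 2kd/\eps$, so $M \ge kd$), we have $k^2 d \le kM$ and the exponent collapses to $O(kM)$, matching the claimed $n \cdot 4^{O(kM)}$.

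The main obstacle, I expect, is the covering property. Fix a target $C$ with $v \in C \cap R$, $|C \cap R| = j \le k$, $|\partial C| \le M$, and order $C \cap R$ as $r_1 = v, r_2, \ldots, r_j$ by BFS distance from $v$ inside $G[C]$. For each $\ell$, let $C^{(\ell)}$ denote $v$'s connected component in $G[C \setminus \{r_{\ell+1}, \ldots, r_j\}]$. The BFS order ensures that shortest $v$-$r_i$ paths in $G[C]$ cannot pass through any $r_{i'}$ with $i' > i$ (else $r_{i'}$ would have been smaller in BFS order), so $R'_\ell := \{r_1, \ldots, r_\ell\} \subseteq C^{(\ell)}$ and the predecessor of $r_{\ell+1}$ on its shortest path to $v$ also lies in $C^{(\ell)}$, making $r_{\ell+1}$ a neighbor of $C^{(\ell)}$. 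The key combinatorial observation is that every edge in $E(C^{(\ell)}, C \setminus C^{(\ell)})$ must be incident to some $r_i$ with $i \ge \ell+1$, since otherwise both endpoints would be connected to $v$ inside $G[C \setminus \{r_{\ell+1}, \ldots, r_j\}]$. Hence $F_\ell := \partial C \cup E(C^{(\ell)}, C \setminus C^{(\ell)})$ is an $(R'_\ell, R \setminus R'_\ell)$-cut of size at most $M + (j-\ell)d \le M'$ whose reachable set from $R'_\ell$ is exactly $C^{(\ell)}$.

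The important cut dominating $F_\ell$ is therefore enumerated at iteration $\ell$ of phase one; its reachable set contains $C^{(\ell)}$, and since $r_{\ell+1} \in N(C^{(\ell)}) \cap R$, the subset $R'_{\ell+1}$ is placed in $\mathcal S^v_\ell$. Iterating, $R'_j = C \cap R$ appears in the family built during phase one. Finally, $\partial C$ itself is an $(R'_j, R \setminus R'_j)$-cut of size $\le M$, so in phase two the important cut of size $\le M$ dominating $\partial C$ is enumerated, and its reachable set is a set $C' \supseteq C$ belonging to $\mathcal C$. Beyond these main steps, the details I expect to require most care are verifying that the important cut of Marx is indeed the correct object to dominate $F_\ell$ at intermediate iterations (rather than, e.g., at each vertex individually), and tracking how the $(j-\ell)d$ "extra" cut budget per step combines with $\partial C$ to remain within the global budget $M' = M + kd$.
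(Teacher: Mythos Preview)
Your argument is essentially correct for the regime you flag ($M \ge k\deg(G)$), but it does not prove the lemma as stated: your size bound is $4^{O(kM+k^2\deg(G))}\cdot n$, and you only recover $4^{O(kM)}\cdot n$ after invoking the side condition $M\ge k\deg(G)$ coming from the later application. Since the lemma itself makes no reference to $\deg(G)$, this is a genuine (if acknowledged) shortfall. The covering analysis via BFS-ordering the terminals in $C$ and peeling them off one by one is sound; the place where $\deg(G)$ creeps in is exactly your bound $|E(C^{(\ell)},C\setminus C^{(\ell)})|\le (j-\ell)\deg(G)$, which has no degree-free substitute in your setup.

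The paper avoids the iterative two-phase construction entirely with a single-shot trick. For each $s\in R$, it builds an auxiliary graph $H$ by adding a new sink $t$ and $M{+}1$ parallel edges from every vertex of $R\setminus\{s\}$ to $t$, then enumerates all important $s$--$t$ cuts in $H$ of size at most $p:=(k-1)(M+1)+M=O(kM)$. The point is that the parallel edges encode the constraint $|C\cap R|\le k$ directly into the edge budget: any $C$ with $s\in C$, $|C\cap R|\le k$, and $|\partial_G C|\le M$ has $|\partial_H C|\le p$, so some important cut $X\supseteq C$ with $|\partial_H X|\le p$ is enumerated; conversely no enumerated $X$ can contain more than $k$ terminals since that would force $|\partial_H X|\ge k(M+1)>p$. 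This gives the clean $4^{O(kM)}$ bound per $s$ with no dependence on $\deg(G)$ and no iteration over terminal subsets.

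In short: your route works but is heavier machinery (iterated important-cut enumeration plus a BFS-ordering argument) and yields a weaker bound; the paper's route is a one-line gadget reduction that directly hits the stated bound. If you want to match the lemma exactly, adopt the parallel-edges-to-a-sink encoding instead of growing $R'$ one terminal at a time.
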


Note that we will set $M:=2k\deg(G)/\eps$ when applying Lemma~\ref{lem:SmallCover} later on. The main ingredient in the proof of Lemma~\ref{lem:SmallCover} is the concept of \textit{important cuts} in a graph, along with a result bounding the number of important cuts of a bounded size.

\begin{defn}[Important cut]
For vertices $s,t\in V(G)$, an $s$--$t$ cut is a subset $X\subseteq V(G)$ of vertices such that $s\in X$ and $t\notin X$. An \textit{important $s$--$t$ cut} is an $s$--$t$ cut $X\subseteq V(G)$ with the following two additional properties:
\begin{enumerate}
\item The induced graph $G[X]$ is connected.
\item There is no $s$--$t$ cut $X'\subseteq V(G)$ such that $|\partial X'|\le|\partial X|$ and $X\subseteq X'$.
\end{enumerate}
\end{defn}

\begin{theorem}[\cite{cygan2015parameterized}, Theorem 8.11]\label{thm:ImportantCuts}
For fixed vertices $s,t\in V(G)$ and integer $p\ge0$, there are at most $4^p$ important $s$--$t$ cuts of size at most $p$. Moreover, all of these can be enumerated in time $O(4^p\cdot n^{O(1)})$.
\end{theorem}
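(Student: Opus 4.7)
Plan: I will construct $\mathcal{C}$ by iterating over each $r_0 \in R$ (contributing the factor $n$) and, for each $r_0$, building a depth-$k$ branching tree that enumerates $4^{O(kM)}$ candidate covers. Each tree node is labeled by a subset $R' \subseteq R$ with $r_0 \in R'$ and $|R'| \leq k$. At node $R'$, I form the auxiliary graph $G^{R'}$ by identifying $R'$ into a source $s$ and $R \setminus R'$ into a sink $t$, and apply Theorem~\ref{thm:ImportantCuts} to list all important $s$--$t$ cuts $X$ of size at most $M' := M + k\deg(G)$ in $G^{R'}$. Each such $X$, with $s$ re-expanded into $R'$, is added to $\mathcal{C}$; and for every boundary edge of $X$ ending at $t$---which corresponds to some specific vertex $u \in R \setminus R'$ in $G$---I create a child node labeled $R' \cup \{u\}$.

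For covering, I will fix a target $C$ with $r_0 \in C$, $\ell := |C \cap R| \leq k$, and $|\partial C| \leq M$, and argue inductively that the branching tree contains a chain $\{r_0\} = R'_1 \subsetneq \cdots \subsetneq R'_\ell$ with $R'_i \subseteq C \cap R$. For the step from $R'_i$ to $R'_{i+1}$, let $\tilde C^0$ denote the connected component of $G^{R'_i}[\{s\} \cup (C \setminus R)]$ containing $s$. This is a connected $s$--$t$ cut whose boundary in $G^{R'_i}$ consists of edges in $\partial C$ together with edges from $\tilde C^0$ to the vertices of $(C \cap R) \setminus R'_i$ (now identified with $t$), so $|\partial \tilde C^0| \leq M + (\ell - i)\deg(G) \leq M'$. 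The standard domination property of important cuts then gives an important $s$--$t$ cut $X^* \supseteq \tilde C^0$ of size $\leq M'$, which my algorithm enumerates. A shortest $R'_i$-to-$((C\cap R)\setminus R'_i)$ path in $G[C]$ will expose a vertex $v \in \tilde C^0 \subseteq X^*$ adjacent in $G$ to some $u \in (C \cap R) \setminus R'_i$; this $u$ becomes a child label, so $R'_{i+1} := R'_i \cup \{u\}$ lies in the tree. At the terminal node $R'_\ell = C \cap R$, the target $C$ is itself a valid $s$--$t$ cut of size $\leq M$ in $G^{R'_\ell}$, so a dominating important cut is enumerated and covers $C$.

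For the size bound, Theorem~\ref{thm:ImportantCuts} yields at most $4^{M'}$ important cuts per node, and each such cut has at most $M'$ boundary edges and hence at most $M'$ children, so the branching tree of depth $k$ has at most $(4^{M'}\cdot M')^k$ nodes. Multiplying by $4^{M'}$ cuts added per node and summing over the $|R| \leq n$ choices of $r_0$ yields $|\mathcal{C}| \leq n\cdot 4^{O(kM')}\cdot (M')^{O(k)}$. Absorbing the polynomial factor $(M')^{O(k)} = 2^{O(k\log M')}$ into the exponent and using $M' = O(M)$ in the intended regime $M \geq k\deg(G)$ (which holds in the subsequent application to \kSES where $M = 2k\deg(G)/\eps$) reduces this to $4^{O(kM)}\cdot n$. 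Condition~(1) holds automatically because every cut in $\mathcal{C}$ satisfies $C' \cap R = R'$ for its node label $R'$, so $1 \leq |C' \cap R| \leq k$.

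The key obstacle is the mismatch at intermediate branching nodes: when $R'_i \subsetneq C \cap R$, the target $C$ is not itself a small $s$--$t$ cut in $G^{R'_i}$, because the vertices of $(C \cap R) \setminus R'_i$ lie inside $C$ but are identified with $t$, spuriously inflating the boundary by up to $k\deg(G)$. Overcoming this requires both (i) enlarging the important-cut enumeration threshold from $M$ to $M + k\deg(G)$, and (ii) a connectivity argument on $G[C]$ guaranteeing that some missing $R$-vertex of $C$ will always be exposed as a boundary endpoint of an enumerated important cut, so the branching process can reliably extend $R'_i$ one step closer to $C \cap R$.
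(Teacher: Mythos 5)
Your proposal does not address the statement it was supposed to prove. The statement is the important-cut theorem itself: the bound of $4^p$ on the number of important $s$--$t$ cuts of size at most $p$, together with their enumeration in time $O(4^p\cdot n^{O(1)})$. In the paper this is not proved at all but quoted as a black box from the textbook of Cygan et al.\ (their Theorem 8.11). What you construct instead is the cover family $\mathcal{C}$ of Lemma~\ref{lem:SmallCover}, and your construction explicitly invokes Theorem~\ref{thm:ImportantCuts} at every node of your branching tree (``apply Theorem~\ref{thm:ImportantCuts} to list all important $s$--$t$ cuts of size at most $M'$''). Read as a proof of Theorem~\ref{thm:ImportantCuts}, the argument is therefore circular, and nothing in it actually bounds the number of important cuts or explains how to enumerate them.

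A genuine proof of the assigned statement needs a different set of ingredients: posimodularity/submodularity of the boundary function $X\mapsto|\partial X|$, the existence and uniqueness of the minimum $s$--$t$ cut pushed maximally towards $t$, the fact that every important cut of size at most $p$ contains that maximal minimum cut, and a two-way branching over an edge of its boundary in which the measure $2p-\lambda$ (where $\lambda$ is the current minimum cut value) strictly decreases, yielding the $4^p$ bound and the enumeration algorithm. None of this appears in your write-up. As a side remark, even viewed as an alternative proof of Lemma~\ref{lem:SmallCover}, your depth-$k$ branching with the enlarged threshold $M+k\deg(G)$ is more complicated than the paper's route, which adds a single sink $t$ joined by $M+1$ parallel edges to every vertex of $R\setminus\{s\}$ and calls the important-cut enumeration once per $s\in R$ with $p=(k-1)(M+1)+M$, so that the ``missing $R$-vertices inside $C$'' you worry about are accounted for directly in the boundary bookkeeping rather than through an inductive chain of subsets; but that comparison is moot, since the task was to prove the important-cut theorem itself.
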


Using this theorem, we now prove Lemma~\ref{lem:SmallCover}.

\begin{proof}[Proof (Lemma~\ref{lem:SmallCover})]
Fix a vertex $s\in R$. Our goal is to establish a set $\mathcal C_s\subseteq 2^V$ of size $4^{O(kM)}$ such that (1) every subset $C\in\mathcal C_s$ satisfies $1\le|C\cap R|\le k$, and (2) for every connected component $C\subseteq V$ satisfying $s\in C$, $1\le|C\cap R|\le k$, and $|\partial C|\le M$, we have $C\subseteq C'$ for some $C'\in\mathcal C_s$. Then, we can take $\mathcal C:=\bigcup_{s\in R}\mathcal C_s$ of size $4^{O(kM)}\cdot n$, which satisfies the lemma.

Consider the following construction. Take the graph $G$, add a new vertex $t$, and for each vertex $v\in R\setminus \{s\}$, add $M+1$ parallel edges connecting $v$ and $t$; call the new graph $H$. Apply Theorem~\ref{thm:ImportantCuts} on $H,s,t$ with $p:=(k-1)(M+1)+M$, giving $4^{O(kM)}$ important $s$--$t$ cuts. Let $\mathcal C_s$ be these cuts; we now show that this set works. For every connected component $C\subseteq V$, we have \[|\partial_H C|=|\partial_GC|+|C\cap(R\setminus\{s\})|\cdot (M+1).\] If $C$ contains $s$ and satisfies $|C\cap R|\le k$ and $|\partial_G C|\le M$, then $|\partial_HC|\le M+(k-1)(M+1)\le p$. Therefore, either $C\in\mathcal C_s$, or there is an important cut $X\in\mathcal C_s$ such that $|\partial_HX|\le|\partial_HC|$ and $C\subseteq X$. In the latter case, $X$ cannot contain $\ge k$ vertices in $R\setminus\{s\}$, since that would mean $|\partial_HX|\ge|X\cap(R\setminus\{s\})|\cdot(M+1)\ge k(M+1)>p\ge |\partial_HC|$, contradicting the assumption that  $|\partial_HX|\le|\partial_HC|$. Therefore, $X$ contains $\le k$ vertices in $R$, including $s$. Finally, we have $|X\cap (R\setminus\{s\})|\ge|C\cap(R\setminus\{s\})|$, so \[|\partial_GX|=|\partial_H X|-|X\cap(R\setminus\{s\})|\cdot (M+1)\le|\partial_H C|-|C\cap(R\setminus\{s\})|\cdot (M+1)=|\partial_GC|.\] Hence, the subset $X\in\mathcal C_s$ satisfies the conditions of the lemma for $C$.
\end{proof}

We invoke Lemma~\ref{lem:SmallCover} with $M:=2k\deg(G)/\eps$ and compute the corresponding set $\mathcal C$. The last step in the algorithm is to reduce the problem to an instance of \UML.

\begin{defn}[\UML]
Given a graph $G=(V,E)$, a set of labels $L$, and cost matrix $A\in \mathbb R_+^{V\times L}$ where entry $A_{v,\ell}$ is the cost of labeling vertex $v\in V(G)$ with label $\ell\in L$, the \UML problem is to label each vertex in $V(G)$ with exactly one label in $L$ that minimizes
\[ \sum_{v\in V} A_{v,l(v)}+\sum_{(u,v)\in E} \mathbf 1_{l(u)\ne l(v)}, \]
where $l(v)$ is the label of vertex $v$ and $\mathbf 1_{l(u)\ne l(v)}$ equals $1$ if the labels of $u$ and $v$ are different, and $0$ otherwise.
\end{defn}


\begin{theorem}[\cite{KT02}]
There is a polynomial-time $2$-approximation algorithm for \UML.
\end{theorem}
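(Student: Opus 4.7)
The plan is to follow the LP-rounding approach of Kleinberg--Tardos. Formulate the natural LP relaxation by introducing $x_{v,\ell}\in[0,1]$ for each pair $(v,\ell)\in V\times L$ with the constraint $\sum_\ell x_{v,\ell}=1$, and introducing $z_{uv,\ell}\ge\tfrac12|x_{u,\ell}-x_{v,\ell}|$ for every edge--label pair, with objective
\[
\min \sum_{v,\ell} A_{v,\ell}\,x_{v,\ell}+\sum_{(u,v)\in E}\sum_\ell z_{uv,\ell}.
\]
Any integral labeling $l:V\to L$ corresponds to setting $x_{v,l(v)}=1$, which makes $\sum_\ell z_{uv,\ell}=\mathbf{1}_{l(u)\ne l(v)}$; hence the LP truly relaxes the integer objective, and it can be solved in polynomial time.

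Round the fractional solution iteratively: until every vertex is labeled, sample $\ell^*\in L$ uniformly and $\tau\in[0,1]$ uniformly, and for every still-unlabeled vertex $v$ with $x_{v,\ell^*}\ge\tau$, set $l(v):=\ell^*$. Each fixed vertex is labeled in a given round with probability exactly $1/|L|$, so the procedure terminates in $O(|L|\log n)$ rounds with high probability. Standard method-of-conditional-expectations arguments then derandomize the rounding at polynomial cost.

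The analysis has two parts. First, a symmetry argument across the round in which $v$ is first labeled shows $\Pr[l(v)=\ell]=x_{v,\ell}$ exactly, so the expected assignment cost matches the LP term $\sum_{v,\ell}A_{v,\ell}x_{v,\ell}$. Second, for an edge $(u,v)$, writing $f_\ell=x_{u,\ell}$ and $g_\ell=x_{v,\ell}$, one examines the first round in which at least one endpoint is labeled (a ``resolution'' round): per such round, the probability of splitting the two endpoints is $\tfrac{1}{|L|}\sum_\ell|f_\ell-g_\ell|$, while the total probability of a resolution is $\tfrac{1}{|L|}\sum_\ell\max(f_\ell,g_\ell)$. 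A careful accounting that also tracks what happens when only one endpoint gets labeled immediately (and relies on the marginal identity $\Pr[l(v)=\ell]=x_{v,\ell}$ to charge every ``near miss'' to the appropriate $z_{uv,\ell}$) yields $\Pr[(u,v)\text{ cut}]\le\sum_\ell|f_\ell-g_\ell|=2\sum_\ell z_{uv,\ell}$; equivalently, using $\sum_\ell\max(f_\ell,g_\ell)\ge 1$ (because $f,g$ are probability distributions) bounds the ratio-of-probabilities expression directly. Summing over edges and vertices gives expected total cost $\le 2\cdot\mathrm{LP}\le 2\cdot\OPT$.

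The main obstacle is the factor-$2$ edge bound: one must argue not just about a single round but about the entire future of an edge after only one of its endpoints has been labeled. This telescopes neatly because the marginal distribution of $l(v)$ for the remaining endpoint agrees with $x_{v,\cdot}$, so each potential separation event can be charged against a corresponding $z$ variable. The remaining steps (writing the LP, rounding, assignment-cost analysis) are essentially bookkeeping.
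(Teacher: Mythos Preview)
The paper does not prove this statement itself; it is quoted as a black-box result from Kleinberg--Tardos~\cite{KT02}, so there is no in-paper proof to compare against. Your sketch correctly reproduces the essential LP-rounding argument of~\cite{KT02}: the relaxation, the iterative threshold rounding, the exact preservation of assignment-cost marginals, and the edge-separation bound $\Pr[(u,v)\text{ cut}]\le \sum_\ell |x_{u,\ell}-x_{v,\ell}|$ via the ratio $\frac{\sum_\ell|f_\ell-g_\ell|}{\sum_\ell\max(f_\ell,g_\ell)}\le \sum_\ell|f_\ell-g_\ell|$. One small simplification: you do not actually need the ``careful accounting'' of near-misses that you describe---the pessimistic bound that declares the edge cut whenever exactly one endpoint is labeled in the first resolution round already gives the factor~$2$ directly, which is the argument in~\cite{KT02}.
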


We reduce to \UML as follows: the labels are the subsets in $\mathcal C$ along with a dummy label, called $\bot$. For each vertex $v\in V(G)$ and label $C\in\mathcal C$, the cost $A_{v,C}$ is $0$ if $v\in C$, and $\infty$ otherwise. That is, we do not allow a vertex to be labeled by a subset that does contain that vertex. For label $\bot$, the cost is $A_{v,\bot}=0$ if $v\notin R$, and $\infty$ otherwise. That is, we do not allow a vertex in $R$ to be labeled $\bot$. Observe that this \UML instance has size $4^{O(kM)}n^{O(1)}=2^{O(k^2\deg(G)/\eps)}n^{O(1)}$.

It is clear that any solution to this \UML instance is a valid solution to \kSES: the components containing vertices in $R$ are precisely the maximal connected components of the same label, and each such component must have $\le k$ vertices in $R$. Moreover, the best $\eps$-canonical solution $S$ for \kSES can be transformed into a solution for \UML with the same solution value as follows: for each connected component $C\subseteq V$ in $G \setminus S$ with a vertex in $R$, take a set $C'\in\mathcal C$ with $C\subseteq C'$ and color all vertices in $C$ with label $C'$; for connected components without a vertex in $R$, label all their vertices $\bot$. Thus, we can compute a $2$-approximation to \UML and obtain a solution within factor $2$ of the best $\eps$-canonical solution, or within factor $2(1+\eps)$ of the optimum. Of course, we can make the approximation factor $2+\eps$ by resetting $\eps\gets\eps/2$.




\subsection{\HED}
\label{sec:fed}

Our main theorem for \HED is the following. Unlike the vertex deletion, our algorithm uses $\deg(G)$ as an extra parameter, and it is an interesting open problem whether we can remove this dependence. 

\mainedge*
\begin{proof}
As for the vertex deletion version, our algorithm maintains a feasible solution and iteratively tries to improve it. 
Let $S^* \subseteq E(G)$ be an optimal solution to \HED and let $R_E \subseteq E(G)$ be the solution.
From $G$ and $R_E$, construct a graph $G'$ where we {\em subdivide} each edge $e = (u, v) \in R_E$; 
formally, create a new vertex $r_e$ and replace $(u, v)$ by $(u, r_e)$ and $(v, r_e)$. 
Let $R'_V := \{ r_e : e \in R_E \} \subseteq V(G')$. 
From $S^*$, let $S' \subseteq E(G')$ be such that 
for each edge $e = (u, v) \in S^*$, 
\begin{itemize}
\item If $e \notin R_E$, put $e$ to $S'$.
\item If $e \in R_E$, arbitrarily choose one endpoint (say $u$) and put $(u, r_e)$ to $S'$. 
\end{itemize}
By construction, $|S'| = |S^*|$ and $|R'_V| = |R_E|$. 

Note that $G' \sm S'$ can be obtained from $G \sm S^*$ by subdividing edges (when $e \in R_E \sm S^*$) and add degree one vertices (when $e = (u, v) \in R_E \cap S^*$, $G' \sm S'$ additionally has $(v, r_e)$ compared to $G \sm S^*$). 
Both operations do not increase the treewidth, so the fact that graph $G \sm S^*$ has its treewidth bounded by $t$ implies that 
$G' \sm S'$ has its treewidth bounded by $t$. 
By Lemma~\ref{lem:cut} with $\delta = t\cdot\beta$ ($\beta$ to be chosen later) guarantees that there exists 
a set $X'_V \subseteq V(G'),\, |X'_V| \le \frac{ |R'_V|}{\beta}$ so that each connected component in $(G' \sm S') \sm X'_V$
has at most $t\cdot\beta$ vertices from $R'_V$.
Since subdividing edges does not increase the maximum degree, $\deg(G') \leq \deg(G)$. 
Let $X'_E \subseteq E(G')$ be the set of edges incident on $X'_V$. Then $|X'_E| \le \frac{\deg(G) \cdot |R'_V|}{\beta} = \frac{\deg(G) \cdot |R_E|}{\beta}$ so that each connected component in $G' \sm (S' \cup X'_E)$ has at most $t\cdot\beta$ vertices from $R'_V$.

We launch the $(2 + \eps)$ approximation for \kSES with $k = t\cdot\beta$ on $G'$ and $R'_V$.
It returns a set $Y' \subseteq E(G')$ of size at most

$$(2 + \eps) \cdot |S' \cup X'_E| \le (2 + \eps)(\OPT + \frac{\deg(G)|R_E|}{\beta})$$
with each connected component of $G' \sm Y'$ having at most $t\cdot \beta$ vertices from $R'_V$.
Let $Y$ be $Y'$ {\em projected} back to $G$; formally, $Y := \{ e = (u, v) \in E(G) : e \in Y' \mbox{ or } (u, r_e) \in Y' \mbox{ or } (v, r_e) \in Y' \}$. 
Since $G' \sm Y'$ has at most $t \cdot \beta$ vertices from $R'_V$, 
$G \sm Y$ has at most $t \cdot \beta$ edges from $R_E$. 

Since $\hh$ is hereditary, $R_E \cup Y$ is a valid solution and we have a bound $t\cdot\beta$ on the solution size for each connected component.
We thus can solve \HED on each component $C \subseteq G \sm Y$ in time $f(n, t \cdot \beta)$.
We know that $C \cap S^*$ is a feasible solution for each $C$, so the sum of solution sizes is bounded by $|S^*| = \OPT$.
Therefore, we obtain a feasible solution of size 
$$(2 + \eps)\cdot \bigg(\OPT + \frac{\deg(G)|R_E|}{\beta}\bigg) + \OPT.$$

Let $\beta = \deg(G) / \eps$, so that the above quantity becomes
$$(3 + \eps)\OPT + \eps (2 + \eps) |R_E|.$$
This becomes better than $(1 - \eps)|R_E|$ when 
\[
|R_E| (1 - \eps (3 + \eps)) \geq (3 + \eps) \OPT \Leftrightarrow
|R_E| \geq \frac{(3 + \eps)}{(1 - \eps (3 + \eps))} \OPT = (3 + O(\eps)) \OPT.
\]

Therefore, if we repeat this iteration until there is no improvement by a factor of $(1 - \eps)$, the final solution is guaranteed to be within $(3 + O(\eps)) \OPT$. 
The running time is 
\[
\min(2^{O(k^2\deg(G)/\eps)} n^{O(1)}, n^{k+O(1)}) = 
\min(2^{O(k^2\deg(G)/\eps)} n^{O(1)}, n^{O(t \deg(G) / \eps)})
\]
for $\kSES$ with $k = t\beta = t\deg(G) / \eps$ plus
$f(t \deg(G) / \eps) n^{O(1)}$ for the final step. There can be at most $\log(n / \eps)$ iterations. 
\end{proof}

\subsection{Applications for Bounded Degree Graphs}
\label{subsec:edge_applications}
We prove the corollaries of Theorem~\ref{thm:main_edge} introduced in Section~\ref{sec:intro}.

\fed*

\begin{proof}
For \FED, we use the Polynomial Grid Minor theorem~\cite{CC16} which says that if $G$ does not have a planar graph $F$ as a minor, the treewidth of $G$ is bounded by $|V(F)|^{O(1)}$.
\FED admits a linear-time exact algorithm parameterized by the solution size~\cite{Courcelle90}  so the assumptions of Theorem~\ref{thm:main_edge} are satisfied.

Due to the result of Robertson and Seymour~\cite{graph-minors-xx}, every minor-closed class can be represented
as $\ff$-minor-free graphs for some finite family $\ff$.
If the treewidth in $\hh$ is additionally bounded, then
at least one of graphs in $\ff$ must be planar.
Therefore \HED reduces to \FED.
\end{proof}

We also present implications of Theorem~\ref{thm:main_edge} to the \NPkSd problem studied by Bansal et al.~\cite{BRU17}. For a fixed $k = O(1)$, an instance of \NPkSd is an instance of $\phi$ of \kSAT where the {\em factor graph} $H$ of $\phi$ is almost planar.
 
Formally, given a $k$-CNF formula $\phi$ with $n$ variables and $m$ clauses, 
the factor graph of $\phi$ is a bipartite graph $H = (A, B)$ where $A$ contains a vertex for every variable appearing in $\phi$, $B$ contains a vertex for every clause appearing in $\phi$, and a clause-vertex $C$ is connected to a variable-vertex $x$ if and only if $x$ belongs to $C$. 
As an instance of \NPkSd, the factor graph of $\phi$ is promised to be a planar graph with $\delta m$ additional edges for some $\delta > 0$. 

Bansal et al.~\cite{BRU17} proved that for any $\eps > 0$, there is an algorithm that achieves $(1 + O(\eps + \delta \log m \log \log m))$-approximation in time $m^{O(\log \log m)^2 / \eps}$. We prove that if the degree of the factor graph is bounded, we can obtain an improved algorithm.
Note that \kSAT with the maximum degree $O(1)$ has been actively studied and proved to be APX-hard (e.g., \textsc{$3$-SAT(5)}) for general factor graphs. 

\bru*

\begin{proof}
Recall that we treat $k$ as an absolute constant. 
Given an instance $\phi$ of \NPkSd, the factor graph $H$ of $\phi$ has $\Theta(m)$ vertices. 
Deleting $O(\delta m)$ edges from $H$ will make $H$ planar, and additionally deleting $O(\eps m)$ edges will make its treewidth bounded by $O(1/\eps)$. 

We apply Corollary~\ref{cor:fed} to delete $O((\eps + \delta)m)$ edges of $H$ to reduce its treewidth to $O(1 / \eps)$. 
Its running time is $f(\eps, \deg(H)) \cdot m^{O(1)}$. Delete all clauses that lost at least one of the incident edges. 
We deleted $O((\eps + \delta)m)$ clauses. Now that the treewidth is bounded by $O(1/\eps)$, apply an exact algorithm for \kSAT that runs in time $2^{O(1/\eps)} \cdot m^{O(1)}$~\cite{KM96}. 
\end{proof}

\subsection{Inapproximability of \kES}
\label{sec:inapprox}

\newcommand{\cT}{\mathcal{T}}
\newcommand{\cF}{\mathcal{F}}
\newcommand{\cI}{\mathcal{I}}
\newcommand{\cS}{\mathcal{S}}

We end this section by proving the inapproximability of \kES, as stated below.

\hardnessedge*

In fact, we will prove an NP-hardness of approximation for a slightly different partitioning problem, which is sometimes referred to as \emph{Partitioning into Triangles (PIT)}. In PIT, we are given a graph $G = (V, E)$ and the goal is to find the largest collection of disjoint triangles, i.e., disjoint subsets of vertices $S_1, \dots, S_k \subseteq V$ such that each $S_i$ is of size three and induces a 3-clique. We will show the following hardness of approximation for PIT:

\begin{lemma}
\label{lem:pit}
There exists $\varepsilon > 0$ such that it is NP-hard, given a graph $G = (V, E)$ with $\deg(G) = 4$, to distinguish between the following two cases, where $n$ denotes $|V|$:
\begin{itemize}
\item (Completeness) The vertex set $V$ can be partitioned into $n/3$ disjoint triangles.
\item (Soundness) Every collection of disjoint triangles has size less than $(1 - \varepsilon)n/3$.
\end{itemize}
\end{lemma}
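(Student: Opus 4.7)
The plan is to prove Lemma~\ref{lem:pit} by a gap-preserving reduction from an APX-hard problem of low arity to \textsc{Partitioning Into Triangles} on graphs of maximum degree $4$. A convenient source is bounded-occurrence MAX 3-\textsc{Dimensional Matching} (3DM-$B$), known to be APX-hard: there is a constant $\eta>0$ and a constant $B$ for which it is NP-hard, given disjoint sets $X,Y,Z$ with $|X|=|Y|=|Z|=N$ and a collection of triples $\mathcal{T}\subseteq X\times Y\times Z$ in which each element lies in at most $B$ triples, to distinguish the case of a perfect matching of size $N$ from the case where every matching has size $\le (1-\eta)N$.

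Given such an instance $(X,Y,Z,\mathcal{T})$, I would build $G$ as follows. For each triple $T=\{x,y,z\}\in\mathcal{T}$ introduce a \emph{triple triangle} on three fresh vertices $v_T^x,v_T^y,v_T^z$. For each element $u\in X\cup Y\cup Z$, let $T_1,\dots,T_k$ be the triples containing $u$ (with $k\le B$), and attach an \emph{element gadget} $H_u$ whose \emph{interfaces} are $v_{T_1}^u,\dots,v_{T_k}^u$. The gadget is required to satisfy: (i) its auxiliary vertices admit a perfect triangle packing together with \emph{exactly one} interface vertex, with any of the $k$ choices being realizable; and (ii) any packing of $H_u$ that leaves zero, or more than one, of its interfaces free loses $\Omega(1)$ vertices compared to (i). A concrete realization is a cyclic chain of $k$ ``bowtie'' pieces attached to the interfaces, with a constant number of extra vertices per piece; since $k\le B=O(1)$, the whole gadget has $O(1)$ vertices, and careful tuning of the internal edges keeps $\deg(G)=4$ throughout.

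Completeness is then immediate: a perfect 3-matching $\mathcal{M}$ selects one triple per element; for each $T\in\mathcal{M}$ we take the triple triangle, and for each element $u$ covered by $T_i\in\mathcal{M}$ we pack $H_u$ with its canonical packing that leaves $v_{T_i}^u$ free to be consumed by the triple triangle of $T_i$. The main obstacle is soundness: I must argue that any triangle packing in $G$ of size $(1-\varepsilon)n/3$ induces a 3-matching of size at least $(1-\eta)N$. The argument is a local decoding: let $\mathcal{M}$ be the set of triples whose triple triangle appears in the packing; property~(i) forces each element to contribute to at most one such triple, so $\mathcal{M}$ is a valid matching. Because each element gadget that is not packed canonically loses $\Omega(1)$ vertices by (ii), at most $O(\varepsilon n)$ elements can be mis-packed, yielding $|\mathcal{M}|\ge N-O(\varepsilon n)$. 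Since $n=|V(G)|=O(N)$ from the bounded-occurrence assumption, choosing $\varepsilon=\Theta(\eta)$ sufficiently small yields the desired dichotomy while preserving $\deg(G)=4$. The delicate step is engineering the element gadget so that (i), (ii), and the degree bound hold simultaneously; this is where most of the work goes, and standard templates from the bounded-degree triangle-packing literature can be adapted.
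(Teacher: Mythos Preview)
Your route via bounded-occurrence 3DM is different from the paper's, which instead reduces from bounded-occurrence \textsc{Max 1-in-3SAT} and simply reuses the exact reduction of van Rooij et al.\ that already produces a graph of maximum degree~$4$; the paper's contribution is then only the gap analysis on top of that known construction. So your strategy is in principle viable, but you have pushed the entire difficulty into the part you leave unwritten.

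The genuine gap is the element gadget. You need, for each element $u$ appearing in $k\le B$ triples, a graph $H_u$ whose $k$ interface vertices have degree exactly~$2$ inside $H_u$ (so that together with the two triple-triangle edges they reach degree~$4$ and no more), whose internal vertices have degree at most~$4$, and which has the ``select exactly one of $k$'' packing behaviour of your properties~(i)--(ii). This is precisely the step that separates Kann's degree-$6$ result from the degree-$4$ bound in the lemma; citing ``standard templates from the bounded-degree triangle-packing literature'' does not discharge it. The paper gets degree~$4$ only because the \textsc{1-in-3SAT} encoding needs just a \emph{binary} choice per variable, realised by an $(a,b)$-cloud whose external vertices have degree~$2$, together with a fan (one-of-three) per clause; a direct $k$-ary choice gadget of the shape you describe, with all these degree constraints, is not a standard object and must be exhibited. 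Incidentally, your statement of~(i) is inverted: from your completeness argument you want the canonical packing to cover all auxiliary vertices and all but one interface (leaving one interface free), not ``auxiliary vertices together with exactly one interface vertex''.

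There is also a smaller gap in the soundness sketch. You define $\mathcal M$ as the set of triples whose triple-triangle is selected and then assert $|\mathcal M|\ge N-O(\varepsilon n)$, but your argument only shows that after discarding triples incident to mis-packed elements $\mathcal M$ becomes a valid matching; you still owe a counting argument that few elements can be canonically packed yet have their free interface uncovered, which is what forces $|\mathcal M|$ to be close to~$N$.
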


PIT is a classic NP-complete problem (see~\cite{GJ79}), and it should be remarked that Kann~\cite{Kann91} already showed that the problem is Max SNP-hard when $\deg(G) = 6$; indeed, this already suffices for proving Theorem~\ref{thm:kes_inapprox} if we relax the degree requirement to $\deg(G) = 6$. Nevertheless, even if we want $\deg(G) = 4$, the proof is still simple, and the reduction is in fact exactly the same as that of van Rooij et al.~\cite{RNB13}, who showed the NP-hardness of (the exact version of) PIT when $\deg(G) = 4$. The authors of~\cite{RNB13} also showed that PIT becomes polynomial time solvable when $\deg(G) \leq 3$ and, hence, the degree requirement cannot be improved in Lemma~\ref{lem:pit}. Nevertheless, we are not aware of either an efficient algorithm or hardness result for the $\deg(G) = 3$ case for \kES, and we leave that as an open question.

Before we prove Lemma~\ref{lem:pit}, let first us state how it implies Theorem~\ref{thm:kes_inapprox}.

\begin{proof}[Proof of Theorem~\ref{thm:kes_inapprox}]
The reduction is trivial: we keep the input $G$ to PIT as it is, and set $k = 3$. Moreover, let $c = 1 + \varepsilon / 4$ where $\varepsilon$ is the constant from Lemma~\ref{lem:pit}.

(Completeness) Suppose that there exists a partition of $V$ into $n/3$ triangles $S_1, \dots, S_{n/3}$. There are $n$ uncut edges with respect to this partition, and hence it cuts exactly $|E| - n$ edges.

(Soundness) Suppose that every collection of disjoint triangles has size less than $(1 - \varepsilon)n/3$. Consider any partition of $V$ into disjoint subsets $T_1, \dots, T_k$, each of size at most three. Our assumption implies that less than $(1 - \varepsilon)n$ vertices are adjacent to two uncut edges. Hence, the total number of uncut edges is less than $(1 - \varepsilon)n + \varepsilon n/2 = (1 - \varepsilon/2)n$, and the number of cut edges is more than $|E| - (1 - \varepsilon/2)n$.

The ratio between the two cases is more than $$\frac{|E| - n}{|E| - (1 - \varepsilon/2)n} = 1 + \frac{\varepsilon n/2}{|E| - (1 - \varepsilon/2)n} \geq 1 + \frac{\varepsilon n/2}{2n} = c,$$ where the inequality comes from $\deg(G) \leq 4$. This concludes our proof.
\end{proof}

We now turn our attention back to the proof of Lemma~\ref{lem:pit}. As stated earlier, we exactly follow the reduction of van Rooij et al.~\cite{RNB13}. They reduce from \emph{Max 1-in-3SAT} problem, in which we are given a 3CNF formula and the goal is to find an assignment that assigns exactly one literal to be true in each clause. Since we want to prove hardness of approximation, we will need hardness of approximation of \emph{Max 1-in-3SAT}, which is well-known\footnote{The result stated in Lemma~\ref{lem:inapprox-exact3sat} is folklore, although we are not aware of it being stated in this form before. However, it is quite easy to see that it is true, as follows. First, recall that Max-3SAT is NP-hard to approximate even on bounded degree instances~\cite{Has00,Tre01}. Then, observe that we can use the reduction of Schaefer~\cite{Sch78} from 3SAT to 1-in-3SAT, which is approximation-preserving and also preserves boundedness of the degrees.} and can be stated as follows.

\begin{lemma}
\label{lem:inapprox-exact3sat}
There exists $\delta > 0$ such that it is NP-hard, given a 3CNF formula\footnote{For the purpose of our proof, each clause in a 3CNF formula contains exactly three literals.} such that each variable appears in at most $d = O(1)$ clauses, to distinguish between the following two cases:
\begin{itemize}
\item (Completeness) There is an assignment such that exactly one literal in each clause is true.
\item (Soundness) Every assignment satisfies less than $(1 - \delta)$ fraction of clauses.
\end{itemize}
\end{lemma}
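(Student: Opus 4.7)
The plan is to prove Lemma~\ref{lem:inapprox-exact3sat} via an approximation-preserving gadget reduction from bounded-degree Max-3SAT. Concretely, I will combine the Hastad--Trevisan APX-hardness of Max-3SAT on bounded-degree instances with a (suitable version of) Schaefer's reduction from 3SAT to 1-in-3SAT, and argue that this reduction is an L-reduction preserving the boundedness of the variable-degree.

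First I would recall the starting hardness: there exist constants $d_0=O(1)$ and $\delta_0>0$ such that it is NP-hard, given a 3CNF formula $\phi$ in which each variable appears in at most $d_0$ clauses, to distinguish satisfiable $\phi$ from $\phi$ on which every assignment satisfies less than a $(1-\delta_0)$ fraction of clauses. Next, I would describe a clause-local gadget $G_C$: for each 3SAT clause $C=(\ell_1\vee\ell_2\vee\ell_3)$, introduce a constant number of fresh auxiliary variables $a_C^{(1)},\dots,a_C^{(r)}$ (used only in the gadget for $C$) and a constant number of 1-in-3SAT clauses over $\{\ell_1,\ell_2,\ell_3,a_C^{(1)},\dots,a_C^{(r)}\}$, engineered so that (i) if $C$ is satisfied by a $\{0,1\}$-assignment to $\ell_1,\ell_2,\ell_3$, then the auxiliaries can be extended so that all 1-in-3 clauses of $G_C$ are satisfied, and (ii) if $C$ is not satisfied (i.e.\ $\ell_1=\ell_2=\ell_3=0$), then no extension of the auxiliaries satisfies all 1-in-3 clauses of $G_C$, but some extension violates only $O(1)$ of them. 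Existence of such a gadget is exactly what Schaefer's reduction provides; I would exhibit one explicit choice and verify (i)--(ii) by a case analysis over the $2^3$ patterns of $(\ell_1,\ell_2,\ell_3)$.

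I would then assemble the reduction $\phi\mapsto\phi'$ by taking the disjoint union of the gadgets $G_C$ over all clauses $C$ of $\phi$, sharing only the original literal variables. The degree bookkeeping is then easy: each original variable $x$ appears in the same number of clauses of $\phi$ as before, and in each gadget it contributes to only a constant number of 1-in-3 clauses, so the degree of $x$ in $\phi'$ is $O(d_0)=O(1)$; each auxiliary variable appears only in its own gadget, so it has degree $O(1)$ as well. This gives $\phi'$ bounded variable-degree $d=O(1)$, as required.

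Finally, I would verify the L-reduction parameters. For completeness, a 1-in-3 satisfying assignment for $\phi'$ exists whenever $\phi$ is 3SAT-satisfiable, by property (i). For soundness, given any assignment to $\phi'$ violating at most $k$ of its 1-in-3 clauses, restrict to the original variables: each 3SAT clause $C$ that this restriction fails to satisfy must, by (ii), cost at least one violated 1-in-3 clause in $G_C$, so the number of unsatisfied clauses of $\phi$ is at most $k$; conversely, for any assignment $\sigma$ to $\phi$ that satisfies $s$ clauses, extending the auxiliaries optimally gadget-by-gadget yields an assignment to $\phi'$ that, by (ii), violates at most $O(m-s)$ of its 1-in-3 clauses, where $m$ is the number of clauses of $\phi$. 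Since the number of 1-in-3 clauses in $\phi'$ is at most a constant times $m$, an $(1-\delta)$-approximation for $\phi'$ with sufficiently small $\delta=\Theta(\delta_0)$ translates back to a $(1-\delta_0)$-approximation for $\phi$, contradicting the Hastad--Trevisan hardness. This yields the claimed constant $\delta>0$.

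The main obstacle is writing down an explicit clause-gadget satisfying property (ii) in its stronger ``local soundness'' form (missing exactly a constant number of 1-in-3 clauses, not merely ``at least one''); without this, we only recover NP-hardness of \emph{exact} 1-in-3SAT rather than APX-hardness. Once an acceptable gadget is chosen, the degree bookkeeping and the L-reduction calculation are routine.
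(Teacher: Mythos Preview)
Your proposal is correct and follows precisely the approach the paper takes: the paper treats this lemma as folklore and, in a footnote, sketches exactly the two-step argument you describe---start from the H{\aa}stad--Trevisan APX-hardness of bounded-degree Max-3SAT, then push it through Schaefer's 3SAT-to-1-in-3SAT reduction, noting that the reduction is approximation-preserving and keeps the variable degree bounded. Your write-up simply supplies the details (the clause-local gadget, the degree bookkeeping, and the L-reduction calculation) that the paper leaves implicit.
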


We will also need the gadgets from~\cite{RNB13}, which can be summarized as follows. Since this is exactly the same as those used in~\cite{RNB13}, we do not provide full constructions of them here; we refer the readers to Lemma 8 of~\cite{RNB13} for more details. Illustrations of the gadgets can be founded in Figure~\ref{fig:gadgets}, which is reconstructed (with slight modifications) from Figure 5 of~\cite{RNB13}.

\begin{figure}
\begin{subfigure}[c]{.40\textwidth}
\centering
\begin{tikzpicture}[scale=0.8]
\node[draw,thick] (I1) at (-1, 0) [circle] {$I_1$};
\node[draw,thick] (I2) at (1, 0) [circle] {$I_2$};
\node[draw,thick] (O1) at (0, 2) [circle] {$O_1$};
\node[draw,thick] (O2) at (-1.5, -2) [circle] {$O_2$};
\node[draw,thick] (O3) at (1.5, -2) [circle] {$O_3$};
\draw[thick] (I1) -- (I2);
\draw[thick] (I1) -- (O1);
\draw[thick] (I1) -- (O2);
\draw[thick] (I1) -- (O3);
\draw[thick] (I2) -- (O1);
\draw[thick] (I2) -- (O2);
\draw[thick] (I2) -- (O3);
\end{tikzpicture}
\subcaption{}
\label{subfig:fan}
\end{subfigure}
\begin{subfigure}[c]{.40\textwidth}
\centering
\begin{tikzpicture}[scale=0.8]
\node[draw,thick] (I1) at (-1, 0) [circle] {$I$};
\node[draw,thick] (I2) at (1, 0) [circle] {$I$};
\node[draw,thick] (F) at (0, 2) [circle] {$F$};
\node[draw,thick] (T1) at (-3, 0) [circle] {$T$};
\node[draw,thick] (T2) at (3, 0) [circle] {$T$};
\node[draw,thick] (T3) at (-2, -2) [circle] {$T$};
\node[draw,thick] (T4) at (2, -2) [circle] {$T$};
\draw[thick,dashed] (I1) -- (I2);
\draw[thick,dashed] (I1) -- (F);
\draw[thick,dashed] (I2) -- (F);
\draw[thick] (I1) -- (T1);
\draw[thick] (I1) -- (T3);
\draw[thick] (T1) -- (T3);
\draw[thick] (I2) -- (T2);
\draw[thick] (I2) -- (T4);
\draw[thick] (T2) -- (T4);
\end{tikzpicture}
\subcaption{}
\label{subfig:cloud}
\end{subfigure}
\caption{Illustration of a fan and a cloud. A fan is depicted in Figure~\ref{subfig:fan}. A (4, 1)-cloud is depicted in Figure~\ref{subfig:cloud}; here the true vertices are marked by ``T'', the false vertex by ``F'' and the inner vertices by ``I''. The dashed triangle corresponds to the true collection (as defined in Definition~\ref{def:cloud}), whereas the remaining two triangles correspond to the false collection.}
\label{fig:gadgets}
\end{figure}
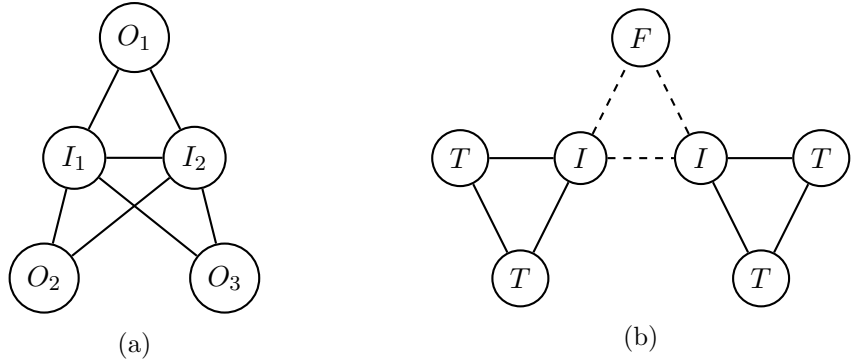

\begin{defn}
A \emph{fan} is a graph of five vertices $O_1, O_2, O_3, I_1, I_2$ and seven edges: $\{I_1, I_2\}$ and $\{I_i, O_j\}$ for all $i \in [2]$ and $j \in [3]$. In other words, it is a union of three triangles having one edge $\{I_1, I_2\}$ in common. We call $O_1, O_2, O_3$ \emph{outer vertices} and $I_1, I_2$ \emph{inner vertices} of the fan. 
\end{defn}

\begin{defn} \label{def:cloud}
For $a, b \in \mathbb{N}$, an \emph{$(a, b)$-cloud} is a graph of $2(a + b) - 3$ vertices that satisfies the following properties:
\begin{itemize}
\item The vertices can be divided into three groups: $a$ \emph{true vertices}, $b$ \emph{false vertices} and $(a + b) - 3$ \emph{inner vertices}.
\item Each true vertex and each false vertex has degree two.
\item There are only two different collections of disjoint triangles that contain all inner vertices. In one collection, every false vertex is included but none of the true vertices are included; we call this collection the \emph{true} collection. In the other collection, every true vertex is included but none of the false vertices are included; we call this collection the \emph{false} collection
\end{itemize}
\end{defn}

\begin{lemma}[\cite{RNB13}]
For every $a, b \in \mathbb{N}$ such that $a \equiv b \mod 3$, an $(a, b)$-cloud exists.
\end{lemma}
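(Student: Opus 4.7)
The plan is to prove existence by strong induction on $a+b$, using a handful of explicit small constructions as base cases together with a vertex-identification operation that glues two smaller clouds into a larger one. For the base cases I would exhibit $(3,0)$ and $(0,3)$ as single triangles on three true (respectively false) vertices, $(2,2)$ as a ``bowtie'' of two triangles sharing one inner vertex, and $(4,1)$ (together with its mirror $(1,4)$) as the gadget drawn in Figure~\ref{fig:gadgets}. In each of these, the degree-$2$ constraint on outer vertices together with the small number of available triangles makes it routine to verify that exactly two collections cover every inner vertex, and that they are correctly labeled as true and false collections.

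The inductive step is a \emph{merging operation}. Given an $(a_1,b_1)$-cloud $C_1$ with a distinguished true vertex $T$ and an $(a_2,b_2)$-cloud $C_2$ with a distinguished false vertex $F$, let $\cT_i,\cF_i$ denote the true and false collections of $C_i$, and form $C = (C_1 \sqcup C_2)/(T \sim F)$ by identifying $T$ with $F$ into a new vertex $J$, declared inner. Vertex counts, the congruence $a\equiv b \pmod 3$, and the degree conditions are immediate: outer vertices retain degree $2$, while $J$ has degree $4$ and is allowed to because it is now inner. For the two-collections property, no new triangles are created by the identification (there are no edges crossing between $C_1$ and $C_2$ other than through $J$), so any collection of disjoint triangles in $C$ covering all inner vertices of $C$ splits into subcollections in $C_1$ and $C_2$ each still covering every original inner vertex; by the two-collections property of each $C_i$, the two subcollections must be $\cT_i$ or $\cF_i$. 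Among the four combinations, $\cT_1 \cup \cF_2$ covers no copy of $J$, and $\cF_1 \cup \cT_2$ puts overlapping triangles on $J$, so only $\cT_1 \cup \cT_2$ and $\cF_1 \cup \cF_2$ survive. A quick count of true and false vertices of $C$ shows these are the new true and false collections of an $(a_1+a_2-1,\,b_1+b_2-1)$-cloud.

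With this merging lemma in hand, the induction is routine. For $(a,b)$ with $a+b\ge 6$ I would peel off a copy of $(4,1)$ when $a\ge 4$ (decomposing as $(a-3,b)+(4,1)$), peel off $(1,4)$ when $b\ge 4$ and $a \le 3$, handle $(a,0)$ with $a\ge 6$ by $(a-3,0)+(4,1)$ and $(0,b)$ with $b\ge 6$ by $(1,b-2)+(0,3)$, and settle the lone remaining case $(3,3)=(2,2)+(2,2)$ directly. In every decomposition the two summands satisfy the required congruence, are strictly smaller, and supply the true vertex ($a_1\ge 1$) and false vertex ($b_2\ge 1$) demanded by the merging lemma. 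The main obstacle I anticipate is the uniqueness half of the two-collections property under merging, i.e., ruling out any ``hybrid'' collection that mixes triangles of $\cT_1$ with triangles of $\cF_2$ (or vice versa); the argument sketched above localizes this obstruction at $J$, which keeps the case analysis small and uses only the two-collections property of $C_1$ and $C_2$.
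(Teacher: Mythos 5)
The paper never proves this lemma: it is imported verbatim from \cite{RNB13} (their Lemma~8), and the authors explicitly decline to reproduce the construction, so there is no in-paper argument to compare against. Your inductive gluing proof is therefore a genuinely self-contained alternative, and after checking it I believe it is correct. The base cases $(3,0)$, $(0,3)$, $(2,2)$, $(4,1)$, $(1,4)$ verify directly (for $(3,0)$ and $(0,3)$ the two collections are the empty collection and the single triangle, which is consistent with the definition), and the merge step is sound: since $J$ is the only vertex shared by $C_1$ and $C_2$ and no crossing edges exist, every triangle of the merged graph is a triangle of $C_1$ or of $C_2$, so any disjoint collection covering all inner vertices restricts to collections covering all inner vertices of $C_1$ and of $C_2$, which by induction are $\mathcal{T}_i$ or $\mathcal{F}_i$; the requirement that $J$ (now inner) be covered by exactly one triangle kills the two mixed combinations, while $\mathcal{T}_1\cup\mathcal{T}_2$ and $\mathcal{F}_1\cup\mathcal{F}_2$ are disjoint, cover $J$, are distinct, and carry the correct labels, and the vertex counts and congruence give an $(a_1+a_2-1,b_1+b_2-1)$-cloud. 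The peeling schedule also checks out: each decomposition preserves the congruence, keeps both summands of total size at least three, supplies a true vertex in $C_1$ and a false vertex in $C_2$, and never recurses to the infeasible pair $(1,1)$ because $(1,4)$ is a base case. Two minor remarks: the lemma as literally stated would also cover degenerate pairs such as $(1,1)$ or $(0,0)$, for which no cloud can exist since the definition forces $a+b-3\ge 0$ inner vertices, so any proof (yours included) implicitly assumes $a+b\ge 3$, which is all the reduction uses; and your construction additionally keeps every inner vertex at degree at most four, a property the lemma does not demand but which the degree-four claim in Lemma~\ref{lem:pit} silently needs from the cloud gadgets --- making that explicit is a small extra benefit of your route over the bare citation in the paper.
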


We are now ready to prove Lemma~\ref{lem:pit}.

\begin{proof}[Proof of Lemma~\ref{lem:pit}]
As stated earlier, we follow the reduction of van Rooij et al.~\cite{RNB13} from Max 1-in-3SAT to PIT, although we will have to be slightly more careful in the analysis, as we want to not only prove hardness for the exact version but also the approximate version of the problem.

Van Rooij et al.'s reduction can be described as follows:
\begin{itemize}
\item First, notice that we can assume without loss of generality that the number of occurrences of each literal is divisible by three; this can be easily ensure by duplicating all the clauses twice.
\item For each variable $x$, let $a(x)$ be the number of occurrences of the literal $x$ and $b(x)$ be the number of occurrences of the literal $\neg x$. We create an $(a(x), b(x))$-cloud for each variable $x$.
\item We create a fan for each clause $C$. For each literal in the clause, we identify one outer vertex of the fan to a vertex corresponding to that literal in the cloud of the variable. Note that, for each variable $x$, since there are $a(x)$ and $b(x)$ vertices corresponding to $x$ and $\neg x$ respectively, the identification can be done in such a way that each literal vertex is identified with exactly one vertex from a clause cloud, which also ensures that the graph has maximum degree four.
\end{itemize}
Finally, let $\varepsilon = \delta/(8d + 8)$. Moreover, let us use $N$ and $M$ to denote the number of variables and the number of clauses of the 3CNF formula respectively, and $n$ to denote the number of vertices of the resulting graph. Notice that, from the bounded degree assumption, we have $M \leq d N / 3$. Moreover, from the sizes of each gadgets, we have $n \leq 2M + \sum_{x} 2(a(x) + b(x)) = 8M$.

(Completeness) Suppose that there exists an assignment $\phi$ such that each clause contains exactly one true literal. Then, we can define our partition as follows. For each variable $x$, we pick the true or false collection for the $x$-cloud based on the value $\phi(x)$. For each clause, we pick the triangle with two inner vertices and the outer vertex corresponding to the true literal. It is clear that this is indeed a partition of vertices into disjoint triangle as desired.

(Soundness) We will show the contrapositive; suppose that there exists $k \geq (1 - \varepsilon)n/3$ disjoint triangles $S_1, \dots, S_k$. We call a variable $x$ \emph{good} if the triangles restricted to only those entirely contained in the $x$-cloud is either the true collection or the false collection. Notice that, if each inner vertex of $x$-cloud is in at least one of the selected triangles, then $x$-cloud must be good, since the inner vertices of the $x$-cloud are not adjacent to any vertices outside of the cloud. However, there are at most $\varepsilon n$ vertices outside of the disjoint triangles, meaning that at most $\varepsilon n \leq 8\varepsilon M$ variables are not good.

Next, we call a clause $C$ \emph{good} if (1) the three variables whose literals are in $C$ are good and (2) the inner vertices of $C$-fan are in at least one of the selected triangles. Notice that there are at most $d(8\varepsilon M) = 8 \varepsilon d M$ clauses that violate (1). More, again, since there are at most $\varepsilon n \leq 8\varepsilon M$ vertices outside of the union of the triangles, at most $8\varepsilon M$ clauses violate (2). Hence, all but at most $8 \varepsilon (d + 1) M \leq \delta M$ clauses are good.

We will define an assignment $\phi$ as follows. For each good $x$, we define $\phi(x)$ to be true if the triangles correspond to the true collection of the $x$-cloud, and we let $\phi(x)$ be false otherwise. For the remaining $x$'s, we assign $\phi(x)$ arbitrarily. It is easy to see that $\phi$ satisfies all the good clauses; this is simply because exactly one literal in each good clause $C$, the one whose triangle with the two inner vertices in the $C$-fan is selected, is set to true. Hence, $\phi$ satisfies all but $\delta M$ clauses, which concludes our proof.
\end{proof}

{\small
\bibliographystyle{alpha}
\bibliography{refs}

\newcommand{\etalchar}[1]{$^{#1}$}
\begin{thebibliography}{BGDSD{\etalchar{+}}13}

\bibitem[ACMM05]{ACMM05}
Amit Agarwal, Moses Charikar, Konstantin Makarychev, and Yury Makarychev.
\newblock {$O(\sqrt{\log n})$} approximation algorithms for min uncut, min 2cnf
  deletion, and directed cut problems.
\newblock In {\em Proceedings of the thirty-seventh annual ACM symposium on
  Theory of computing}, pages 573--581. ACM, 2005.

\bibitem[ALM{\etalchar{+}}17]{ALMSZ17b}
Akanksha Agrawal, Daniel Lokshtanov, Pranabendu Misra, Saket Saurabh, and
  Meirav Zehavi.
\newblock Feedback vertex set inspired kernel for chordal vertex deletion.
\newblock In {\em Proceedings of the Twenty-Eighth Annual ACM-SIAM Symposium on
  Discrete Algorithms}, pages 1383--1398. Society for Industrial and Applied
  Mathematics, 2017.

\bibitem[ALM{\etalchar{+}}18]{ALMSZ17}
Akanksha Agrawal, Daniel Lokshtanov, Pranabendu Misra, Saket Saurabh, and
  Meirav Zehavi.
\newblock Polylogarithmic approximation algorithms for
  weighted-$\mathcal{F}$-deletion problems.
\newblock {\em APPROX '18}, 2018.
\newblock To appear.

\bibitem[AMM17]{AMM17}
Haris Angelidakis, Yury Makarychev, and Pasin Manurangsi.
\newblock An improved integrality gap for the c{\u{a}}linescu-karloff-rabani
  relaxation for multiway cut.
\newblock In {\em International Conference on Integer Programming and
  Combinatorial Optimization}, pages 39--50. Springer, 2017.

\bibitem[BBF99]{BBF99}
Vineet Bafna, Piotr Berman, and Toshihiro Fujito.
\newblock A 2-approximation algorithm for the undirected feedback vertex set
  problem.
\newblock {\em SIAM Journal on Discrete Mathematics}, 12(3):289--297, 1999.

\bibitem[BBKM16]{BBKM16}
{\'E}douard Bonnet, Nick Brettell, O-joung Kwon, and D{\'a}niel Marx.
\newblock Parameterized vertex deletion problems for hereditary graph classes
  with a block property.
\newblock In {\em International Workshop on Graph-Theoretic Concepts in
  Computer Science}, pages 233--244. Springer, 2016.

\bibitem[BCKP16]{BCKP16b}
Anudhyan Boral, Marek Cygan, Tomasz Kociumaka, and Marcin Pilipczuk.
\newblock A fast branching algorithm for cluster vertex deletion.
\newblock {\em Theory of Computing Systems}, 58(2):357--376, 2016.

\bibitem[BCKP18]{BCKP16}
Ivan Bliznets, Marek Cygan, Pawe{\l} Komosa, and Micha{\l} Pilipczuk.
\newblock Hardness of approximation for h-free edge modification problems.
\newblock {\em ACM Transactions on Computation Theory (TOCT)}, 10(2):9, 2018.

\bibitem[BG96]{BG96}
Ann Becker and Dan Geiger.
\newblock Optimization of pearl's method of conditioning and greedy-like
  approximation algorithms for the vertex feedback set problem.
\newblock {\em Artificial Intelligence}, 83(1):167--188, 1996.

\bibitem[BGDSD{\etalchar{+}}13]{treewidth-apx}
Hans Bodlaender, Pal Gronas~Drange, Markus S.~Dregi, Fedor V.~Fomin, Daniel
  Lokshtanov, and Michal Pilipczuk.
\newblock A {$O(c^k n)$} 5-approximation algorithm for treewidth.
\newblock 45, 04 2013.

\bibitem[BK91]{BK91}
Hans~L. Bodlaender and Ton Kloks.
\newblock Better algorithms for the pathwidth and treewidth of graphs.
\newblock In Javier~Leach Albert, Burkhard Monien, and Mario~Rodr{\'i}guez
  Artalejo, editors, {\em Automata, Languages and Programming}, pages 544--555,
  Berlin, Heidelberg, 1991. Springer Berlin Heidelberg.

\bibitem[Bod97]{bodlaender97}
Hans~L. Bodlaender.
\newblock Treewidth: Algorithmic techniques and results.
\newblock In Igor Pr{\'i}vara and Peter Ru{\v{z}}i{\v{c}}ka, editors, {\em
  Mathematical Foundations of Computer Science 1997}, pages 19--36, Berlin,
  Heidelberg, 1997. Springer Berlin Heidelberg.

\bibitem[BRU17]{BRU17}
Nikhil Bansal, Daniel Reichman, and Seeun~William Umboh.
\newblock Lp-based robust algorithms for noisy minor-free and bounded treewidth
  graphs.
\newblock In {\em Proceedings of the Twenty-Eighth Annual ACM-SIAM Symposium on
  Discrete Algorithms}, pages 1964--1979. Society for Industrial and Applied
  Mathematics, 2017.

\bibitem[BSW17]{buchbinder2017simplex}
Niv Buchbinder, Roy Schwartz, and Baruch Weizman.
\newblock Simplex transformations and the multiway cut problem.
\newblock In {\em Proceedings of the Twenty-Eighth Annual ACM-SIAM Symposium on
  Discrete Algorithms}, pages 2400--2410. Society for Industrial and Applied
  Mathematics, 2017.

\bibitem[CC16]{CC16}
Chandra Chekuri and Julia Chuzhoy.
\newblock Polynomial bounds for the grid-minor theorem.
\newblock {\em Journal of the ACM (JACM)}, 63(5):40, 2016.

\bibitem[CFK{\etalchar{+}}15]{cygan2015parameterized}
Marek Cygan, Fedor~V Fomin, {\L}ukasz Kowalik, Daniel Lokshtanov, D{\'a}niel
  Marx, Marcin Pilipczuk, Micha{\l} Pilipczuk, and Saket Saurabh.
\newblock {\em Parameterized algorithms}, volume~3.
\newblock Springer, 2015.

\bibitem[CGHW98]{CGHW98}
Fabi{\'a}N~A Chudak, Michel~X Goemans, Dorit~S Hochbaum, and David~P
  Williamson.
\newblock A primal--dual interpretation of two 2-approximation algorithms for
  the feedback vertex set problem in undirected graphs.
\newblock {\em Operations Research Letters}, 22(4-5):111--118, 1998.

\bibitem[CLL09]{chen2009improved}
Jianer Chen, Yang Liu, and Songjian Lu.
\newblock An improved parameterized algorithm for the minimum node multiway cut
  problem.
\newblock {\em Algorithmica}, 55(1):1--13, 2009.

\bibitem[CM15]{CM15}
Yixin Cao and D{\'a}niel Marx.
\newblock Interval deletion is fixed-parameter tractable.
\newblock {\em ACM Transactions on Algorithms (TALG)}, 11(3):21, 2015.

\bibitem[Cou90]{Courcelle90}
Bruno Courcelle.
\newblock The monadic second-order logic of graphs. i. recognizable sets of
  finite graphs.
\newblock {\em Information and computation}, 85(1):12--75, 1990.

\bibitem[DDvH16]{DDvH16}
P{\aa}l~Gr{\o}n{\aa}s Drange, Markus Dregi, and Pim van’t Hof.
\newblock On the computational complexity of vertex integrity and component
  order connectivity.
\newblock {\em Algorithmica}, 76(4):1181--1202, 2016.

\bibitem[DFHT05]{DFHT05}
Erik~D Demaine, Fedor~V Fomin, Mohammadtaghi Hajiaghayi, and Dimitrios~M
  Thilikos.
\newblock Subexponential parameterized algorithms on bounded-genus graphs and
  h-minor-free graphs.
\newblock {\em Journal of the ACM (JACM)}, 52(6):866--893, 2005.

\bibitem[DH05]{DH05}
Erik~D Demaine and MohammadTaghi Hajiaghayi.
\newblock Bidimensionality: new connections between fpt algorithms and ptass.
\newblock In {\em Proceedings of the sixteenth annual ACM-SIAM symposium on
  Discrete algorithms}, pages 590--601. Society for Industrial and Applied
  Mathematics, 2005.

\bibitem[DH07]{bidimensionality}
Erik~D Demaine and MohammadTaghi Hajiaghayi.
\newblock The bidimensionality theory and its algorithmic applications.
\newblock {\em The Computer Journal}, 51(3):292--302, 2007.

\bibitem[EGK16]{EGK16}
Eduard Eiben, Robert Ganian, and O-joung Kwon.
\newblock {A Single-Exponential Fixed-Parameter Algorithm for
  Distance-Hereditary Vertex Deletion}.
\newblock In Piotr Faliszewski, Anca Muscholl, and Rolf Niedermeier, editors,
  {\em 41st International Symposium on Mathematical Foundations of Computer
  Science (MFCS 2016)}, volume~58 of {\em Leibniz International Proceedings in
  Informatics (LIPIcs)}, pages 34:1--34:14, Dagstuhl, Germany, 2016. Schloss
  Dagstuhl--Leibniz-Zentrum fuer Informatik.

\bibitem[ENRS00]{ENRS00}
G.~Even, J.~Naor, S.~Rao, and B.~Schieber.
\newblock Divide-and-conquer approximation algorithms via spreading metrics.
\newblock {\em Journal of the ACM}, 47(4):585--616, 2000.

\bibitem[FHL08]{FHL08}
Uriel Feige, MohammadTaghi Hajiaghayi, and James~R Lee.
\newblock Improved approximation algorithms for minimum weight vertex
  separators.
\newblock {\em SIAM Journal on Computing}, 38(2):629--657, 2008.

\bibitem[FJP10]{FJP10}
Samuel Fiorini, Gwena{\"e}l Joret, and Ugo Pietropaoli.
\newblock Hitting diamonds and growing cacti.
\newblock In {\em International Conference on Integer Programming and
  Combinatorial Optimization}, pages 191--204. Springer, 2010.

\bibitem[FK05]{FK05}
Uriel Feige and Shimon Kogan.
\newblock The hardness of approximating hereditary properties.
\newblock Technical Report, 2005.

\bibitem[FLMS12]{FLMS12}
Fedor~V. Fomin, Daniel Lokshtanov, Neeldhara Misra, and Saket Saurabh.
\newblock Planar f-deletion: Approximation, kernelization and optimal fpt
  algorithms.
\newblock In {\em Proceedings of the 2012 IEEE 53rd Annual Symposium on
  Foundations of Computer Science}, FOCS '12, pages 470--479, Washington, DC,
  USA, 2012. IEEE Computer Society.

\bibitem[FLRS11]{FLRS11}
Fedor~V. Fomin, Daniel Lokshtanov, Venkatesh Raman, and Saket Saurabh.
\newblock Bidimensionality and eptas.
\newblock In {\em Proceedings of the Twenty-second Annual ACM-SIAM Symposium on
  Discrete Algorithms}, SODA '11, pages 748--759, Philadelphia, PA, USA, 2011.
  Society for Industrial and Applied Mathematics.

\bibitem[GHO{\etalchar{+}}13]{treedepth-apx}
Jakub Gajarsk{\'y}, Petr Hlin{\v{e}}n{\'y}, Jan Obdr{\v{z}}{\'a}lek, Sebastian
  Ordyniak, Felix Reidl, Peter Rossmanith, Fernando S{\'a}nchez~Villaamil, and
  Somnath Sikdar.
\newblock Kernelization using structural parameters on sparse graph classes.
\newblock In Hans~L. Bodlaender and Giuseppe~F. Italiano, editors, {\em
  Algorithms -- ESA 2013}, pages 529--540, Berlin, Heidelberg, 2013. Springer
  Berlin Heidelberg.

\bibitem[GJ79]{GJ79}
M.~R. Garey and David~S. Johnson.
\newblock {\em Computers and Intractability: {A} Guide to the Theory of
  NP-Completeness}.
\newblock W. H. Freeman, 1979.

\bibitem[GJLS17]{GJLS17}
Archontia~C Giannopoulou, Bart~MP Jansen, Daniel Lokshtanov, and Saket Saurabh.
\newblock Uniform kernelization complexity of hitting forbidden minors.
\newblock {\em ACM Transactions on Algorithms (TALG)}, 13(3):35, 2017.

\bibitem[GL15]{GL15}
Venkatesan Guruswami and Euiwoong Lee.
\newblock Inapproximability of {H}-transversal/packing.
\newblock {\em Approximation, Randomization, and Combinatorial Optimization.
  Algorithms and Techniques}, page 284, 2015.

\bibitem[GLSS16]{GLSS16}
Archontia~C Giannopoulou, Daniel Lokshtanov, Saket Saurabh, and Ondrej Suchy.
\newblock Tree deletion set has a polynomial kernel but no opt$^{O(1)}$
  approximation.
\newblock {\em SIAM Journal on Discrete Mathematics}, 30(3):1371--1384, 2016.

\bibitem[GPR{\etalchar{+}}17]{GPRTW17}
Archontia~C. Giannopoulou, Michal Pilipczuk, Jean-Florent Raymond, Dimitrios~M.
  Thilikos, and Marcin Wrochna.
\newblock {Linear Kernels for Edge Deletion Problems to Immersion-Closed Graph
  Classes}.
\newblock In Ioannis Chatzigiannakis, Piotr Indyk, Fabian Kuhn, and Anca
  Muscholl, editors, {\em 44th International Colloquium on Automata, Languages,
  and Programming (ICALP 2017)}, volume~80 of {\em Leibniz International
  Proceedings in Informatics (LIPIcs)}, pages 57:1--57:15, Dagstuhl, Germany,
  2017. Schloss Dagstuhl--Leibniz-Zentrum fuer Informatik.

\bibitem[GVY04]{garg2004multiway}
Naveen Garg, Vijay~V Vazirani, and Mihalis Yannakakis.
\newblock Multiway cuts in node weighted graphs.
\newblock {\em Journal of Algorithms}, 50(1):49--61, 2004.

\bibitem[H{\aa}s00]{Has00}
Johan H{\aa}stad.
\newblock On bounded occurrence constraint satisfaction.
\newblock {\em Inf. Process. Lett.}, 74(1-2):1--6, 2000.

\bibitem[HVHJ{\etalchar{+}}11]{HVJKV11}
Pinar Heggernes, Pim Van’t~Hof, Bart~MP Jansen, Stefan Kratsch, and Yngve
  Villanger.
\newblock Parameterized complexity of vertex deletion into perfect graph
  classes.
\newblock In {\em International Symposium on Fundamentals of Computation
  Theory}, pages 240--251. Springer, 2011.

\bibitem[JP17]{JP17}
Bart~MP Jansen and Marcin Pilipczuk.
\newblock Approximation and kernelization for chordal vertex deletion.
\newblock In {\em Proceedings of the Twenty-Eighth Annual ACM-SIAM Symposium on
  Discrete Algorithms}, pages 1399--1418. Society for Industrial and Applied
  Mathematics, 2017.

\bibitem[Kan91]{Kann91}
Viggo Kann.
\newblock Maximum bounded 3-dimensional matching is {MAX} {SNP}-complete.
\newblock {\em Inf. Process. Lett.}, 37(1):27--35, 1991.

\bibitem[KK18]{KK18}
Eun~Jung Kim and O-joung Kwon.
\newblock Erd{\H{o}}s-{P}{\'o}sa property of chordless cycles and its
  applications.
\newblock In {\em Proceedings of the Twenty-Ninth Annual ACM-SIAM Symposium on
  Discrete Algorithms}, pages 1665--1684. SIAM, 2018.

\bibitem[KLP{\etalchar{+}}16]{KLPRRSS16}
Eun~Jung Kim, Alexander Langer, Christophe Paul, Felix Reidl, Peter Rossmanith,
  Ignasi Sau, and Somnath Sikdar.
\newblock Linear kernels and single-exponential algorithms via protrusion
  decompositions.
\newblock {\em ACM Transactions on Algorithms (TALG)}, 12(2):21, 2016.

\bibitem[KM96]{KM96}
Sanjeev Khanna and Rajeev Motwani.
\newblock Towards a syntactic characterization of {PTAS}.
\newblock In {\em Proceedings of the twenty-eighth annual ACM symposium on
  Theory of computing}, pages 329--337. ACM, 1996.

\bibitem[KS17]{KS17}
Ken-ichi Kawarabayashi and Anastasios Sidiropoulos.
\newblock Polylogarithmic approximation for minimum planarization (almost).
\newblock In {\em Proceedings of the 2017 IEEE 53rd Annual Symposium on
  Foundations of Computer Science}, FOCS '17, pages 779--788, 2017.

\bibitem[KT02]{KT02}
Jon Kleinberg and Eva Tardos.
\newblock Approximation algorithms for classification problems with pairwise
  relationships: Metric labeling and markov random fields.
\newblock {\em Journal of the ACM (JACM)}, 49(5):616--639, 2002.

\bibitem[Lee18]{Lee18}
Euiwoong Lee.
\newblock Partitioning a graph into small pieces with applications to path
  transversal.
\newblock {\em Mathematical Programming}, 2018.
\newblock Preliminary version in SODA 2017.

\bibitem[LY93]{LY93}
Carsten Lund and Mihalis Yannakakis.
\newblock The approximation of maximum subgraph problems.
\newblock In {\em International Colloquium on Automata, Languages, and
  Programming}, pages 40--51. Springer, 1993.

\bibitem[Mar06]{marx2006parameterized}
D{\'a}niel Marx.
\newblock Parameterized graph separation problems.
\newblock {\em Theoretical Computer Science}, 351(3):394--406, 2006.

\bibitem[MR14]{MR14}
D{\'a}niel Marx and Igor Razgon.
\newblock Fixed-parameter tractability of multicut parameterized by the size of
  the cutset.
\newblock {\em SIAM Journal on Computing}, 43(2):355--388, 2014.

\bibitem[RRVS14]{treedepth-dynamic}
Felix Reidl, Peter Rossmanith, Fernando~S{\'a}nchez Villaamil, and Somnath
  Sikdar.
\newblock A faster parameterized algorithm for treedepth.
\newblock In Javier Esparza, Pierre Fraigniaud, Thore Husfeldt, and Elias
  Koutsoupias, editors, {\em Automata, Languages, and Programming}, pages
  931--942, Berlin, Heidelberg, 2014. Springer Berlin Heidelberg.

\bibitem[RS86]{RS86}
Neil Robertson and Paul~D Seymour.
\newblock Graph minors. v. excluding a planar graph.
\newblock {\em Journal of Combinatorial Theory, Series B}, 41(1):92--114, 1986.

\bibitem[RS04]{graph-minors-xx}
Neil Robertson and Paul~D Seymour.
\newblock Graph minors. xx. {W}agner's conjecture.
\newblock {\em Journal of Combinatorial Theory, Series B}, 92(2):325--357,
  2004.

\bibitem[RST12]{RST12}
Prasad Raghavendra, David Steurer, and Madhur Tulsiani.
\newblock Reductions between expansion problems.
\newblock In {\em Computational Complexity (CCC), 2012 IEEE 27th Annual
  Conference on}, pages 64--73. IEEE, 2012.

\bibitem[Sch78]{Sch78}
Thomas~J. Schaefer.
\newblock The complexity of satisfiability problems.
\newblock In {\em Proceedings of the 10th Annual {ACM} Symposium on Theory of
  Computing, May 1-3, 1978, San Diego, California, {USA}}, pages 216--226,
  1978.

\bibitem[SV95]{saran1995finding}
Huzur Saran and Vijay~V Vazirani.
\newblock Finding k cuts within twice the optimal.
\newblock {\em SIAM Journal on Computing}, 24(1):101--108, 1995.

\bibitem[Tre01]{Tre01}
Luca Trevisan.
\newblock Non-approximability results for optimization problems on bounded
  degree instances.
\newblock In {\em Proceedings on 33rd Annual {ACM} Symposium on Theory of
  Computing, July 6-8, 2001, Heraklion, Crete, Greece}, pages 453--461, 2001.

\bibitem[vRvKNB13]{RNB13}
Johan M.~M. van Rooij, Marcel~E. van Kooten~Niekerk, and Hans~L. Bodlaender.
\newblock Partition into triangles on bounded degree graphs.
\newblock {\em Theory Comput. Syst.}, 52(4):687--718, 2013.

\bibitem[Zuc06]{zuckerman2006linear}
David Zuckerman.
\newblock Linear degree extractors and the inapproximability of max clique and
  chromatic number.
\newblock In {\em Proceedings of the thirty-eighth annual ACM symposium on
  Theory of computing}, pages 681--690. ACM, 2006.

\end{thebibliography}
}

\appendix
\section{Details of parameterized width-reduction algorithms}
\label{sec:twd-deletion}
In this section we give a more detailed proof sketch of Lemma~\ref{lem:treewidth-exact}.
Indeed, we show a more general claim that the algorithms from~\cite{BK91} and~\cite{treedepth-dynamic} can be extended to finding a
tree-decomposition (or path / treedepth-decomposition) of small width $k$ given a decomposition of a larger width $t$ can be extended to handle vertex deletion ($t = O(k + p)$ in Lemma~\ref{lem:treewidth-exact}).
The reasoning presented here is a proof sketch and we focus on explaining why the arguments from previous works remain valid in the  extended versions of the algorithms.
Since the main claim of~\cite{BK91} is just a linear running time for fixed $k$ and $t$, we need to explicitly bound the number of states in the dynamic programming routines.

\subsection{Original routines for finding small tree or path decompositions}

\begin{defn}
For an integer sequence $a$, its typical sequence $\tau(a)$ is obtained by iterating the following operations until none is possible anymore:
\begin{OneLiners}
\item[1.] removal of repetitions of consecutive elements,
\item[2.] removal of a subsequence $a_{i+1}, \ldots a_{j-1}$ satisfying
$\forall_{i<k<j} \, a_i \le a_k \le a_j$ or \mbox{$\forall_{i<k<j} \, a_j \le a_k \le a_i$}. 
\end{OneLiners}
\end{defn}

The sequence $\tau(a)$ is uniquely defined~\cite[Lemma~3.2]{BK91}.

\begin{lemma}[Lemma 3.3 and 3.5 in \cite{BK91}]
There are $O(4^k)$ typical sequences of integers in $[0,k]$.
The length of each one is at most $2k+1$.
\end{lemma}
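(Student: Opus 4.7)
The plan has two parts: bounding the length of a typical sequence and counting the total number.

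First, I would extract the structural characterization of typical sequences. Operation $1$ prohibits adjacent repetitions, and operation $2$ applied to each triple $(i-1, i, i+1)$ forces every internal $b_i$ to be a strict local extremum. More generally, the non-applicability of operation $2$ to any pair $(i, j)$ with $j \geq i + 2$ says that some intermediate $b_\ell$ lies strictly outside $[\min(b_i, b_j), \max(b_i, b_j)]$. A useful corollary is that every contiguous sub-sequence of a typical sequence is itself typical, since both reduction operations are inherited.

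For the length bound $m \leq 2k + 1$, my plan is to track the running envelope $[m_i, M_i]$ defined as $[\min_{j \leq i} b_j, \max_{j \leq i} b_j]$. Call a transition $i \to i+1$ an \emph{expansion} if $b_{i+1} \notin [m_i, M_i]$; since the final envelope has width at most $k$, there are at most $k$ expansions. Between consecutive expansions the sequence lies inside a fixed envelope and is itself typical, and the non-nesting condition forces its contribution to shrink each time the envelope is re-entered. An alternative approach is induction on $k$, splitting at the leftmost and rightmost positions of the maximum value; there the naive prefix-plus-middle-plus-suffix estimate is too loose, so one must couple expansion events on either side to the appearance of the global maximum in between. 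Either way, a careful accounting delivers $m \leq 2k+1$.

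For the count bound $O(4^k)$, once the length bound is in place I would encode each typical sequence by its ``sign shape'' (the binary string recording the sign of each difference $b_{i+1} - b_i$) together with the starting value. The shape has length at most $2k$, giving at most $2^{2k} = 4^k$ possibilities; the non-nesting constraint then pins down the values to within a constant number of choices per shape, yielding the $O(4^k)$ bound.

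I expect the length bound to be the main obstacle: strict alternation alone permits arbitrarily long sequences, so the non-nesting condition must be exploited in a genuinely global way. I would follow the case analysis of Bodlaender and Kloks to complete this step.
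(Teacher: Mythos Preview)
The paper gives no proof of this lemma; it simply records it as Lemmas~3.3 and~3.5 of Bodlaender--Kloks and uses it as a black box, so there is no in-paper argument to compare your sketch against.

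On its own merits, your counting step has a real gap. You encode a typical sequence by its sign pattern (at most $2^{2k}$ choices) together with its starting value, and then assert that ``the non-nesting constraint pins down the values to within a constant number of choices per shape.'' This fails already for the shortest non-trivial shape: with sign pattern $(+,-)$ and start $0$, every triple $(0,a_2,a_3)$ with $0\le a_3<a_2\le k$ is typical (the only non-adjacent pair is $(1,3)$, and the condition there merely asks that $a_2\notin[0,a_3]$, i.e.\ $a_2>a_3$), so this single encoding already accounts for $\binom{k+1}{2}=\Theta(k^2)$ distinct typical sequences. The sign pattern discards the jump magnitudes entirely, and that is where the entropy sits; no ``non-nesting'' afterthought recovers it. The Bodlaender--Kloks count is obtained from a finer structural constraint on typical sequences that directly limits how the $k+1$ available values can be reused along the sequence, not from the up/down skeleton alone.

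Your length-bound plan is closer in spirit---the running-envelope idea is a reasonable entry point---but as you yourself concede, neither the envelope accounting nor the split-and-induct variant is actually carried through, and you end by deferring to ``the case analysis of Bodlaender and Kloks.'' That is fine as a citation, which is exactly what the present paper does, but it leaves the proposal short of an independent proof.
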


\begin{defn}
For integer sequences $a,b$ we write $a \prec b$ if one can extend them to sequences $a',b'$ of equal length by adding consecutive repetitions, such that $a' \le b'$ on each index.
\end{defn}

The relation $\prec$ is transitive, and $a \prec b$ holds if and only if $\tau(a) \prec \tau(b)$~\cite[Lemma~3.7 and~3.10]{BK91}.

All algorithms in question work on a given tree decomposition of width at most $t$.
We can assume that this is a \emph{nice tree decomposition}, i.e., a binary tree $T$, in which every node $x$ is assigned a bag $B_x \subseteq V(G)$ and belongs to one of the following types: 
\begin{OneLiners}
\item[1.] \emph{start:} the node is a leaf of $T$ with only one vertex in $B_x$,
\item[2.] \emph{join:} the node has two children $y,z$ satisfying $B_x = B_y = B_z$,
\item[3.] \emph{introduce:} the node has one child $y$ and $B_x$ is formed by adding one vertex to $B_y$,
\item[4.] \emph{forget:} the node has one child $y$ and $B_x$ is formed by removing one vertex from $B_y$.
\end{OneLiners}

We define $T_x$ be the subtree of $T$ rooted at $x$, and $G_x$ to be a subgraph of $G$ induced by vertices introduced in $T_x$.
A partial tree (path) decomposition is a decomposition of a  subgraph $H$ of $G_x$ of width at most $k$.

\begin{defn}
For a partial path decomposition $Y = (Y_1, Y_2, \dots, Y_m)$ we define its \emph{restriction} with respect to the set $B_x$
to be $Z = (Y_1 \cap B_x, Y_2 \cap B_x, \dots, Y_m \cap B_x)$.
Let $1=t_1 < t_2 < \dots < t_q$ be all indices for which $Z_{t_i - 1} \ne Z_{t_i}$ and let sequence $a^i$ indicate $(|Y_{t_i}|, |Y_{t_i+1}|, \dots |Y_{t_{i+1}-1}|)$.
The \emph{characteristic} of $Y$
is given by the sequence $(Z_{t_i})_{i=1}^q$ called the \emph{interval model}, and the list of sequences
$(\tau(a^i))_{i=1}^q$.
\end{defn}


The number of possible interval models in a node  is $2^{O(t\log t)}$ and the maximal length of such a model is $2t+3$ (Lemma 3.1 in~\cite{BK91}).
Therefore the number of all possible characteristics in a node is bounded by $2^{O(t\log t)} \cdot O(4^{kt}) = 2^{O(kt + t\log t)}$ (Lemma 4.1 in~\cite{BK91}).

\begin{defn}
For a partial tree decomposition $Y$ we define its \emph{restriction} $Z$
with respect to the set $B_x$ again by intersecting each bag with $B_x$.
A leaf of a restriction is called \emph{maximal} if its bag is not contained in a bag of any other node.
The \emph{trunk} of such a decomposition is obtained by iteratively removing leaves that are not maximal and then replacing all nodes of degree 2 with edges.
Each edge $e$ in a trunk induces a partial path decomposition of some subgraph of $G_x$---let $Z^e$ denote its interval model and $a^e$ the associated list of typical sequences.
The \emph{characteristic} of $Y$
is given by the trunk of $Z$, a family of interval models for each edge in the trunk, and a family of lists of typical sequences for each edge in the trunk.
\end{defn}

Since the number of leaves in a trunk is at most $t$, the number of its nodes is $O(t)$ and the number of such trees is $2^{O(t\log t)}$.
Since we need to store as many as $t$ typical sequences for each edge of the trunk, the number of all possible characteristics in a node $x$ is $2^{O(t^2k)}$.

For two characteristics of partial path decomposition we say the one majorizes another, written as $((Z_i), (a^i)) \prec ((Z'_i), (b^i))$, if $Z_i = Z'_i$ for all $i$ and $a^i \prec b^i$ for all $i$.
The same notion applies to partial tree decomposition when the trunks are the same and majorization occurs on each edge of the trunk.

In the original algorithm one maintains a \emph{full set} of characteristics for each node $x$ describing the minimal interface between $G_x$ and the rest of the graph.
We can only store characteristics that admit a respective partial tree (path) decomposition.
Moreover if there is a partial decomposition of $G_x$ with characteristic $\mathcal{C}$,
then the full set must contain a characteristic that is being majorized by~$\mathcal{C}$.
This ensures that if there is a partial decomposition that can be extended to a full decomposition, then the interface contains its characteristic or a characteristic of another extendable partial decomposition.

\subsection{Extension to vertex deletion}

Consider the given nice tree decomposition $T$ as defined in the previous subsection.
We can assume (by adding extra \emph{forget} nodes) that the bag in the root is empty.
We build a directed acyclic graph $T'$ with nodes given by triples $(x,X,\ell)$, where $x$ is a node from $T$, $X$ is a subset of $B_x$, and integer $\ell$ indicates how many vertices we have deleted in $G_x$.

Since $T$ is nice, each graph vertex $v \in V$ has one tree node that {\em forgets} it and possibly many nodes that {\em introduce} it. 
Note that the forget node for $v$ is an ancestor of all its introduction nodes, and $v$ only appears in the subtree of $T$ rooted at its forget node. 
Formally, $(x, X, \ell)$ stores the dynamic programming state (i.e., a set of characteristics) to compute a tree decomposition of width $k$ as in the previous subsection, 
where the bag containing $x$ becomes $X$ instead of $B_x$ while at most $\ell$ vertices are deleted among vertices in $G_x$ whose forget node is in the subtree rooted at $x$.
Deleted vertices will be counted at its forget node. 

For a \emph{start} node $x$ we just add $(x,B_x,0)$ to $T'$.
If $x$ is a \emph{join} node with children $y,z$ we add an edge from $(x,X,\ell)$ to $(y,X,\ell)$ and $(z,X,\ell)$ for all $X \subseteq B_x,\, 0 \le \ell \le n$.
For a node $x$ that introduces a vertex $v$ into its child $y$'s bag, we create  nodes $(x,X,\ell)$ connected to $(y,X \setminus \{v\} ,\ell)$ for all $v \in X \subseteq B_x,\, 0 \le \ell \le n$, as well as \emph{dummy} nodes $(x,X \setminus \{v\} ,\ell)$ connected to $(y,X \setminus \{v\} ,\ell)$.
Dummy nodes represent an introduction operation that has been canceled.
Finally for a \emph{forget} node that removes a vertex $v$ from its child
$y$, we have an edge from $(x,X,\ell)$ to $(y,X \cup
\{v\},\ell)$ for all $X \subseteq B_x,\, 0 \le \ell \le n$, and to
$(y,X ,\ell - 1)$ for all $X \subseteq B_x,\, 1 \le \ell \le n$.
This represents branching into scenarios where $v$ will or will not be deleted.

The number of nodes in $T'$ is bounded by $V(T)\cdot 2^t \cdot n$. 
The following claim can be easily verified from the construction.
\begin{claim}
Fix a node $x \in T$ and consider a directed subtree $T'_x$ of our DAG such that:
\begin{OneLiners}
\item[1.] for each $y \in T_x$, there is exactly one $y' := (y, B'_y, \ell'_y) \in T'_x$,
\item[2.] for each $y$ and its child $z$ in $T_x$, there is an edge from $y'$ to $z'$ in $T'_x$. 
\end{OneLiners}
Then $T'_x$ is a nice tree decomposition of $G_x \setminus S$, where $S$ is the union of $B_x \setminus B'_x$ and some subset of size $\leq \ell$ in $G_x$. 
For each vertex $v \in V(G)$, $v$ is in $T'$ if and only if all $y$ with $v \in B_y$ satisfies $v \in B_{y'}$. 
\end{claim}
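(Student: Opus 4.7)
The plan is to verify by structural induction on $T_x$ that $T'_x$ satisfies the three axioms of a (nice) tree decomposition on the vertex set $V(G_x) \sm S$, where $S$ is defined naturally as the set of $v \in V(G)$ with $v \in B_y$ for some $y \in T_x$ yet $v \notin B'_y$ for every such $y$. I would then show that $S$ decomposes as claimed: its intersection with $B_x$ is contained in $B_x \sm B'_x$, and its restriction to $V(G_x) \sm B_x$ has size at most $\ell$, as recorded by the deletion counter maintained along the DAG edges. The first verification step is routine from the construction: for every $y \in T_x$, the outgoing DAG edges force $B'_y$ to relate to the children's bags in precisely the way the corresponding node type (start, join, introduce, forget) prescribes, up to two allowed branches at introduce and forget nodes that encode whether the introduced/forgotten vertex is present or deleted.

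The heart of the argument, and the step I expect to be the main obstacle, is showing that for each $v \in V(G)$ either $v$ appears in $B'_y$ for \emph{every} $y \in T_x$ with $v \in B_y$, or for \emph{none} of them. The set of nodes of $T$ whose bag contains $v$ is a connected subtree by the axiom for the original decomposition. Walking this subtree in $T'_x$, the only DAG edges that can change the membership of $v$ between a parent and its child are the introduce and forget edges \emph{for $v$ itself}: at any other node type, the child's bag differs from the parent's only by some vertex $u \ne v$, so $v$'s membership is preserved. A direct case analysis then shows that the construction is designed so that once the dummy branch is taken at $v$'s introduce node, or the delete branch at $v$'s forget node, the vertex $v$ is absent from $B'$ throughout the remainder of the subtree; in the complementary choice, $v$ remains in $B'$ throughout. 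This consistency simultaneously yields the subtree-connectedness axiom for each $v \in V(G_x) \sm S$ and the vertex-membership characterization stated at the end of the claim.

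Coverage of the vertices in $V(G_x) \sm S$ is immediate from the definition of $S$, and coverage of edges of $G_x \sm S$ follows because every edge $(u,w)$ lies in some common original bag $B_y$ and, since neither endpoint is in $S$, both remain in $B'_y$. Finally, to bound $|S \cap V(G_x)|$ by $\ell$, I would charge each such deleted vertex $v$ to the unique forget-via-deletion edge at $v$'s forget node, which lies inside $T_x$ because $v \in V(G_x)$. Since $\ell'$ drops by exactly one along each such edge and never falls below zero along any root-to-leaf path of $T'_x$, we obtain $|S \cap V(G_x)| \le \ell'_x = \ell$, completing the proof.
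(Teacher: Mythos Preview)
The paper does not actually prove this claim; it just asserts that it ``can be easily verified from the construction.'' Your plan---verifying the three tree-decomposition axioms from the edge rules and establishing the ``all or none'' membership property for each vertex by walking the connected $v$-subtree---is exactly the natural way to flesh this out, and that part of the argument is sound.

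The genuine gap is in your final charging step for the $\ell$ bound. You charge each deleted vertex whose forget node lies in $T_x$ to its unique delete-branch edge, then argue that since $\ell'$ drops by one on each such edge and stays nonnegative along any root-to-leaf path, the total number of deletions is at most $\ell'_x$. But this only bounds the number of delete edges on a \emph{single} root-to-leaf path. At a join node the construction copies the \emph{same} $\ell$ to both children, so delete edges in the two disjoint branches below the join draw on the same budget independently. Concretely, if $x'$ has $\ell'_x=1$, is followed by a join, and each branch below contains one delete edge before reaching $\ell'=0$ at its start leaves, you have two deleted vertices but $\ell'_x=1$; your path inequality does not rule this out. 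This appears to be an imprecision in the paper's stated join rule rather than a flaw in your strategy: the later invariant $|V(H)|+\ell=|V(G_x)|$ would require the children's $\ell$-values to be \emph{summed} (with a correction for $|B_x\setminus X|$), not copied. With the join rule amended in that way, your charging argument becomes a straightforward telescoping sum over the edges of $T'_x$ and goes through.
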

Given this observation, we can fill out the dynamic programming tables for the DAG as we did for $T$. 
We can run the original routine on $T'$ that handles \emph{introduce} and \emph{join} nodes as before, only with set $B_x$ replaced by $X$.
In a dummy node we just copy results from the child.
For a \emph{forget} node $(x,X ,\ell)$, we compute the characteristics coming from branch $(y ,X \cup \{v\},\ell)$ as in the original routine and then add all characteristics copied from node $(y,X ,\ell - 1)$.

\begin{theorem}
\TVD and \PVD parameterized by $k$ and the width $t$ of the given tree decomposition admit exact algorithm with running times respectively
$2^{O(t^2k)}n$ and $2^{O(tk+t\log t)}n$.
\end{theorem}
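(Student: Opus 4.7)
The plan is to combine the DAG $T'$ constructed in the preceding paragraphs with the classical dynamic programming of Bodlaender--Kloks~\cite{BK91}, lifting that algorithm from the nice tree decomposition $T$ to the larger DAG so that the vertex-deletion choices are made at the forget nodes. Concretely, I would process the nodes of $T'$ in reverse topological order and maintain, at every DAG node $(x,X,\ell)$, the \emph{full set} of characteristics of partial tree (resp.\ path) decompositions of width at most $k$ of induced subgraphs of $G_x$ whose current interface bag equals $X$ and which arise from deleting at most $\ell$ vertices among those of $G_x$ already forgotten at or below $x$.

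The start, introduce, and join transitions would be taken verbatim from~\cite{BK91}, but with the bag $B_x$ replaced by the subset $X$ carried by the current DAG node; the soundness of these rules is unaffected by the substitution, since the original arguments only used that $X$ is a bag containing the interface between $G_x$ and the rest of the graph. The dummy transition would simply copy the full set from the unique child, encoding the decision to abort the introduction of a vertex $v$ that is to be deleted. The forget transition at $(x,X,\ell)$ is where the branching takes place: its full set is the union of the sets propagated from $(y,X\cup\{v\},\ell)$ and from $(y,X,\ell-1)$, corresponding respectively to retaining and forgetting $v$ or to deleting $v$ outright. The claim in the appendix guarantees that every nice tree decomposition of $G_x\setminus S$ for some $S$ with $|S|\le \ell$ is witnessed by exactly one consistent subtree of $T'_x$, so every realizable characteristic at $(x,X,\ell)$ is accounted for in one of the two branches, which is precisely the invariant needed to carry through the correctness proof of~\cite{BK91}.

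To solve \TVD and \PVD I would then read off the smallest $\ell$ for which the empty-bag characteristic appears in the full set at $(r,\emptyset,\ell)$, where $r$ is the root; this is the optimum deletion size, and a witness $S$ plus a decomposition of $G\setminus S$ can be reconstructed by the standard backtracking on the DP table. To avoid paying an unwanted factor $n$ for the $\ell$-coordinate, I would collapse $\ell$ into the table value: for each pair $(x,X)$ and each characteristic $\mathcal{C}$, store only the smallest $\ell$ achieving $\mathcal{C}$; the forget-transition then becomes a pointwise $\min$ of the two branches (shifted by one in the delete branch). Combined with the counts $|V(T)|=O(n)$, the $2^t$ choices of $X\subseteq B_x$, and the $2^{O(t^2k)}$ (resp.\ $2^{O(tk+t\log t)}$) bound on the number of characteristics per node from~\cite{BK91}, the total running time comes out to $2^{O(t^2k)}\cdot n$ for treewidth and $2^{O(tk+t\log t)}\cdot n$ for pathwidth, since the $2^t$ factor is absorbed into the dominant exponential.

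The main obstacle is justifying that the forget-node update preserves the full-set property in the presence of deletion: namely, that for every partial decomposition $Y$ of $G_x\setminus S$ respecting the interface $X$ and deletion budget $\ell$, the characteristic of $Y$ is majorized by some characteristic produced by the union of the two branches. This reduces to the structural observation that the ``last action'' on $v$ in $Y$ is either a forget in the decomposition (captured by the $(y,X\cup\{v\},\ell)$ branch) or a deletion (captured by the $(y,X,\ell-1)$ branch), after which the majorization transfer is immediate from the transitivity of $\prec$ and the corresponding Bodlaender--Kloks lemmas on typical sequences.
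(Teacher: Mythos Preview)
Your proposal is correct and matches the paper's own proof essentially step for step: run the Bodlaender--Kloks routines on the DAG $T'$ with $B_x$ replaced by $X$, copy through dummy nodes, take the union of the two branches at forget nodes, and then collapse the $\ell$-coordinate by storing only the minimum feasible $\ell$ per characteristic to recover linear time. The paper's sketch is terser but records the same invariant and the same running-time calculation.
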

\begin{proof}[Proof sketch.]
Sections 4.3, 4.4, and 4.5 in~\cite{BK91} describe how to compute the full set for partial path decompositions in nodes of type $\emph{join, forget, introduce}$ in time polynomial with respect to the size of characteristics' space.
Sections 5.3, 5.4, and 5.5 deal with the same cases for partial tree decompositions.

We can run these routines on $T'$ and compute the sum of characteristics' sets when branching in \emph{forget} nodes.
The invariant of the full set for node $(x,X,\ell)$ becomes: each characteristic is induced a partial decomposition of subgraph $H$ of $G_x$, such that $V(H) \cap B_x = X$ and $|V(H)| + \ell = |V(G_x)|$ and for each such partial decomposition the full set contains one being majorized by it.

The size of $T'$ is quadratic in $n$, so an explicit implementation of the algorithm would be burdened with a quadratic time.
However we can observe that for each characteristic for fixed $x,X$ we only need to remember the smallest $\ell$ for which it is feasible.
Thus, we can work with only $V(T)\cdot 2^t$ nodes and store the smallest feasible $\ell$ for each characteristic in a full set.
\end{proof}

\begin{theorem}
\TDVD parameterized by $k$ and the width $t$ of the given tree decomposition admits an exact algorithm with running time
$2^{O(tk)}n$ .
\end{theorem}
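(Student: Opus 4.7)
The plan is to mirror the construction used for the preceding theorem, applying the DAG $T'$ on top of the treedepth dynamic programming of~\cite{treedepth-dynamic} rather than the tree/pathwidth routine of~\cite{BK91}. Recall that for a given tree decomposition of $G$ of width $t$, the algorithm of~\cite{treedepth-dynamic} computes, at each bag $B_x$, a set of \emph{characteristics} describing all relevant ways of extending a partial treedepth decomposition of $G_x$ to the rest of the graph; the key structural fact we need is that the number of characteristics at any bag is bounded by $2^{O(tk)}$, and the transitions at \emph{introduce}, \emph{forget}, and \emph{join} nodes can be computed in time polynomial in the size of the characteristic space.

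The main step is to run this routine on the DAG $T'$ with nodes $(x, X, \ell)$ introduced in the previous theorem, where $X \subseteq B_x$ represents ``the current bag after deleting some vertices from $B_x$'' and $\ell$ counts how many vertices have been deleted within $G_x$ up to this point. At each node $(x,X,\ell)$ we maintain a set of treedepth characteristics, now defined with respect to $X$ in place of $B_x$, with the invariant that a characteristic $\mathcal{C}$ is stored at $(x,X,\ell)$ iff there is a subgraph $H \subseteq G_x$ with $V(H)\cap B_x = X$ and $|V(G_x)|-|V(H)| \le \ell$ admitting a partial treedepth decomposition of depth at most $k$ with characteristic $\mathcal{C}$, and moreover for every such $H$ and its characteristic, some characteristic of the stored set is majorized by it. Transitions at \emph{introduce} and \emph{join} nodes of $T'$ copy the corresponding transitions from~\cite{treedepth-dynamic}, dummy (canceled introduce) nodes simply copy the child's set, and at a \emph{forget} node $(x,X,\ell)$ we take the union of the characteristics computed from the ``keep $v$'' edge to $(y, X \cup \{v\}, \ell)$ and from the ``delete $v$'' edge to $(y, X, \ell-1)$.

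As in the preceding theorem, the naive size of $T'$ is $n\cdot 2^t \cdot n$, but we avoid the quadratic dependence by the same trick: for each pair $(x,X)$, we store, for each characteristic, only the smallest $\ell$ at which it becomes feasible. This reduces the number of maintained states to $n\cdot 2^t$, each holding at most $2^{O(tk)}$ characteristics together with an integer in $[0,n]$. Each transition can then be carried out in $2^{O(tk)}\cdot \poly(t,k)$ time, yielding total running time $n \cdot 2^t \cdot 2^{O(tk)} = 2^{O(tk)}\,n$.

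The main point to verify carefully will be that the ``smallest-$\ell$'' compression is compatible with the union operation at forget nodes and with the transitions at introduce/join nodes --- that is, if a characteristic $\mathcal{C}$ is realized by some $H \subseteq G_x$ with $|V(G_x)|-|V(H)| = \ell$, then combining the minimum-$\ell$ representations of the child(ren)'s characteristics does produce $\mathcal{C}$ at some $\ell' \le \ell$. This follows because majorization of characteristics is preserved by the introduce/join operations of~\cite{treedepth-dynamic} and because the forget transition just selects between the ``delete'' branch (which increases $\ell$ by exactly one) and the ``keep'' branch (which preserves $\ell$), so the minimum $\ell$ over all realizations is computed correctly by taking the minimum over the two branches. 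Once this is established, extracting the answer to \TDVD amounts to reading off the smallest $\ell$ at the root $(r,\emptyset,\ell)$ for which the trivial characteristic is feasible.
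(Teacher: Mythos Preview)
The proposal is correct and takes essentially the same approach as the paper: both apply the DAG $T'$ construction from the preceding theorem on top of the treedepth dynamic program of~\cite{treedepth-dynamic}, relying on the $2^{O(tk)}$ bound on characteristics (Lemma~15 there) and the smallest-$\ell$ compression to recover linear time. Your write-up is in fact more detailed than the paper's own sketch, which simply defers to the previous theorem and cites Lemmas~15 and~17 of~\cite{treedepth-dynamic}.
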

\begin{proof}[Proof sketch.]
We apply the same reasoning as for treewidth deletion.
The number of states necessary to remember in a single node is bounded explicitly by $2^{O(tk)}$ (Lemma 15 in~\cite{treedepth-dynamic}) and the running time is analyzed in Lemma 17.
\end{proof}

\end{document}